\renewcommand{\vec}[1]{\mathbf{#1}}
\newcommand{\bpi}{\boldsymbol{\pi}}
\newcommand{\R}{\mathbb{R}}
\newcommand{\N}{\mathbb{N}}
\newcommand{\e}{\varepsilon}
\newcommand{\cj}[1]{{#1}^\ast}
\newcommand{\qam}[1]{{#1}^\mathfrak{s}}
\newcommand{\qsum}[1]{{#1}^\mathfrak{u}}
\newcommand{\cball}[2]{\overline{\mathfrak{B}}\big(#1,#2\big)}
\newcommand{\SW}{\mathbf{SW}}
\newcommand{\algn}[1]{\bigl\langle#1\bigr\rangle}
\newcommand{\gmcl}{\Gamma_{\textrm{CL}}(\Sigma)}
\newcommand{\DM}[3]{\mathsf{EM}^{#1}(#2,#3)}
\newcommand{\DQ}[4]{\mathsf{EQ}^{#1}(#2,#3,#4)}
\newcommand{\GQ}[4]{\mathsf{GQ}^{#1}(#2,#3,#4)}
\newcommand{\GM}[4]{\mathsf{GM}^{#1}(#2,#3,#4)}
\newcommand{\LQ}[4]{\mathsf{LQ}^{#1}(#2,#3,#4)}
\newcommand{\LM}[4]{\mathsf{LM}^{#1}(#2,#3,#4)}
\newcommand{\AQ}[2]{\mathsf{AQ}^{#1}(#2)}
\newcommand{\AM}[2]{\mathsf{AM}^{#1}(#2)}
\newcommand{\prf}[1]{\hat{#1}}
\newcommand{\sQ}{Q}
\newcommand{\sD}{D}
\newcommand{\sS}{S}
\newcommand{\sH}{H}
\newcommand{\mH}{\mathbf{H}}
\newcommand{\mD}{\mathbf{D}}
\newcommand{\norm}[1]{\left\Vert#1\right\Vert}
\newcommand{\abs}[1]{\left\vert#1\right\vert}
\newtheorem{thm}{Theorem}[section]
\newtheorem{cor}[thm]{Corollary}
\newtheorem{lem}[thm]{Lemma}
\newtheorem{prop}[thm]{Proposition}
\theoremstyle{definition}
\newtheorem{defn}[thm]{Definition}
\newtheorem{exmp}[thm]{Example}
\newtheorem{rem}[thm]{Remark}
\numberwithin{thm}{section}
\newcounter{tmpthm}
\newcounter{tmpsec}
\newcounter{lemselfsimn}
\newcounter{lemselfsims}
\newcounter{thmeditdistqmn}
\newcounter{thmeditdistqms}
\newcounter{corglobdistn}
\newcounter{corglobdists}
\title{Geometric Aspects of Biological Sequence Comparison}
\author{Aleksandar Stojmirovi\'{c} and Yi-Kuo Yu\\[.5cm] \small National Center for Biotechnology Information,\\ \small National Library of Medicine, National Institutes of Health,\\ \small Bethesda, MD 20894, United States}
\begin{document}

\maketitle

\begin{abstract}
We introduce a geometric framework suitable for studying the relationships among biological sequences. In contrast to previous works, our formulation allows asymmetric distances (quasi-metrics), originating from uneven weighting of strings, which may induce non-trivial partial orders on sets of biosequences. The distances considered are more general than traditional generalized string edit distances. In particular, our framework enables non-trivial conversion between sequence similarities, both local and global, and distances. Our constructions apply to a wide class of scoring schemes and require much less restrictive gap penalties than the ones regularly used. Numerous examples are provided to illustrate the concepts introduced and their potential applications.
\end{abstract}

\section{Introduction}\label{sec:intro}

Biological macromolecules such as DNA, RNA and proteins play an essential role in all living organisms. Structurally, they are all chains of residues belonging to a small set of basic molecules and the functional characteristics of each macromolecule are determined by the order and composition of its components. It is therefore not surprising that comparison and alignment of biological sequences is one of the most important contributions from computational biology to modern biosciences.

Typical approaches to biosequence comparison are either distance- \cite{Se74,WSB76} or similarity-based \cite{NW70,SW81}. The distance-based approaches minimize the cost, while those based on similarity maximize the likelihood of transformation of one sequence into another. In both cases the comparison scores for sequences are obtained by extension from scores over alphabets of basic molecules. The algorithms for computation of alignments are based on the dynamic programming technique \cite{BHK59}. Similarity-based methods became widely accepted because the Smith-Waterman algorithm \cite{SW81} allows computation of local alignments, involving only parts of sequences to be compared. Local alignments are highly appropriate in biological context because elements of structure and function are usually restricted to discrete regions of biosequences and hence strong similarity of fragments of two sequences need not extend to similarity of full sequences. Most distance methods have been global in nature and could not be easily adapted for local comparison.

A downside of using local similarities for sequence comparison is that, while their statistics can be characterized \cite{KaA90,KLY05}, no constraints, apart from algorithmic ones, are placed on the form that similarity measures can take. Under such conditions, sets of biosequences with similarity measures cannot be identified with mathematical structures such as metric or normed spaces, which are a natural framework for many computational techniques such as clustering \cite{WXS03} and indexing for similarity search \cite{HjSa03}. In contrast, distance measures on sequences naturally correspond to metrics under some mild restrictions. 

While the duality between global similarities and distances has been recognized very early \cite{SWF81}, it was only recently established independently by Stojmirovi\'c \cite{Stojmirovic2004} and by Spiro and Macura \cite{SM04} that it is possible to transform local sequence similarity scores derived from many popular scoring functions on building blocks of DNA and proteins into distances satisfying the triangle inequality. In the contexts in which they were presented, the results of the above two papers are almost equivalent, however, their perspectives are quite different. Spiro and Macura \cite{SM04} assume symmetric similarity scores and consider the transformation which converts a similarity to a metric, while \cite{Stojmirovic2004} converts similarity into a quasi-metric, a metric without the symmetry axiom. Quasi-metrics naturally correspond to partial orders and are therefore a natural framework for local similarities.

Unlike most existing literature entries, which are concerned with alignment algorithms, this paper aims to show a rigorous connection between similarities and distances that are metrics or quasi-metrics. Our main results are presented in a form that allows transfer to domains that are not necessarily related to classical string transformations and for that reason we use the framework of free semigroups. We define the $\ell^p$-type edit distance, which generalizes the regular edit distance and allows us to consider many more scoring functions on the amino acid alphabet that fail the requirements in \cite{Stojmirovic2004} and \cite{SM04}. Our results also allow for similarities and distances that are asymmetric. In order to have an accurate description of distances generated from similarities, we introduce a novel nomenclature.

Section \ref{sec:prelim} presents the basic definitions. Edit distances and global similarities are discussed in Sections \ref{sec:editdist} and \ref{sec:globalsim}, respectively. Our main result, Theorem \ref{thm:localqm} is presented in Section \ref{sec:localsim} and various kinds of local similarities are discussed as examples. Section \ref{sec:matrices} examines the applicability of our theory to the actual similarity measures used in contemporary computational biology, while Section \ref{sec:future} discusses some possible applications of our results and future directions. We chose to state many of the well-known results formally and to present many examples to enhance readability. The proofs of the established results are either omitted, or, when generalized in our new framework, relegated to Appendix \ref{app:proofs}.

\section{Preliminaries}\label{sec:prelim}

\subsection{Sequences and Free Semigroups}

Recall that the \emph{free monoid} on a nonempty set $\Sigma$, denoted $\Sigma^*$, is the monoid whose elements, called \emph{words} or \emph{strings}, are all finite sequences of zero or more elements from $\Sigma$, with the binary operation of concatenation. The unique sequence of zero letters (empty string), which we shall denote $e$, is the identity element. The \emph{free semigroup} on $\Sigma$, denoted $\Sigma^+$ is the subset of $\Sigma^*$ containing all elements except the identity.

The length of a word $w\in\Sigma^*$, denoted $\abs{w}$, is the number of occurrences of members of $\Sigma$ in it. For $w=\sigma_1\sigma_2\ldots\sigma_n$, where $\sigma_i\in\Sigma$, $\abs{w}=n$ and we set $\abs{e}=0$.

For two words $u,v\in\Sigma^*$, $u$ is a \emph{factor} or \emph{substring} of $v$ if $v=xuy$ for some $x,y\in\Sigma^*$ and $u$ is a \emph{subsequence} or \emph{subword} of $v$ if $v=w^*_1u^*_1w^*_2u^*_2\ldots w^*_nu^*_nw^*_{n+1}$, where $u=u^*_1u^*_2\ldots u^*_n$, $u^*_i\in\Sigma^*$ and $w^*_i\in\Sigma^*$.  For any $x\in\Sigma^*$, we use $\mathfrak{F}(x)$ to denote the set of all factors of $x$.

We call a semigroup (monoid) $(X,\star)$ \emph{free} if it is isomorphic to the free semigroup (monoid) on some set $\Sigma$. The unique set of elements of $X$ mapping to $\Sigma$ under the isomorphism is called the set of \emph{free generators}.

\begin{exmp}
A DNA molecule can be represented as a word in the free semigroup generated by the four-letter nucleotide alphabet $\Sigma=\{\texttt{A},\texttt{T},\texttt{C},\texttt{G}\}$. An RNA molecule is a word in the free semigroup generated by the alphabet $\Sigma=\{\texttt{A},\texttt{U},\texttt{C},\texttt{G}\}$. A protein can be thought of as a word in the free semigroup generated by the standard twenty amino acid alphabet.
\end{exmp}

\begin{exmp}\label{exmp:profiles}
Let $\Sigma$ be a set and denote by $\mathcal{M}(\Sigma)$ the set of all finite measures supported on $\Sigma$. We will call the elements of the free monoid $\mathcal{M}(\Sigma)^*$ \emph{profiles} over $\Sigma^*$. Profiles arise as models of sets of structurally related biological sequences where $\Sigma$ is the nucleotide or amino acid alphabet.
\end{exmp}

As a convention, for any word $u\in\Sigma^*$, the notation $u=u_1u_2\ldots u_n$, where $n=\abs{u}$ shall mean that $u_i\in\Sigma$ while the notation $u=u^*_1u^*_2\ldots u^*_m$ shall imply that $u^*_i\in\Sigma^*$. For all $1\leq k\leq \abs{u}$ we shall use $\bar{u}_k$ to denote the word $u_1u_2\ldots u_k$ and set $\bar{u}_0=e$. 

Let $f:\Sigma\to\R$. The \emph{canonical homomorphic extension of $f$ to the free monoid $\Sigma^*$} is a function $\bar{f}:\Sigma^*\to\R$ such that $\bar{f}(e)=0$ and for all $x\in\Sigma^+$, $\bar{f}(x)=\sum_{i=1}^{\abs{x}} f(x_i)$.

\subsection{Quasi-metrics}\label{subsec:qm}

Quasi-metrics are asymmetric distance functions that generalize metrics and partial orders. With their associated structures, they belong to an area of active research in topology and theoretical computer science \cite{Ku01}. We now produce the standard definitions used in the remainder of this paper.

A \emph{quasi-metric} on a set $X$ is a mapping $d:X\times X\to\R_+$ such that for all $x,y,z\in X$:
\begin{enumerate}[(i)]
\item $d(x,y)=d(y,x)=0\iff x=y$, and
\item $d(x,z) \leq d(x,y)+ d(y,z)$.
\end{enumerate}
The axiom (ii) is known as the \emph{triangle inequality}. If in addition $d$ is symmetric, that is $d(x,y)=d(y,x)$ for all $x,y\in X$, then $d$ is called a \emph{metric}. A  pair $(X,d)$, where $X$ is a set and $d$ a (quasi-) metric, is called a (quasi-) metric space.

For a quasi-metric $d$, its \emph{conjugate} (or \emph{dual}) quasi-metric, denoted $\cj{d}$, is defined on $X\times X$ by $\cj{d}(x,y) = d(y,x)$, and its \emph{associated metric}, denoted $\qam{d}$, by $\qam{d}(x,y)=\max\{d(x,y), d(y,x)\}=d(x,y)\vee \cj{d}(x,y)$. Another frequently used symmetrization of a quasi-metric is the `sum' metric $\qsum{d}$ defined by $\qsum{d}(x,y) = d(x,y)+d(y,x)$.

A (left) open ball of radius $r>0$ centered at $x_0\in X$ with respect to a quasi-metric $d$ is the set $\{x\in X: d(x_0,x)< r\}$ . The collection of all (left) open balls centered at any $x\in X$ with any $r>0$ is a base for a topology on $X$ induced by $d$. This topology is in general $T_0$ but not necessarily $T_1$. For the purpose of this paper, we will call a quasi-metric $d$ \emph{separating} if the induced topology is $T_1$, that is, if $d(x,y)=0$ implies $x=y$ for all $x,y\in X$. Every quasi-metric $d$ also has its \emph{associated partial order}, denoted $\leq_d$, defined by $x\leq_d y \iff d(x,y)=0$.

A quasi-metric $d$ is called a \emph{weightable quasi-metric} \cite{KuVa94} if there exists a function $w: X\to\R_+$, called the \emph{weight function} or simply the \emph{weight}, satisfying for every $x,y \in X$ \[d(x,y) + w(x) = d(y,x) + w(y).\] In this case we call $d$ \emph{weightable} by $w$. A quasi-metric $d$ is \emph{co-weightable} if its conjugate quasi-metric $\cj{d}$ is weightable. The weight function $w$ by which $\cj{d}$ is weightable is called the \emph{co-weight} of $d$ and $d$ is \emph{co-weightable} by $w$. 


A concept strongly related to weighted quasi-metrics is that of a \emph{partial metric} \cite{Ma94}. A \emph{partial metric} on a set $X$ is a mapping $p:X\times X\to\R_+$ such that for all $x,y,z\in X$:
\begin{enumerate}[(i)]
\item $p(x,y)\geq p(x,x)$;
\item $x=y\iff p(x,x)=p(y,y)=p(x,y)$;
\item $p(x,y)=p(y,x)$;
\item $p(x,z)\leq p(x,y)+p(y,z)-p(y,y)$.
\end{enumerate}
It has been shown \cite{Ma94} that there is a bijection between the partial metrics and generalized weighted quasi-metrics: the transformation $d(x,y) =p(x,y)-p(x,x)$ produces a generalized weighted quasi-metric with weight function $x\mapsto p(x,x)$ out of a partial metric while the $p(x,y)=q(x,y)+w(x)$ produces a partial metric out of a generalized weighted quasi-metric.

\section{Edit distance}\label{sec:editdist}

Waterman, Smith and Beyer, in their 1976 paper \cite{WSB76}, introduced a general form of the edit distance on sets of words, henceforth referred to as the WSB distance. It was constructed by defining a set of allowed weighted transformations between two strings and then minimizing the sum of weights of allowed operations transforming (in the sense of ordered composition) one word into another. They also proposed an algorithm to compute the WSB distance based on dynamic programming.

In this section, we present a recursive definition of edit distance on a free semigroup that generalizes that of Waterman, Smith and Beyer and describe some of its most important properties. The edit distance provides the conceptual and algorithmic foundation to both global and local similarities on free semigroups. Before producing the main definition, we formalize the concept of a \emph{gap penalty}, which we will discuss in detail later in the text.

\begin{defn}
Let $\Sigma$ be a set. A positive function $\gamma:\Sigma^+\to\R$ is called a \emph{gap penalty} over $\Sigma^+$ if for all $u,v\in\Sigma^+$,
\begin{equation}
\gamma(u) + \gamma(v) \geq \gamma(uv).
\end{equation}
We denote by $\Gamma(\Sigma)$ the set of all gap penalties over $\Sigma^+$.
\end{defn}

\begin{defn}
Let $\Sigma$ be a set, $d:\Sigma\times\Sigma\to\R$, and $\alpha$ and $\beta$ be functions $\Sigma^+\to\R$ such that $\alpha^p,\beta^p\in\Gamma(\Sigma)$. Let $x,y\in\Sigma^*$ and let $m=\abs{x}$ and $n=\abs{y}$. Let $1\leq p<\infty$ and define the distance $\sD:\Sigma^*\times\Sigma^*\to\R$ using the following recursion:
\begin{enumerate}[(a)]
\item $\sD(\bar{x}_0,\bar{y}_0)=\sD(e,e)=0$,
\item $\sD(e,\bar{y}_j)= \alpha(\bar{y}_j)$ for all $1\leq j\leq n$,
\item $\sD(\bar{x}_i,e)=\beta(\bar{x}_i)$ for all $1\leq i\leq m$, and
\item for all $1\leq i\leq m$ and $1\leq j\leq n$
\begin{equation*}
\begin{split}
\sD(\bar{x}_i,\bar{y}_j) =\Bigg(  \min\Bigg\lbrace & \sD^p(\bar{x}_{i-1},\bar{y}_{j-1})+d^p(x_i,y_j), \\
& \min_{1\leq k\leq j}\left\{\sD^p(\bar{x}_{i},\bar{y}_{j-k})+\alpha^p(y_{j-k+1}\ldots y_j)\right\},\\ 
& \min_{1\leq k\leq i}\left\{\sD^p(\bar{x}_{i-k},\bar{y}_{j})+ \beta^p(x_{i-k+1}\ldots x_i)\right\} \Bigg\rbrace \Bigg) ^{1/p}.
\end{split}
\end{equation*}
\end{enumerate}
The $\ell^p$ \emph{edit distance} between the sequences $x$ and $y$ (extending $d$, $\alpha$ and $\beta$), is then given by $\sD(x,y)=\sD(\bar{x}_m,\bar{y}_n)$.
\end{defn}

\begin{rem}\label{rem:gappenalties}
We have assumed that $\alpha^p,\beta^p\in\Gamma(\Sigma)$ instead of just being positive functions in order to have $\sD(e,x)=\alpha(x)$ and $\sD(x,e)=\beta(x)$ for all $x\in\Sigma^+$. For a general positive function $\alpha:\Sigma^+\to\R$, the function $\gamma$, given recursively for all $x\in\Sigma^+$ by $\gamma(x_1)=\alpha^p(x_1)$ and
\begin{equation}
\gamma(\bar{x}_i)=  \min_{1\leq k\leq i}\left\{\gamma(\bar{x}_{i-k})+\alpha^p(x_{i-k+1}\ldots x_i)\right\},
\end{equation}
will belong to $\Gamma(\Sigma)$ and therefore $\gamma^{1/p}$ can be used in definition of $\sD$ instead of $\alpha$.
\end{rem}

\begin{rem}
Also note that the distance $\sD$ as defined does not extend $d$ from $\Sigma$ in the strict sense, that is, it is not necessarily true that for all $a,b\in\Sigma$, $\sD(a,b)=d(a,b)$. However, this statement does become correct if we additionally assume $d^p(a,b)\leq \beta^p(a)+\alpha^p(b)$. 
\end{rem}

\begin{rem}\label{rem:WSBalgo}
The $\ell^p$ edit distance between $x$ and $y$ can be computed using dynamic programming algorithm of Waterman, Smith and Beyer \cite{WSB76}. Let $\mD$ be an $(m+1)\times(n+1)$ matrix with rows and columns indexed from $0$ such that $\mD_{0,0}=0$ and for all $i=1,2\ldots m$ and $j=1,2\ldots n$, $\mD_{i,0}= \beta(\bar{x}_i)$, $\mD_{0,j}= \alpha(\bar{y}_j)$,
and  
\begin{equation}\label{eqn:WSBdefn3}
\begin{split}
\mD_{i,j} =\min\Bigg\lbrace & \mD_{i-1,j-1}+d^p(x_i,y_j),\\
& \min_{1\leq k\leq j}\left\{\mD_{i,j-k}+\alpha^p(y_{j-k+1}\ldots y_j)\right\},\\ 
& \min_{1\leq k\leq i}\left\{\mD_{i-k,j}+\beta^p(x_{i-k+1}\ldots x_i)\right\} \Bigg\rbrace.\\
\end{split}
\end{equation}
Then, we have $\sD(x,y)=(\mD_{m,n})^{1/p}$. The original WSB distance is obtained when $p=1$.
\end{rem}

\subsection{Alignments}

From the recursive definition, it follows that the $\ell^p$ edit distance $\sD(x,y)$ can be decomposed as the $\ell^p$ sum of the distances of non-overlapping factors of $x$ and $y$. This decomposition provides an optimal \emph{alignment} between $x$ and $y$.

\begin{defn}[\cite{SW81a}]
Let $x,y\in\Sigma^*$. An \emph{alignment} between $x$ and $y$ is a finite sequence of pairs $\algn{(x^*_k,y^*_k)}_{k=1}^K$, where $x=x^*_1x^*_2\ldots x^*_K$, $y=y^*_1y^*_2\ldots y^*_K$ and for each $1\leq k\leq K$ either
\begin{enumerate}[(a)]
\item $x^*_k=x_i$ and $y^*_k=y_j$ for some $i,j$, or
\item $x^*_k\in\mathfrak{F}(x)$, $x^*_k\neq e$  and $y^*_k=e$, or
\item $x^*_k=e$, $y^*_k\in\mathfrak{F}(y)$ and $y^*_k\neq e$.
\end{enumerate}
We will use $\mathcal{A}(x,y)$ to denote the set of all alignments of $x$ and $y$.
\end{defn}

Each pair $(x^*_k,y^*_k)$ corresponds to an \emph{edit operation} that transforms $x^*_k$ into $y^*_k$. Pairs of the form $(a,b)$, $(x,e)$ and $(e,y)$ where $a,b\in\Sigma$ and $x,y\in\Sigma^+$ represent a \emph{substitution} of the letter $a$ for the letter $b$, \emph{deletion} of the word $x$ and \emph{insertion} of the word $y$, respectively. Insertions and deletions are collectively called \emph{indels}.

Every transformation $(x^*_k,y^*_k)$ can be given a weight or a cost equal to $\sD(x^*_k,y^*_k)$, with the weight of an alignment $\algn{(x^*_k,y^*_k)}_{k=1}^K$ being equal to the $\ell^p$ sum of the weights of the individual transformations. The distance $d$ on $\Sigma$ provides \emph{substitution costs}, while the values of $\alpha$ and $\beta$, give the costs of indels. Thus, the edit distance between $x$ and $y$ can be described as the minimum weighted cost (in the $\ell^p$ sense) of transforming the sequence $x$ into $y$ using substitutions and indels as edit operations. This provides an alternative characterization of edit distance, which was long known for the $\ell^1$ case \cite{SW81a} and which we state here in general form without proof as Lemma \ref{lemma:aligndecomp} below. 

\begin{lem}\label{lemma:aligndecomp}
Let $\Sigma$ be a set, $d:\Sigma\times\Sigma\to\R$, and $\alpha,\beta:\Sigma^+\to\R_+$. Suppose $\sD$ is an $\ell^p$ edit distance on $\Sigma^*$ with respect to $d$, $\alpha$ and $\beta$. Then, for all $x,y\in\Sigma^*$
\begin{equation}\label{eq:decomp}
\sD(x,y) = \min \biggl\{ \left( {\textstyle \sum_{k=1}^K \sD^p(x^*_k,y^*_k)} \right)^{1/p}\  \big\vert\ \algn{(x^*_k,y^*_k)}_{k=1}^K\in\mathcal{A}(x,y) \biggr\}.
\end{equation}
\qed
\end{lem}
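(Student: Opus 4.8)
\section*{Proof proposal}

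The plan is to prove both inequalities at once by induction on the total length $m+n$, where $m=\abs{x}$ and $n=\abs{y}$. Write $A(x,y)$ for the right-hand side of \eqref{eq:decomp}. Since two finite words admit only finitely many alignments (each pair of an alignment consumes at least one letter of $x$ or $y$, so $K\leq m+n$), the minimum defining $A$ is attained and $A$ is well defined; the goal is to show $A(x,y)=\sD(x,y)$ for all $x,y\in\Sigma^*$.

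For the base cases I would take $x=e$ or $y=e$. When $x=e$, the only admissible pairs are insertions $(e,y^*_k)$ whose $y$-components partition $y$ into nonempty factors, so the $p$-th power of any alignment weight equals $\sum_k\alpha^p(y^*_k)$. The gap-penalty hypothesis $\alpha^p\in\Gamma(\Sigma)$ gives subadditivity, $\alpha^p(uv)\leq\alpha^p(u)+\alpha^p(v)$, so collapsing all factors into a single insertion never increases the cost; hence the one-pair alignment is optimal and $A(e,y)=\alpha(y)=\sD(e,y)$. The case $y=e$ is symmetric, using $\beta^p\in\Gamma(\Sigma)$.

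For the inductive step with $m,n\geq1$ I would peel off the last pair $(x^*_K,y^*_K)$ of an alignment. Because the weights combine as an $\ell^p$ sum, the $p$-th power of an alignment's weight splits additively into the contribution $\sD^p(x^*_K,y^*_K)$ of the last pair plus the $p$-th power weight of the alignment it induces on the two prefixes. The last pair is of exactly one of three kinds: a letter match $(x_m,y_n)$ leaving prefixes $(\bar{x}_{m-1},\bar{y}_{n-1})$; a deletion $(x_{m-k+1}\ldots x_m,e)$ of weight $\beta(x_{m-k+1}\ldots x_m)$ leaving $(\bar{x}_{m-k},\bar{y}_n)$; or an insertion $(e,y_{n-k+1}\ldots y_n)$ of weight $\alpha(y_{n-k+1}\ldots y_n)$ leaving $(\bar{x}_m,\bar{y}_{n-k})$. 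Minimizing over alignments thus reduces to minimizing over the choice of last pair and then over alignments of the strictly shorter prefixes, to which the induction hypothesis applies, so each $A^p$ of a prefix pair may be replaced by the corresponding $\sD^p$. After this substitution the expression for $A^p(x,y)$ is \emph{identical} to the recursion defining $\sD^p(\bar{x}_m,\bar{y}_n)$, except in the match branch, where $A$ carries $\sD^p(x_m,y_n)$ while the recursion carries $d^p(x_m,y_n)$.

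The one genuinely delicate point, and the main obstacle, is reconciling these two match terms. Evaluating the recursion on single letters gives $\sD^p(x_m,y_n)=\min\{d^p(x_m,y_n),\beta^p(x_m)+\alpha^p(y_n)\}\leq d^p(x_m,y_n)$, which immediately yields $A^p(x,y)\leq\sD^p(x,y)$. For the reverse inequality I would verify that every branch of the $A^p$ minimum is at least $\sD^p(x,y)$: the insertion and deletion branches coincide verbatim with branches of the $\sD$ recursion, and in the match branch, in the case $\sD^p(x_m,y_n)=\beta^p(x_m)+\alpha^p(y_n)$, the quantity $\beta^p(x_m)+\alpha^p(y_n)+\sD^p(\bar{x}_{m-1},\bar{y}_{n-1})$ dominates $\sD^p(x,y)$, since it is the cost of first deleting $x_m$ and then inserting $y_n$, an option already bounded above by two successive applications of the recursion. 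Combining the two inequalities closes the induction and establishes \eqref{eq:decomp}.
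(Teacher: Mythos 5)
Your proof is correct, and it is worth noting that the paper itself supplies no proof to compare against: Lemma \ref{lemma:aligndecomp} is explicitly stated ``in general form without proof,'' with only a citation to Smith and Waterman \cite{SW81a} for the $\ell^1$ case, so your argument fills a gap rather than retracing one. The structure is the standard one and all the pieces check out: finiteness of $\mathcal{A}(x,y)$ (each pair consumes at least one letter, so $K\leq m+n$) makes the minimum well defined; the base cases use exactly the subadditivity $\alpha^p,\beta^p\in\Gamma(\Sigma)$ that Remark \ref{rem:gappenalties} builds into the definition; and peeling the last pair is legitimate because the definition of alignment imposes no adjacency constraints, so every extension of a prefix alignment by a legal final pair is again an alignment, and a final letter pair is forced to be $(x_m,y_n)$. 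Most importantly, you have isolated the one point where a naive ``the recursions are identical'' argument would be wrong: the lemma weights a matched pair by $\sD(x_m,y_n)$, and $\sD^p(a,b)=\min\{d^p(a,b),\beta^p(a)+\alpha^p(b)\}$ can be strictly below $d^p(a,b)$. Your resolution --- that the branch $\beta^p(x_m)+\alpha^p(y_n)+\sD^p(\bar{x}_{m-1},\bar{y}_{n-1})$ dominates $\sD^p(x,y)$ because it is realized by chaining an insertion step ($k=1$) and a deletion step ($k=1$) of the recursion --- is valid. One tiny boundary case deserves a sentence: in that two-step chain, when $n=1$ the intermediate inequality $\sD^p(\bar{x}_m,e)\leq\sD^p(\bar{x}_{m-1},e)+\beta^p(x_m)$ is not an instance of clause (d) of the recursive definition (which requires $j\geq 1$) but reads $\beta^p(\bar{x}_m)\leq\beta^p(\bar{x}_{m-1})+\beta^p(x_m)$, i.e.\ it is again the hypothesis $\beta^p\in\Gamma(\Sigma)$, the same one driving your base case. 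With that observed, both inequalities hold branchwise and the induction closes.
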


\subsection{Edit distances as quasi-metrics}

We now proceed to state the conditions for an $\ell^p$ edit distance to be a quasi-metric. For simplicity we restrict ourselves to edit distances with gap penalties that are increasing and depend solely on fragment composition and length, while more general gap penalties are considered in Appendix \ref{app:editdist}.

\begin{defn}\label{defn:incrgap}
Let $\Sigma$ be a set. We call a function $\gamma:\Sigma^*\to\R$ \emph{increasing} if 
for all $u,v,x\in\Sigma^*$,
\begin{equation}
\gamma(uxv)\geq \gamma(uv).
\end{equation}
\end{defn}

\begin{defn}
Let $\Sigma$ be a set. A function $\gamma\in\Gamma(\Sigma)$ is called a \emph{composition-length gap penalty} on $\Sigma^+$ if it is increasing and has a form
\begin{equation}
\gamma(z) = \sum_i \phi(z_i) + \psi(\abs{z})
\end{equation}
for all $z\in\Sigma^+$, where $\phi$ is a map $\Sigma\to\R$ and $\psi$ is a function $\N\to\R$. We denote by $\gmcl$ the set of all composition-length gap penalties on $\Sigma^+$.
\end{defn}

Composition-length gap penalties have a component solely dependent on the length of the inserted or deleted word and a composition-dependent component. Current applications of edit distances in computational biology (see for example \cite{Gusfield97}) mainly use gap penalties that are the same for insertions and deletions and depend solely on the fragment length, thus satisfying our definition of composition-length gap penalties with $\phi=0$. We chose the above definition in order to include all such cases and to provide simple but sufficiently general gap penalties for consideration of global and local similarities. The requirement for composition-length gap penalties to be increasing is included because it is a necessary condition for applications of our main Theorem \ref{thm:localqm}.

The most widely used length-dependent gap penalty functions are \emph{linear}, of the form $\psi(k)=\mu k$, and \emph{affine}, of the form $\psi(k)=\mu+\nu k$, where $\mu,\nu$ are constants. The main advantage of affine gap penalties is that the dynamic programming algorithm for computation of distances in this case can be modified to run in $O(nm)$ average and worst case time, where $m=\abs{x}$ and $n=\abs{y}$ \cite{Go82}, as opposed to $O(m^2n+mn^2)$  for the most general WSB algorithm \cite{WSB76}. Gap penalties of the form $\psi(k)=\mu+\nu\log(k)$ have also been considered \cite{Waterman84b}. Note that the algorithmic complexity of the WSB algorithm for distances using composition-length gap penalties depends mainly on the form of $\psi$ since the composition-dependent component is linear.

\setcounter{thmeditdistqmn}{\value{thm}} 
\setcounter{thmeditdistqms}{\value{section}} 

\begin{thm}\label{thm:editdistqm}
Let $\Sigma$ be a set and let $1\leq p<\infty$. Suppose $d$ is a separating quasi-metric on $\Sigma$ and $\gamma,\delta\in\gmcl$ such that for all $a,b\in\Sigma$,
\begin{equation}
\gamma(b)-\gamma(a) \leq d^p(a,b)
\end{equation}
and
\begin{equation}
\delta(a)-\delta(b) \leq d^p(a,b).
\end{equation}
Let $\alpha=\gamma^{1/p}$ and $\beta=\delta^{1/p}$. Then, the $\ell^p$ edit distance $\sD$, extending $d$,$\alpha$ and $\beta$, is a separating quasi-metric on $\Sigma^*$. \qed
\end{thm}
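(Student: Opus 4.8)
The plan is to work throughout with the alignment characterization of the edit distance in Lemma~\ref{lemma:aligndecomp}, which replaces the opaque recursion by an explicit minimization over $\mathcal{A}(x,y)$. Writing $\sD^p(x,y)=\sum_k \sD^p(x^*_k,y^*_k)$ for an optimal alignment $\algn{(x^*_k,y^*_k)}$, the three cheap axioms fall out quickly. Nonnegativity is immediate since $d\geq 0$ (being a quasi-metric on $\Sigma$) and $\gamma,\delta$ are positive, so every term entering the recursion is nonnegative; reflexivity, $\sD(x,x)=0$, follows by scoring the letter-by-letter alignment $\algn{(x_i,x_i)}$, each of whose terms vanishes because $d(a,a)=0$.

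For the separating property I would suppose $\sD(x,y)=0$ and take an optimal alignment. Since all summands are nonnegative, each $\sD^p(x^*_k,y^*_k)=0$. An indel pair would contribute $\alpha^p(y^*_k)=\gamma(y^*_k)>0$ or $\beta^p(x^*_k)=\delta(x^*_k)>0$, gap penalties being strictly positive on $\Sigma^+$, so the optimal alignment consists entirely of substitution pairs $(x_i,y_j)$ with $d(x_i,y_j)=0$; because $d$ is separating on $\Sigma$, each $x_i=y_j$, whence $x=y$.

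The substance is the triangle inequality $\sD(x,z)\leq\sD(x,y)+\sD(y,z)$. I would fix optimal alignments $A\in\mathcal{A}(x,y)$ and $B\in\mathcal{A}(y,z)$ and compose them along the shared word $y$: each letter $y_j$ is, in $A$, either matched to some $x_i$ or inserted, and, in $B$, either matched to some $z_l$ or deleted, giving four cases (together with the pure deletions of $x$-factors in $A$ and pure insertions of $z$-factors in $B$, carried over unchanged). From these I build an alignment $C\in\mathcal{A}(x,z)$: match--match yields a substitution $(x_i,z_l)$, match--delete yields deletion of $x_i$, insert--match yields insertion of $z_l$, and insert--delete yields nothing. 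The per-operation bounds are exactly what the hypotheses deliver. In the match--match case $d(x_i,z_l)\leq d(x_i,y_j)+d(y_j,z_l)$ by the triangle inequality on $\Sigma$; in the match--delete case the hypothesis $\delta(a)-\delta(b)\leq d^p(a,b)$ gives $\delta(x_i)\leq\delta(y_j)+d^p(x_i,y_j)$, hence $\beta(x_i)\leq d(x_i,y_j)+\beta(y_j)$ after the elementary inequality $(u+v)^{1/p}\leq u^{1/p}+v^{1/p}$; the insert--match case symmetrically uses $\gamma(b)-\gamma(a)\leq d^p(a,b)$ to give $\alpha(z_l)\leq\alpha(y_j)+d(y_j,z_l)$. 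The goal is then to exhibit, for a suitable common indexing of the operations of $C$, nonnegative cost sequences $a$ and $b$ with $\norm{a}_p\leq\sD(x,y)$ and $\norm{b}_p\leq\sD(y,z)$ and with the cost of each operation of $C$ dominated by the corresponding $a_k+b_k$; Minkowski's inequality then closes the argument via $\sD(x,z)\leq\norm{\mathrm{cost}(C)}_p\leq\norm{a+b}_p\leq\norm{a}_p+\norm{b}_p$.

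The main obstacle is precisely choosing that common indexing so that the charges $a,b$ faithfully recover the block costs of $A$ and $B$. Because a composition-length gap penalty charges a single length term $\psi(\abs{z})$ per block rather than per letter, naively assigning each letter of an indel block its singleton penalty overcounts; conversely the operations of $C$ descending from one such block may be split apart by an interleaving pure indel of the other alignment, or fused together when the discarded insert--delete letters vanish. Reconciling this is exactly where the hypotheses that $\gamma,\delta\in\gmcl$ be increasing and satisfy the subadditivity defining $\Gamma(\Sigma)$ enter: the increasing property ensures the cost of the sub-word of an $A$- or $B$-block that survives into $C$ never exceeds the cost of the whole block, so each block is charged at most once, while subadditivity $\gamma(uv)\leq\gamma(u)+\gamma(v)$ lets the surviving runs be reassembled into genuine $C$-blocks without inflating their cost. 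Verifying that after this grouping the totals still obey $\sum_k a_k^p\leq\sD^p(x,y)$ and $\sum_k b_k^p\leq\sD^p(y,z)$ is the delicate combinatorial core, and is the step I would develop most carefully.
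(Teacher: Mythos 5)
Your route---composing optimal alignments of $(x,y)$ and $(y,z)$ along $y$ into an alignment of $(x,z)$ and charging its operations against the two given alignments via Minkowski---is genuinely different from the paper's. The paper proves the more general Theorem \ref{thm:editdistqm1} by lexicographic induction on the dynamic-programming recursion, with a case analysis on which branch of the recursion realizes $\sD(\bar{x}_i,\bar{y}_j)$ and $\sD(\bar{y}_j,\bar{z}_k)$; the hard case (a terminal deletion of a block of $y$) is handled by re-decomposing $\sD(\bar{x}_i,\bar{y}_j)$ via Lemma \ref{lemma:aligndecomp} and chaining the block-level inequalities (W1), (W3), (W5), (W7) at the level of $p$-th powers, and Theorem \ref{thm:editdistqm} then follows as a corollary because composition-length penalties are increasing and satisfy (W1)--(W8). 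Your treatment of nonnegativity, reflexivity and separation coincides with the paper's and is fine.

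There is, however, a genuine gap at exactly the step you defer. The tools you name for the ``delicate combinatorial core''---the increasing property (to split blocks) and subadditivity (to merge them)---do not suffice: the match--delete and insert--match cases additionally require the \emph{within-block substitution} inequalities
\[ d^p(a,b)+\delta(ubv)\geq \delta(uav) \qquad\text{and}\qquad d^p(a,b)+\gamma(uav)\geq\gamma(ubv), \]
the paper's (W1) and (W2), saying that replacing one letter inside an indel block costs at most $d^p(a,b)$ \emph{additively in the $p$-th power}. Your per-letter bound $\beta(x_i)\leq d(x_i,y_j)+\beta(y_j)$ is at the wrong granularity: since $\psi$ is subadditive, $\sum_j\delta(v_j)\geq\delta(v)$ for a block $v$, so summing singleton penalties overcounts the charge available from the block on the $B$-side---precisely the overcounting you flag, but neither of your stated remedies repairs it. The repair is that the letterwise hypothesis implies (W1)/(W2) here \emph{only because of the composition-length form}: substitution preserves length, so the $\psi(\abs{\cdot})$ term cancels and $\delta(uav)-\delta(ubv)=\phi(a)-\phi(b)=\delta(a)-\delta(b)\leq d^p(a,b)$, and likewise for $\gamma$; for general $\gamma,\delta\in\Gamma(\Sigma)$ this implication fails, which is why the appendix postulates (W1)--(W8) as hypotheses. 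With (W1)/(W2) added, your scheme does close: work with $p$-th powers within each composed block (chain (W1) over the substituted letters, absorb interleaved deletions by subadditivity of the $p$-th powers, discard vanished insert--delete material by the increasing property) to get $c_k^p\leq a_k^p+b_k^p$, hence $c_k\leq a_k+b_k$, and only then apply Minkowski across blocks; in particular your intermediate per-letter use of $(u+v)^{1/p}\leq u^{1/p}+v^{1/p}$ must be postponed to the very end, since once $p$-th roots are taken letterwise the block accounting can no longer be reassembled.
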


Theorem \ref{thm:editdistqm} is a generalization of similar theorems for $p=1$ proven by Waterman \emph{et al.} \cite{WSB76} for constant substitution costs and gap penalties depending on fragment length, and by Spiro and Macura \cite{SM04} in a more general setting. We state and prove a version with fewer restriction on gap penalties as Theorem \ref{thm:editdistqm1} in Appendix \ref{app:editdist}.

\begin{rem}\label{rem:invqmsemigroup}
According to \cite{RoSch02}, a quasi-metric $d$ defined on a semigroup $(X,\star)$ is called \emph{invariant} with respect to $\star$ if for all $x,y,z\in X$,
\begin{equation}
d(x\star z,y\star z)\leq d(x,y) \quad \text{and}\quad d(z\star x,z\star y)\leq d(x,y).
\end{equation}
It is apparent from the definition that the edit distance $\sD$ on the free semigroup $\Sigma^*$, which satisfies Theorem \ref{thm:editdistqm}, is invariant with respect to the string concatenation.
\end{rem}

Since our $\ell^p$ edit distances depend on several parameters, we introduce a nomenclature to make this explicit.
\begin{defn}
Let $\Sigma$ be a set and let $1\leq p<\infty$. Suppose $\sD$ is an $\ell^p$ edit distance extending a quasi-metric $d$ on $\Sigma$ and gap penalties $\alpha,\beta$ such that $\alpha^p,\beta^p\in\gmcl$. We will write $\sD=\DQ{p}{d}{\alpha}{\beta}$ if $\sD$ is a quasi-metric and $\sD=\DM{p}{d}{\alpha}$ if $\sD$ is a metric (it is necessary that $\alpha=\beta$ if $\sD$ is a metric).
\end{defn}

Most (if not all) instances of edit distances in computer science, computational biology and pure mathematics involve the $\ell^1$ edit distances. Below, we outline some of the well-known examples.

\begin{exmp}\label{ex:lev}
The \emph{Levenstein metric} \cite{Lev66} (the original `string edit distance') is the smallest number of permitted edit operations (substitutions and indels) required to transform one string into another. In our nomenclature, for a set of letters $\Sigma$, the Levenstein distance is realized as $\DM{1}{d}{\alpha}$ where $\alpha(u)=\abs{u}$ for all $u\in\Sigma^+$ and $d$ is the \emph{discrete metric}, that is, for all $a,b\in\Sigma$
\begin{equation}
d(a,b) = \begin{cases}
0 & \text{if $a=b$},\\
1 & \text{if $a\neq b$}.
\end{cases}
\end{equation}
\end{exmp}

\begin{exmp}\label{ex:sellers}
The Sellers distance, introduced by Sellers in 1974 \cite{Se74}, is a metric obtained by extension of a metric $d$ on the set $\Sigma_\dagger=\Sigma\cup\{e\}$, the set of generators plus the identity element, to the free monoid $\Sigma^*$. It is realized as $\DM{1}{d}{\alpha}$ where $\alpha(u)=\sum_i d(u_i,e)$ for all $u\in\Sigma^+$.

This construction has long been known in the theory of topological groups \cite{Pe99b} as the Graev metric \cite{Graev1948,Graev1951} on the free group $F(\Sigma)$. Recall that $F(\Sigma)$ consists of all sequences of letters from the generating set $\Sigma$ and their inverses; in other words, $F(\Sigma)=Y^*$, where $Y=\Sigma\cup \Sigma^{-1}$ and $\Sigma^{-1}$ is the set consisting of inverses of elements of $\Sigma$. Let $\rho$ be a metric on the set $Y_\dagger=Y\cup \{e\}$. The Graev metric $\bar{\rho}$ is then a maximal invariant metric on $F(\Sigma)$ such that $\bar{\rho}$ restricted to the set $Y_\dagger$ is equivalent to $\rho$. Note that the notion of invariance in this context is slightly different than the definition of an invariant quasi-metric on a semigroup from Remark \ref{rem:invqmsemigroup} above: a metric $\rho$ on a group $(X,\star)$ is called \emph{invariant} with respect to $\star$ if for all $x,y,z\in X$,
\begin{equation}
\rho(x\star z,y\star z)= \rho(z\star x,z\star y) = \rho(x,y).
\end{equation}

The maximality of the Sellers-Graev metric can also be observed in the context of the free monoid $\Sigma^*$ using the following argument. Let $\sD=\DM{1}{d}{\alpha}$ where $d$ is a on $\Sigma$ and $\alpha$ is a gap penalty. Define a metric $d_\dagger$ on $\Sigma_\dagger$ by
\begin{equation}
d_\dagger(a,b) = \begin{cases} \sD(a,b)& \text{if $a,b\in\Sigma$},\\
\alpha(a) & \text{if $b=e$},\\
\alpha(b) & \text{if $a=e$}.\\  \end{cases}
\end{equation}
It is clear that $\sD$ extends $d_\dagger$ from $\Sigma_\dagger$ to $\Sigma^*$. However, for every $x\in\Sigma^*$, \[\sD(x,e)\leq \left(\sum_i \alpha^p(x_i)\right)^{1/p}\leq \sum_i \alpha(x_i)\] and hence every edit distance extending $d_\dagger$ to $\Sigma^*$ will be smaller than the Sellers-Graev distance.
\end{exmp}

\begin{exmp}\label{ex:lcs}
Let $\Sigma$ be a set and for $u,v\in\Sigma^*$ denote by $\mathsf{LCS}(u,v)$ the \emph{longest common subsequence} of $u$ and $v$. Define \[\rho(u,v)=\abs{u}+\abs{v}-2\abs{\mathsf{LCS}(u,v)}.\] It can be easily shown that $\rho$ is a metric on $\Sigma^*$ and that $\rho$ can be realized as $\DM{1}{d}{\alpha}$ where $\alpha(u)=\abs{u}$ for all $u\in\Sigma^+$ and  $d(a,b)=2$ for all $a,b\in\Sigma$ such that $a\neq b$ (cf. \cite{Gusfield97}, pp. 246). Since $d(a,b)\geq \alpha(a)+\alpha(b)$, the optimal alignment can be expressed solely in terms of insertions and deletions . The longest common subsequence metric provides a special case of the Sellers-Graev metric.   
\end{exmp}

\subsection{Alignment decomposition}

Recall that Lemma \ref{lemma:aligndecomp} indicates that the total $\ell^p$ edit distance $\sD$ between two words $x$ and $y$ can be optimally decomposed as an $\ell^p$ sum of the distances between constituent factors of $x$ and $y$. Lemma \ref{lemma:condP} below shows that, if the gap penalties are increasing, an arbitrary choice of a factor $y'$ of $y$ decomposes the edit distance between $x$ and $y$ into $\ell^p$ sum of the edit distances between fragments of $x$ and $y$. In this case, all of $x$ is used up while some parts of $y$ could be `lost' (Figure \ref{fig:decompose}). A similar splitting can also be achieved with a choice of a fragment of $x$. We call this property \emph{arbitrary decomposability.}

\begin{defn}
Let $\Sigma$ be a set, let $\rho:\Sigma^*\times\Sigma^*$ be a distance function on the free monoid $\Sigma^*$ and let $1\leq p<\infty$. We say that $\rho$ is \emph{arbitrarily decomposable of order $p$} if  for all $x,y\in\Sigma^*$,
\begin{enumerate}[(i)]
\item For every $y'\in\mathfrak{F}(y)$ there exist $x',x_1^*,x_2^*\in\mathfrak{F}(x)$ such that $x=x_1^*x'x_2^*$ and $y_1^*,y_2^*,u,v\in\mathfrak{F}(y)$ such that $y=y_1^*uy'v y_2^*$ and 
\begin{equation}
\rho(x,y)\geq \Big(\rho^p(x_1^*,y_1^*)+ \rho^p(x',y')+ \rho^p(x_2^*,y_2^*)\Big)^{1/p}; \tag{A1}\label{eqn:A1}
\end{equation}
\item For every $x'\in\mathfrak{F}(x)$ there exist $y',y_1^*,y_2^*\in\mathfrak{F}(y)$ such that $y=y_1^*y'y_2^*$ and $x_1^*,x_2^*,u,v\in\mathfrak{F}(x)$ such that $x=x_1^*ux'v x_2^*$ and 
\begin{equation}
\rho(x,y)\geq \Big(\rho^p(x_1^*,y_1^*)+ \rho^p(x',y')+ \rho^p(x_2^*,y_2^*)\Big)^{1/p}. \tag{A2}
\end{equation}
\end{enumerate}
\end{defn}

Note that if the distance function $\rho$ is symmetric, the two properties above collapse into a single one.

\begin{figure}[htbp]
\begin{center}
\input{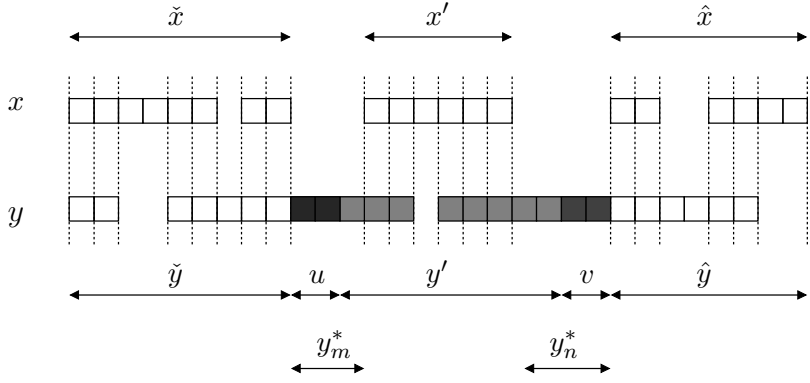}
\caption{Arbitrary decomposability (part \ref{eqn:A1}) of an alignment. A choice of $y'$ induces a decomposition of both $x$ and $y$ such that $x=\check{x}x'\hat{x}$, $y=\check{y}uy'v\hat{y}$ and $\rho(x,y)\geq \big(\rho^p(\check{x},\check{y})+ \rho^p(x',y')+ \rho^p(\hat{x},\hat{y})\big)^{1/p}$. Dashed lines indicate the boundaries of edit operations. The fragments $u$ and $v$ of $y$ are `lost': they do not contribute to decomposition.}
\label{fig:decompose}
\end{center}
\end{figure}

\begin{lem}\label{lemma:condP}
Let $\Sigma$ be a set and let $d:\Sigma\times\Sigma\to\R$. Suppose that $\alpha$ and $\beta$ are increasing functions $\Sigma^+\to\R$ such that $\alpha^p,\beta^p\in\Gamma(\Sigma)$ and $\sD$ is an $\ell^p$ edit distance on $\Sigma^*$ extending $d$, $\alpha$ and $\beta$. Then, $\sD$ is arbitrarily decomposable of order $p$.
\end{lem}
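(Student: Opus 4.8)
The plan is to exploit the alignment characterization of the edit distance from Lemma~\ref{lemma:aligndecomp}; here $\rho=\sD$. I would prove property~(\ref{eqn:A1}), since property~(A2) follows from the symmetric argument with the roles of $x$ and $y$ (and of $\alpha$ and $\beta$) interchanged. Fix $x,y\in\Sigma^*$ and a factor $y'\in\mathfrak{F}(y)$, say $y'=y_{a+1}\ldots y_b$ with $0\le a\le b\le\abs{y}$. By Lemma~\ref{lemma:aligndecomp} choose an optimal alignment $\algn{(x^*_k,y^*_k)}_{k=1}^K\in\mathcal{A}(x,y)$, so that
\begin{equation*}
\sD^p(x,y)=\sum_{k=1}^K\sD^p(x^*_k,y^*_k).
\end{equation*}
The idea is to cut this alignment at the two $y$-positions $a$ and $b$ that bound $y'$, splitting the block sequence into a left, a middle and a right part whose $y$-sides recover $y_1^*$, $y'$ (after trimming) and $y_2^*$, and then to bound each part from below by the edit distance of its endpoints, again via Lemma~\ref{lemma:aligndecomp}.

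Next I would analyse where the cut positions fall. Let $n_k=\abs{y^*_1\ldots y^*_k}$ record how much of $y$ has been consumed after block $k$. The crucial observation is that an integer $y$-position can fall strictly inside a block $y^*_k$ only when that block is an \emph{insertion}: substitution blocks have $\abs{y^*_k}=1$ and deletion blocks have $y^*_k=e$, so in either case any $y$-position lies at a block boundary. Consequently the left cut at position $a$ has two cases: either $a=n_p$ for some $p$ (a boundary cut, where I set $u=e$), or $n_{p-1}<a<n_p$ for an insertion block $p$ with $x^*_p=e$, in which case I split $y^*_p=u\,w$ with $\abs{u}=a-n_{p-1}$, designate $u$ as the lost fragment, and carry the suffix $w$ into the middle as the insertion $(e,w)$. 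The right cut at position $b$ is treated identically, with an insertion block $q$, producing the lost fragment $v$ and a prefix $w'=y^*_q$ minus its suffix $v$ that is carried into the middle; the degenerate overlaps ($y'=e$, $y'=y$, or both cuts inside the same insertion block) are covered by the same recipe. Because insertion blocks have empty $x$-side, these interior cuts leave the $x$-sequence split cleanly at block boundaries, so $x$ is partitioned with no loss as $x=x_1^*x'x_2^*$; any deletion blocks sitting exactly at a cut position may be assigned to either side without affecting the cost.

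Finally I would assemble the cost bound. The left blocks form an alignment of $(x_1^*,y_1^*)$ and the right blocks an alignment of $(x_2^*,y_2^*)$, so by Lemma~\ref{lemma:aligndecomp},
\begin{equation*}
\sD^p(x_1^*,y_1^*)\le\sum_{k\in\mathrm{left}}\sD^p(x^*_k,y^*_k),\qquad \sD^p(x_2^*,y_2^*)\le\sum_{k\in\mathrm{right}}\sD^p(x^*_k,y^*_k).
\end{equation*}
The middle blocks form an alignment of $(x',y')$ in which the two straddling insertions contribute $\alpha^p(w)$ and $\alpha^p(w')$ in place of $\alpha^p(y^*_p)=\sD^p(e,y^*_p)$ and $\alpha^p(y^*_q)=\sD^p(e,y^*_q)$. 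This is exactly where the hypothesis enters: since $\alpha$ is increasing, deleting the prefix $u$ and the suffix $v$ cannot raise the penalty, so $\alpha^p(w)\le\alpha^p(y^*_p)$ and $\alpha^p(w')\le\alpha^p(y^*_q)$; hence the middle alignment has weight at most $\sum_{k=p}^{q}\sD^p(x^*_k,y^*_k)$, and Lemma~\ref{lemma:aligndecomp} bounds $\sD^p(x',y')$ by the same sum. Since the index sets $\{1,\dots,p-1\}$, $\{p,\dots,q\}$ and $\{q+1,\dots,K\}$ partition $\{1,\ldots,K\}$, adding the three bounds gives
\begin{equation*}
\sD^p(x_1^*,y_1^*)+\sD^p(x',y')+\sD^p(x_2^*,y_2^*)\le\sum_{k=1}^K\sD^p(x^*_k,y^*_k)=\sD^p(x,y),
\end{equation*}
which is~(\ref{eqn:A1}) after taking $p$-th roots. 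The main obstacle is the middle step: one must check carefully that interior cuts can only split insertion blocks, so that the $x$-side is partitioned without loss and the only discarded material is the two insertion fragments $u$ and $v$, precisely the material whose removal is cost-free under the increasing hypothesis.
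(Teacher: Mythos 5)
Your proposal is correct and follows essentially the same route as the paper's proof: both pass to an optimal alignment via Lemma~\ref{lemma:aligndecomp}, observe that the cuts bounding $y'$ can fall strictly inside a block only when that block is an insertion (substitutions consume one letter of $y$, deletions none), trim the straddling insertion blocks using the increasing gap penalty, and reassemble the three pieces to obtain~(\ref{eqn:A1}). Your version is in fact slightly more careful than the paper's in two spots: you state the left and right bounds $\sD^p(x_1^*,y_1^*)\le\sum_{k\in\mathrm{left}}\sD^p(x^*_k,y^*_k)$ as inequalities via Lemma~\ref{lemma:aligndecomp} (the paper writes them as equalities), and you explicitly note the degenerate case of both cuts landing in the same insertion block.
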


\begin{proof}
We will prove only the first part of the definition of arbitrary decomposability because the second follows by the same argument. Let $x,y\in\Sigma^*$ and let $y'\in\mathfrak{F}(y)$. By Lemma \ref{lemma:aligndecomp}, the distance $\sD(x,y)$ can be written as 
\begin{equation*}
\sD(x,y) = \left(\sum_{k=1}^K \sD^p(x^*_k,y^*_k)\right)^{1/p},
\end{equation*}
where $x=x^*_1x^*_2\ldots x^*_K$, $y=y^*_1y^*_2\ldots y^*_K$. Let $1\leq m\leq n\leq K$ be such that $y^*_m\neq e$, $y^*_n\neq e$, $y'\in\mathfrak{F}(y^*_m\ldots y^*_n)$ and $y^*_{m+1}\ldots y^*_{n-1}\in\mathfrak{F}(y')$ (i.e. $y^*_m\ldots y^*_n$ is the smallest factor of $y$ having $y'$ as a factor -- see Figure \ref{fig:decompose}). Then, the fragments $y^*_m$ and $y^*_n$ contain parts of $y'$. (Note that $y'$ always coincides with $y^*_m\ldots y^*_n$ if the gap penalties depend only on composition.)

Consider the fragment $y^*_m$. According to Lemma \ref{lemma:aligndecomp}, $y^*_m$ can be either a letter ($y^*_m\in\Sigma$) or a fragment ($y^*_m\in\Sigma^*$), since the possibility of $y^*_m=e$ was explicitly excluded.
If $y^*_m\in\Sigma$, let $u=e$ and $u'=y^*_m$ so that $\sD(x^*_m,y^*_m) = \sD(x^*_m,u')$. On the other hand, if $y^*_m\not\in\Sigma$, then by Lemma \ref{lemma:aligndecomp} $x^*_m=e$. Let $u,u'\in\Sigma^*$ be fragments of $y^*_m$ such that $y^*_m=uu'$ and $u'_1=y'_1$ (i.e. we split $y^*_m$ into a part not overlapping with $y'$ and a part overlapping with it). It is possible that $u=e$ but we always have $u'\in\Sigma^+$ by construction. By our assumption about increasing gap penalty, it follows that 
\begin{equation}
\sD(x^*_m,y^*_m) = \sD(e,uu') = \alpha(uu') \geq \alpha(u) = \sD(x^*_m,u').
\end{equation}

In a similar way, the fragment $y^*_n$ can be expressed as $y^*_n=v'v$ where $y'_{\abs{y'}}=v'_{\abs{v'}}$ (i.e. $v'$ contains the end of $y'$) and
\begin{equation}
\sD(x^*_n,y^*_n) = \sD(e,v'v) = \alpha(v'v) \geq \alpha(v) = \sD(x^*_n,v').
\end{equation}

Now, let $\check{x}=x^*_1\ldots x^*_{m-1}$, $x'=x^*_m\ldots x^*_{n}$ and $\hat{x}=x^*_{n+1}\ldots x^*_K$. Let $\check{y}=y^*_1\ldots y^*_{m-1}$ and $\hat{y}=y^*_{n+1}\ldots y^*_K$. Then, $x=\check{x}x'\hat{x}$, $y=\check{y}uy'v\hat{y}$ and
\begin{align*}
\sD(x,y) & = \left(\sum_{k=1}^K \sD^p(x^*_k,y^*_k)\right)^{1/p}\\
& = \Big(\sD^p(\check{x},\check{y})+ \sD^p(x^*_m,uu') + \sum_{k=m+1}^{n-1} \sD^p(x^*_k,y^*_k) + \sD^p(x^*_n,v'v) + \sD^p(\hat{x},\hat{y})\Big)^{1/p}\\
& \geq \Big(\sD^p(\check{x},\check{y})+ \sD^p(x^*_m,u') + \sum_{k=m+1}^{n-1} \sD^p(x^*_k,y^*_k) + \sD^p(x^*_n,v') + \sD^p(\hat{x},\hat{y})\Big)^{1/p}\\
& \geq \Big(\sD^p(\check{x},\check{y})+ \sD^p(x',y')+ \sD^p(\hat{x},\hat{y})\Big)^{1/p},
\end{align*}
since $(x^*_m,u')(x^*_{m+1},y^*_{m+1})\ldots (x^*_{n-1},y^*_{n-1})(x^*_n,v')$ is an alignment of $x'$ and $y'$ and hence the $\ell^p$ sum of distances over it is greater than $\sD^p(x',y')$ by Lemma \ref{lemma:aligndecomp}.
\end{proof}

Therefore, any $\ell^p$ edit distance with composition-length gap penalties is arbitrarily decomposable of order $p$. However, there exist arbitrarily decomposable distances that are not $\ell^p$ edit distances.

\begin{exmp}\label{ex:gaplessdist}
Let $\Sigma$ be a finite set and let $d$ be a metric on $\Sigma$. For any $n\in\N$, the \emph{generalized Hamming distance} $d_n$ on $\Sigma^n$ is given for all $x,y\in\Sigma^n$ by
\begin{equation}
d_n(x,y) = \sum_{i=1}^n d(x_i,y_i).
\end{equation}
It can be easily shown that $d_n$ is a metric. The generalized Hamming distance is a natural generalization of the Hamming distance \cite{Hamming:1950} where the distance $d$ on $\Sigma$ is the discrete metric.

Let $f:\Sigma\to\R$ be a function such that for all $a,b\in\Sigma$,
\begin{equation}
\abs{f(a)-f(b)}\leq d(a,b)\leq f(a)+f(b).
\end{equation}
It immediately follows that for every $n\in\N$ and for all $x,y\in\Sigma^n$,
\begin{equation}\label{eqn:normpairHamming}
\abs{\bar{f}(x)-\bar{f}(y)}\leq d_n(x,y)\leq \bar{f}(x)+ \bar{f}(y).
\end{equation}
Define the distance $\rho:\Sigma^*\times\Sigma^*\to\R$ by extending $d_n$ and $f$ so that for all $x,y\in\Sigma^*$,
\begin{equation}
\rho(x,y) = \begin{cases} d_n(x,y) & \text{if $\abs{x}=\abs{y}=n$,}\\ \bar{f}(x)+ \bar{f}(y) & \text{if $\abs{x}\neq\abs{y}$}. \end{cases}
\end{equation}

Using (\ref{eqn:normpairHamming}), it is easy to show that $\rho$ is a metric on $\Sigma^*$. Furthermore, $\rho$ is arbitrarily decomposable (of order $1$). Indeed, consider $x,y\in\Sigma^*$ and $y'\in\mathfrak{F}(y)$. If $\abs{x}=\abs{y}$, one immediately obtains the required decomposition using the form of the generalized Hamming distance. On the other hand, if $\abs{x}\neq\abs{y}$, we have 
\begin{equation}
\rho(x,y) = \bar{f}(x)+ \bar{f}(y) \geq \rho(x,e) + \rho(e,y')
\end{equation}
leading to the decomposition where $x'=e$ and $u$ and $v$ take all of $y$ apart from $y'$.

The metric $\rho$ (generalized to $\ell^p$ form) can be interpreted as an `ungapped' version of edit distances. Here substitutions are allowed only between sequences of equal length and the function $\bar{f}$ plays a role of gap penalty so that the only way to transform sequences of unequal length is through a full deletion followed by insertion.
\end{exmp}

\section{Global Similarity}\label{sec:globalsim}

A more common approach to sequence comparison is to maximize similarities instead of minimizing distances. In this case a similarity measure on $\Sigma$ and gap penalties are used to define the similarity between two sequences in $\Sigma^*$ using the Needleman-Wunsch \cite{NW70} or Smith-Waterman \cite{SW81} dynamic programming algorithm,  which are very similar to the algorithm for computation of edit distances described above. As in the case of $\ell^p$ edit distances above, we define sequence similarities using a recursive definition.

\begin{defn}
Let $\Sigma$ be a set, $s:\Sigma\times\Sigma\to\R$, and let $\gamma,\delta\in\Gamma(\Sigma)$. For any $x,y\in\Sigma^*$ where $m=\abs{x}$ and $n=\abs{y}$, define the \emph{global (Needleman-Wunsch) similarity} $\sS:(x,y)\mapsto\R$ using the following recursion:
\begin{enumerate}[(a)]
\item $\sS(\bar{x}_0,\bar{y}_0)=\sS(e,e)=0$,
\item $\sS(e,\bar{y}_j)=-\gamma(\bar{y}_j)=$ for all $1\leq j\leq n$,
\item $\sS(\bar{x}_i,e)=-\delta(\bar{x}_i) $ for all $1\leq i\leq m$, and 
\item for all $1\leq i\leq m$ and $1\leq j\leq n$
\begin{equation}
\begin{split}
\sS(\bar{x}_i,\bar{y}_j) =\max\Bigg\lbrace & \sS(\bar{x}_{i-1},\bar{y}_{j-1})+s(x_i,y_j), \\
& \max_{1\leq k\leq j}\left\{\sS(\bar{x}_{i},\bar{y}_{j-k})-\gamma(y_{j-k+1}\ldots y_j)\right\},\\ 
& \max_{1\leq k\leq i}\left\{\sS(\bar{x}_{i-k},\bar{y}_{j})-\delta(x_{i-k+1}\ldots x_i)\right\} \Bigg\rbrace .
\end{split}
\end{equation}
\end{enumerate}
The global similarity between the sequences $x$ and $y$ (extending $s$, $\gamma$ and $\delta$), is defined by $\sS(x,y)=\sS(\bar{x}_m,\bar{y}_n)$.
\end{defn}

The algorithm used to compute $\ell^1$ edit distance (Remark \ref{rem:WSBalgo}) can also be used for computation of similarities by setting $d=-s$, $\alpha=\gamma$ and $\beta=\delta$, computing $\sD$ for $p=1$ and then taking $\sS=-\sD$. The running time of the dynamic programming algorithm depends on the properties of gap penalties, as discussed in the previous section. Note that the gap penalty functions are positive in the case of both distances and similarities, being added in the former case and subtracted in the latter. It is also possible to express global similarity as a sum of similarities over alignments, as is done for edit distance in Lemma \ref{lemma:aligndecomp}.

\begin{exmp}\label{ex:lcs1}
It is well known \cite{Gusfield97} that the longest common subsequence problem described in Example \ref{ex:lcs} can be approached using similarities rather than distances. Let $\Sigma$ be a set and let $s$ be a scoring function on $\Sigma$ such that $s(a,b)=0$ if $a\neq b$ and $s(a,a)=1$. Let $\gamma(x)=\delta(x)=0$ for all $x\in\Sigma^+$. It is easy to confirm that for $x,y\in\Sigma^*$, $\sS(x,y)=\abs{\mathsf{LCS}(x,y)}$. 
\end{exmp}

Relations between global similarities and $\ell^1$ edit distances were explored early on \cite{SWF81,SW81a}. 

\begin{thm}[\cite{SWF81,SW81a}]
Let $\sS$ be the global similarity with respect to $s,\gamma$ and $\delta$ such that 
for all $x\in\Sigma^+$, $\gamma(x)=\delta(x)=\psi(\abs{x})$, where $\psi$ is a positive function. Consider the $\ell^1$ edit distance $\sD$, extending $d:\Sigma\times\Sigma\to\R$ and the gap penalties $\alpha$ and $\beta$ and let $s_M=\max\{s(a,b)\ |\ a,b\in\Sigma\}$. Suppose for all $a,b\in\Sigma$
\begin{equation}
d(a,b) = s_M - s(a,b),
\end{equation}
and for all $x\in\Sigma^+$,
\begin{equation}
\alpha(x)=\beta(x)=\frac{s_M\abs{x}}{2} + \psi(\abs{x}).
\end{equation}
Then, $\sS$ and $\sD$ will induce equivalent sets of optimal alignments and for all $x,y\in\Sigma^*$,
\begin{equation}
\sD(x,y) = s_M\frac{\abs{x}+\abs{y}}{2} - \sS(x,y). 
\end{equation}
\qed
\end{thm}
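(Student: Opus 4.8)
The plan is to derive the closed--form identity directly from the two dynamic programming recursions by induction on $i+j$, and to read off the equivalence of optimal alignments from the same induction. The engine of the whole argument is a single per--operation bookkeeping identity valid under the stated hypotheses: each elementary edit operation that aligns a fragment of total length $\ell_x$ of $x$ against a fragment of total length $\ell_y$ of $y$ contributes to the distance exactly $\tfrac{s_M}{2}(\ell_x+\ell_y)$ minus its contribution to the similarity. Indeed, a substitution $(a,b)$ has distance weight $d(a,b)=s_M-s(a,b)=\tfrac{s_M}{2}(1+1)-s(a,b)$; a deletion of $u\in\Sigma^+$ has distance weight $\beta(u)=\tfrac{s_M\abs{u}}{2}+\psi(\abs{u})=\tfrac{s_M}{2}(\abs{u}+0)-\bigl(-\delta(u)\bigr)$; and an insertion is symmetric. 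In every case the distance weight is $\tfrac{s_M}{2}(\ell_x+\ell_y)$ minus the similarity weight $s(a,b)$, $-\delta(u)$, or $-\gamma(u)$.

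First I would prove by induction on $i+j$ that the WSB matrix $\mD$ of Remark \ref{rem:WSBalgo} (taken with $p=1$) and the global similarity are related by $\mD_{i,j}=\tfrac{s_M(i+j)}{2}-\sS(\bar{x}_i,\bar{y}_j)$ for all $0\le i\le m$ and $0\le j\le n$. The base cases reduce to the per--operation identity for pure indels: $\mD_{0,0}=0=\tfrac{s_M\cdot0}{2}-\sS(e,e)$, while $\mD_{i,0}=\beta(\bar{x}_i)=\tfrac{s_Mi}{2}+\psi(i)=\tfrac{s_M(i+0)}{2}-\sS(\bar{x}_i,e)$ and similarly for $\mD_{0,j}$. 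For the inductive step I would substitute the inductive hypothesis into the three branches of the recursion for $\mD_{i,j}$. Using $d=s_M-s$ on the substitution branch and $\alpha(\cdot)=\tfrac{s_M}{2}\abs{\cdot}+\psi(\abs{\cdot})$ against $\gamma(\cdot)=\psi(\abs{\cdot})$ (and likewise $\beta$ against $\delta$) on the two indel branches, each branch of the defining $\min$ becomes the constant $c=\tfrac{s_M(i+j)}{2}$ minus the matching branch of the $\max$ that defines $\sS(\bar{x}_i,\bar{y}_j)$. The elementary identity $\min_\ell\{c-a_\ell\}=c-\max_\ell\{a_\ell\}$ then closes the step, and evaluating at $i=m$, $j=n$ yields $\sD(x,y)=\mD_{m,n}=\tfrac{s_M}{2}(\abs{x}+\abs{y})-\sS(x,y)$.

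For the equivalence of the optimal alignment sets I would use the branch--by--branch correspondence produced by the induction: at every cell $(i,j)$ the set of branches attaining the minimum for $\mD_{i,j}$ coincides with the set of branches attaining the maximum for $\sS(\bar{x}_i,\bar{y}_j)$, because the two branches differ only by the additive constant $c$. Hence the admissible tracebacks --- equivalently, by Lemma \ref{lemma:aligndecomp} and its similarity analogue, the optimal elements of $\mathcal{A}(x,y)$ --- are exactly the same for the distance and for the similarity. The same fact can be seen globally by summing the per--operation identity over any fixed alignment $\algn{(x^*_k,y^*_k)}_{k=1}^K$: since $\sum_k\abs{x^*_k}=\abs{x}$ and $\sum_k\abs{y^*_k}=\abs{y}$, the total distance weight and total similarity weight of that one alignment add up to the alignment--independent constant $\tfrac{s_M}{2}(\abs{x}+\abs{y})$, so minimizing one is the same as maximizing the other.

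The step I expect to demand the most care is the per--operation identity for multi--letter indels, where one must verify that the purely length--dependent term $\tfrac{s_M}{2}\abs{u}$ carried by $\alpha$ and $\beta$ is exactly what compensates, on the distance side, for the fact that on the similarity side an indel of $u$ earns only $-\psi(\abs{u})$ and no $s_M$ bonus --- as opposed to a spurious per--letter $s_M$ contribution. This is precisely where the hypotheses that $\gamma,\delta$ depend only on length and that $\alpha,\beta$ carry the extra $\tfrac{s_M}{2}\abs{\cdot}$ term are used. A secondary point worth noting is that a substitution $(a,b)$ with strongly negative $s(a,b)$ may be dominated by a delete--then--insert pair; the dynamic programming induction handles this automatically, since both recursions already select the optimal branch at each cell, so no separate treatment of suboptimal substitutions is required.
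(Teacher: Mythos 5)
Your proof is correct: the induction $\mD_{i,j}=\tfrac{s_M(i+j)}{2}-\sS(\bar{x}_i,\bar{y}_j)$ goes through branch-by-branch exactly as you describe (base cases from the pure-indel identity, inductive step from $\min_\ell\{c-a_\ell\}=c-\max_\ell\{a_\ell\}$ with $c=\tfrac{s_M(i+j)}{2}$), and either the per-cell branch correspondence or your constant-sum-over-alignments remark gives the equivalence of the optimal alignment sets, with the delete-then-insert subtlety for bad substitutions correctly handled at the DP level. The paper offers no proof of its own --- the theorem is stated with a terminal qed and attributed to \cite{SWF81,SW81a} --- and your dynamic-programming, constant-offset bookkeeping is precisely the classical argument of those references, so there is nothing further to reconcile.
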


The distance function obtained by taking a constant minus similarity is not guaranteed to satisfy any of the axioms for a metric or a quasi-metric: one problem is that the self-similarity $\sS(x,x)$ for any $x\in\Sigma^*$ is not necessarily a constant. However, under some more restrictive but frequently valid assumptions, it is possible to transform similarities into metrics or quasi-metrics. We establish the results that have interesting biological interpretations and provide the foundation for considering transformation of local similarities, discussed in Section \ref{sec:localsim}, to quasi-metrics. 

\begin{defn}
Let $X$ be a set and let $s$ be a (similarity) map $X\times X\to\R$. We call $s$ a \emph{sane} scoring function if for all $x,y\in X$,
\begin{enumerate}[(i)]
\item $s(x,x) > 0$, 
\item $s(x,x) \geq s(x,y)$, and
\item $s(x,x) \geq s(y,x)$.
\end{enumerate}
\end{defn}

Thus, a similarity map is sane if every element of $\Sigma$ `keeps its identity' with respect to it. Every point is similar to itself and this similarity cannot be smaller than similarity to any other point.

\setcounter{lemselfsimn}{\value{thm}} 
\setcounter{lemselfsims}{\value{section}} 
\begin{prop}\label{lemma:selfsim}
Let $\Sigma$ be a set and let $s:\Sigma\times\Sigma\to\R$ be a a sane scoring function over $\Sigma$. Suppose $\gamma,\delta\in\Gamma(\Sigma)$ and $\sS$ the global similarity on $\Sigma^*$ with respect to $s,\delta$ and $\gamma$. Then, $\sS$ is a sane scoring function and for all $x\in\Sigma^*$,  
\begin{equation}
\sS(x,x) = \sum_{i=1}^{\abs{x}} s(x_i,x_i).
\end{equation}
\qed
\end{prop}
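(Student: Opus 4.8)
The plan is to prove the statement in two parts. First I would establish the explicit formula $\sS(x,x) = \sum_{i=1}^{\abs{x}} s(x_i,x_i)$, and then deduce the saneness of $\sS$ from it together with the saneness of $s$. The self-similarity formula is the heart of the matter, so I would attack it first by induction on $\abs{x}$.

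For the formula, fix $x\in\Sigma^*$ and consider the diagonal recursion for $\sS(\bar{x}_i,\bar{x}_i)$. The base case $i=0$ gives $\sS(e,e)=0$, matching the empty sum. For the inductive step, I would examine the three terms in the $\max$ defining $\sS(\bar{x}_i,\bar{x}_i)$: the substitution term $\sS(\bar{x}_{i-1},\bar{x}_{i-1})+s(x_i,x_i)$, and the two gap terms of the form $\sS(\bar{x}_i,\bar{x}_{i-k})-\gamma(\cdots)$ and $\sS(\bar{x}_{i-k},\bar{x}_i)-\delta(\cdots)$. The claim is that the substitution term dominates. By the induction hypothesis the substitution term equals $\sum_{j=1}^{i} s(x_j,x_j)$. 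The main obstacle is bounding the off-diagonal quantities $\sS(\bar{x}_i,\bar{x}_{i-k})$ appearing in the gap terms, since these are not themselves covered by the inductive hypothesis (they are not diagonal entries). The clean way around this is to prove the stronger monotonicity-type statement alongside the formula, namely that for all $u,v\in\Sigma^*$ the global similarity satisfies $\sS(u,v)\leq \min\{\sS(u,u),\sS(v,v)\}$ once the diagonal formula is known, or, to avoid circularity, to first bound each gap term directly: since $\gamma,\delta$ are positive (being gap penalties) and $\sS(\bar{x}_i,\bar{x}_{i-k})\leq \sum_{j} s(x_j,x_j)$ by an auxiliary upper bound argument, each gap term falls strictly below the substitution term. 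I expect the cleanest route is a simultaneous induction establishing both the diagonal value and a general upper bound $\sS(u,v)\leq \frac{1}{2}(\sS(u,u)+\sS(v,v))$ or a componentwise bound reflecting saneness of $s$.

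Once the formula $\sS(x,x)=\sum_i s(x_i,x_i)$ is in hand, saneness of $\sS$ follows more routinely. Condition (i), $\sS(x,x)>0$, is immediate from the formula together with $s(x_i,x_i)>0$ for each letter (saneness of $s$), provided $x\neq e$; the boundary value $\sS(e,e)=0$ must be handled by the convention that saneness is asserted over $\Sigma^*$ with the empty word treated appropriately. For conditions (ii) and (iii), $\sS(x,x)\geq\sS(x,y)$ and $\sS(x,x)\geq\sS(y,x)$, I would again use Lemma-style alignment decomposition: writing $\sS(x,y)$ as a maximum over alignments of sums of $s$-scores minus gap penalties, each matched pair contributes $s(x_i,y_j)\leq s(x_i,x_i)$ by saneness of $s$, each unmatched position contributes a negative gap penalty, and the total is therefore bounded by $\sum_i s(x_i,x_i)=\sS(x,x)$. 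The asymmetric roles of $\gamma$ and $\delta$ make (ii) and (iii) mirror images of one another, each using the appropriate one of the two inequalities $s(x_i,x_i)\geq s(x_i,y_j)$ and $s(x_i,x_i)\geq s(y_j,x_i)$ from saneness of $s$.

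The single genuinely delicate point, and the one I would isolate as the main obstacle, is the inductive handling of the gap terms in establishing the diagonal formula: one must verify that inserting or deleting a block never improves on the pure diagonal matching, which relies essentially on the positivity of the gap penalties and on the saneness hypothesis that no off-diagonal substitution score exceeds a diagonal one. Everything else reduces to bookkeeping over alignments once that inequality is secured.
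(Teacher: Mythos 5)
Your proposal is correct, but it does more work than necessary, and its second half is in fact the paper's entire proof. The paper does not establish the diagonal formula by induction on the recursion at all: it invokes the alignment decomposition for similarities (Lemma \ref{lemma:selfsim2}, the analogue of Lemma \ref{lemma:aligndecomp}) and bounds an optimal alignment of $x$ with \emph{any} $y$ by
\[
\sS(x,y)\;\leq\; \sum_{k\in C} s(x^*_k,y^*_k)\;-\;\sum_{k\in D}\delta(x^*_k)\;\leq\; \sum_{i=1}^{\abs{x}} s(x_i,x_i),
\]
where $C$ indexes substitution pairs and $D$ deletion blocks: matched letters are dominated via $s(x^*_k,y^*_k)\leq s(x^*_k,x^*_k)$, insertion terms $-\gamma(y^*_k)\leq 0$ are dropped, and each deletion penalty $-\delta(x^*_k)\leq 0$ is replaced by the diagonal scores of its letters, which are strictly positive by saneness (i). Since the diagonal alignment gives $\sS(x,x)\geq\sum_i s(x_i,x_i)$, taking $y=x$ yields the formula and all three saneness conditions in one stroke. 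Your Part 2 is exactly this argument --- note only that passing from ``unmatched positions contribute negatively'' to the bound $\sum_i s(x_i,x_i)$ uses $s(x_i,x_i)>0$ for the \emph{deleted} letters of $x$, a step the paper makes explicit --- and, applied with $y=x$, it already delivers the formula, so your Part 1 is redundant. That said, Part 1 is salvageable as you sketched it: of the invariants you hedge between, the componentwise bound $\sS(u,v)\leq\sum_j s(u_j,u_j)$ is the right one for a simultaneous induction over prefix pairs (the gap terms then lose to the substitution term by positivity of $\gamma$ and $\delta$), whereas the averaged bound $\sS(u,v)\leq\frac12\bigl(\sS(u,u)+\sS(v,v)\bigr)$ is circular for exactly the reason you suspected, since $\sS(\bar{x}_i,\bar{x}_i)$ is the quantity under construction. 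Your worry about the empty word matches the paper's treatment: $\sS(e,e)=0$ is read as the empty sum, and conditions (ii) and (iii) at $e$ hold because $\sS(e,y)=-\gamma(y)\leq 0$ and $\sS(y,e)=-\delta(y)\leq 0$.
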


Proposition \ref{lemma:selfsim} and Theorem \ref{thm:editdistqm} give us a straightforward way to convert global similarities to (quasi-) metrics. Since this transformation is based on the transformations of similarity scores to distances on generators, we first introduce additional nomenclature.

\begin{defn}\label{defn:alphnomen}
Let $\Sigma$ be a set and let $1\leq p<\infty$. For a sane scoring function $s$ on $\Sigma$, we will use $\AQ{p}{s}$ to denote the distance $q$ on $\Sigma$ given by
\begin{equation}\label{eqn:aqdef}
q(a,b)=\bigl(s(a,a)-s(a,b)\bigr)^{1/p}
\end{equation}
and 
$\AM{p}{s}$ to denote the distance $d$ on $\Sigma$ given by
\begin{equation}\label{eqn:amdef}
d(a,b)=\bigl(s(a,a)+s(b,b)-s(a,b)-s(b,a)\bigr)^{1/p}.
\end{equation}
\end{defn}
Note that at this stage we do not make an assumption that $\AQ{p}{s}$ is a quasi-metric nor that $\AM{p}{s}$ is a metric.

\setcounter{corglobdistn}{\value{thm}} 
\setcounter{corglobdists}{\value{section}} 
\begin{cor}\label{cor:simdist2}
Let $\Sigma$ be a set and let $1\leq p<\infty$. Suppose $s$ is a sane scoring function on $\Sigma$, $d=\AQ{p}{s}$ is a quasi-metric on $\Sigma$ and $\gamma,\delta\in\gmcl$ such that
\begin{equation}\label{eq:gapcond1}
\gamma(b)-\gamma(a) \leq d^p(a,b)
\end{equation}
and
\begin{equation}\label{eq:gapcond2}
s(a,a)+\delta(a)-s(b,b)-\delta(b) \leq d^p(a,b).
\end{equation}
Let $\sS$ be the global similarity with respect to $s,\gamma$ and $\delta$ and let $\alpha(x)=\gamma(x)^{1/p}$ and $\beta(x)=\bigl(\sS(x,x)+\delta(x)\bigr)^{1/p}$ for all $x\in\Sigma^+$. Then, the $\ell^p$ edit distance $\sD=\DQ{p}{d}{\alpha}{\beta}$ is given for all $x,y\in\Sigma^*$ by the formula 
\begin{equation}\label{eqn:qm}
\sD(x,y) = \Bigl(\sS(x,x) - \sS(x,y)\Bigr)^{1/p}.
\end{equation}
\qed
\end{cor}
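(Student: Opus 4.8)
The plan is to proceed in two stages: first justify that $\sD$ is a genuine quasi-metric (so the notation $\DQ{p}{d}{\alpha}{\beta}$ is legitimate), and then establish the closed form (\ref{eqn:qm}) by induction on the two dynamic-programming recursions.

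First I would verify the hypotheses of Theorem \ref{thm:editdistqm}, taking its two gap penalties to be $\gamma$ and $\beta^p$. By Proposition \ref{lemma:selfsim} we have $\sS(x,x)=\sum_i s(x_i,x_i)$, so $\beta^p(x)=\sum_i s(x_i,x_i)+\delta(x)$. Since each $s(x_i,x_i)>0$ by saneness and $\delta\in\gmcl$, this $\beta^p$ is again increasing, subadditive, and of composition-length form (its composition part is $a\mapsto s(a,a)+\phi(a)$ where $\phi$ is that of $\delta$), hence $\beta^p\in\gmcl$. Condition (\ref{eq:gapcond1}) is exactly the first hypothesis of Theorem \ref{thm:editdistqm}, while $\beta^p(a)-\beta^p(b)=s(a,a)+\delta(a)-s(b,b)-\delta(b)\leq d^p(a,b)$ is precisely (\ref{eq:gapcond2}); thus $\sD=\DQ{p}{d}{\alpha}{\beta}$ is a separating quasi-metric.

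For the formula I would set $E(x,y)=\sD^p(x,y)$ and show by induction on $\abs{x}+\abs{y}$ that $E(\bar{x}_i,\bar{y}_j)=\sS(\bar{x}_i,\bar{x}_i)-\sS(\bar{x}_i,\bar{y}_j)$. Writing $\sigma_i=\sS(\bar{x}_i,\bar{x}_i)=\sum_{l\leq i}s(x_l,x_l)$, the base cases follow from parts (a)--(c) of the two recursions together with $\alpha^p=\gamma$ and $\beta^p=\sS(\cdot,\cdot)+\delta$. For the inductive step I would negate the $\max$-recursion for $\sS$ and add $\sigma_i$, turning $\sigma_i-\sS(\bar{x}_i,\bar{y}_j)$ into a $\min$ of three branches. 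Using $d^p(a,b)=s(a,a)-s(a,b)$, the substitution branch becomes $E(\bar{x}_{i-1},\bar{y}_{j-1})+d^p(x_i,y_j)$ after splitting $\sigma_i=\sigma_{i-1}+s(x_i,x_i)$; and since the insertion branch keeps the first argument fixed at $\bar{x}_i$, the quantity $\sigma_i-\sS(\bar{x}_i,\bar{y}_{j-k})$ is already $E(\bar{x}_i,\bar{y}_{j-k})$, so that branch becomes $E(\bar{x}_i,\bar{y}_{j-k})+\gamma(\cdot)$, matching the $\alpha^p=\gamma$ term of the $\sD$-recursion.

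The main obstacle is the deletion branch, where the first argument shrinks to $\bar{x}_{i-k}$ while the reference self-similarity is still $\sigma_i$, so $\sigma_i-\sS(\bar{x}_{i-k},\bar{y}_j)$ is not directly $E(\bar{x}_{i-k},\bar{y}_j)$. Here I would use the additivity from Proposition \ref{lemma:selfsim} to telescope $\sigma_i-\sigma_{i-k}=\sum_{l=i-k+1}^{i}s(x_l,x_l)=\sS(x_{i-k+1}\ldots x_i,\,x_{i-k+1}\ldots x_i)$, whence $\sigma_i-\sS(\bar{x}_{i-k},\bar{y}_j)+\delta(\cdot)=E(\bar{x}_{i-k},\bar{y}_j)+\big(\sS(\cdot,\cdot)+\delta(\cdot)\big)$, and the bracket is exactly $\beta^p$ of the deleted fragment. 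Thus this branch reproduces the $\beta^p$ term of the $\sD$-recursion. With all three branches identified and the index ranges coinciding, the $\min$ equals the $E$-recursion, closing the induction and yielding $\sD(x,y)=\bigl(\sS(x,x)-\sS(x,y)\bigr)^{1/p}$.
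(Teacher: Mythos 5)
Your proposal is correct and takes essentially the same approach as the paper's proof: you first verify $\beta^p\in\gmcl$ (via Proposition \ref{lemma:selfsim}) and the hypotheses of Theorem \ref{thm:editdistqm}, identifying (\ref{eq:gapcond2}) as the condition on the deletion penalty $\beta^p$, and then establish the formula by induction over the dynamic-programming recursions. Your branch-by-branch identification --- splitting $\sigma_i=\sigma_{i-1}+s(x_i,x_i)$ for substitutions and telescoping $\sigma_i-\sigma_{i-k}=\sS(x_{i-k+1}\ldots x_i,\,x_{i-k+1}\ldots x_i)$ for deletions --- is exactly the computation in the paper's proof, merely read in the opposite direction (negating the $\max$-recursion for $\sS$ rather than substituting into the $\min$-recursion for $\sD$).
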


As with edit distances, we now introduce a nomenclature for quasi-metrics and metrics obtained from similarities.
\begin{defn}
Let $\Sigma$ be a set and let $1\leq p<\infty$. Suppose $\sD$ is an $\ell^p$ edit distance obtained from a global similarity $\sS$ on $\Sigma^*$ using the formula (\ref{eqn:qm}) of Corollary \ref{cor:simdist2}, where $\sS$ extends $s:\Sigma\times\Sigma$ and $\gamma,\delta\in\gmcl$. We will write $\sD=\GQ{p}{s}{\gamma}{\delta}$ if $\sD$ is a quasi-metric and $\sD=\GM{p}{s}{\gamma}{\delta}$ if $\sD$ is a metric.
\end{defn}

The above nomenclature is redundant, in that every distance derived from similarities using Corollary \ref{cor:simdist2} can be expressed using the nomenclatures for edit distances and distances on $\Sigma$ introduced in Definition \ref{defn:alphnomen}. We have chosen to nevertheless introduce the additional notation in order to emphasize that the distances on the free monoid are derived from similarities and also because the computation of distances can be performed using algorithms for similarities. This notation will also be convenient in the following sections, where local similarities are discussed.


\begin{exmp}\label{ex:GMp}
Let $\Sigma$ be a set and suppose $s$ is a sane symmetric function $\Sigma\times\Sigma\to\R$ and $\gamma\in\gmcl$, depending only on length. This is a very frequent setup in pairwise comparison of DNA and protein sequences (see Section \ref{sec:matrices} below for more detailed discussion). Define for all $a,b\in\Sigma$, $s'(a,b)=2s(a,b)-s(b,b)$ and for all $x\in\Sigma^+$, $\gamma'(x)=2\gamma(x)+\sum_i s(x_i,x_i)$ and $\delta'(x)=2\gamma(x)$.

Suppose that the distance $d=\AQ{p}{s'}=\AM{p}{s}$ is a metric on $\Sigma$. Since $s$ is sane, $s'$ is also sane and we have \[\abs{s'(a,a)-s'(b,b)}= \abs{s(a,a)-s(b,b)}\leq d^p(a,b).\] Therefore, since $\gamma$ depends 
solely on length, the requirements (\ref{eq:gapcond1}) and (\ref{eq:gapcond2}) of Corollary \ref{cor:simdist2} are satisfied. Let $\sS$ be the global similarity extending $s,\gamma$ and $\gamma$ and let $\sS'$ be the global similarity extending $s',\gamma'$ and $\delta'$. We conclude that the distance $\sD$ given by
\begin{equation}
\sD(x,y) = \bigl(\sS'(x,x) - \sS'(x,y)\bigr)^{1/p} = \bigl(\sS(x,x) + \sS(y,y) - 2\sS(x,y)\bigr)^{1/p}
\end{equation}
is the metric $\GM{p}{s'}{\gamma'}{\delta'}$. This metric can also be expressed as $\DM{p}{\AM{p}{s}}{\alpha}$, where $\alpha(x)=\bigl(S(x,x)+\gamma(x)\bigr)^{1/p}$ for all $x\in\Sigma^+$.
\end{exmp}

\section{Local Similarity}\label{sec:localsim}

Local similarity is computed using the Smith-Waterman algorithm \cite{SW81}.

\begin{defn}
Let $\Sigma$ be a set, $s:\Sigma\times\Sigma\to\R$, and let $\gamma,\delta\in\Gamma(\Sigma)$. Let $x,y\in\Sigma^*$, $m=\abs{x}$ and $n=\abs{y}$. The \emph{Smith-Waterman} dynamic programming matrix, denoted $\SW(x,y,s,\gamma,\delta)$, is an $(m+1)\times(n+1)$ matrix $\mH$ with rows and columns indexed from $0$ such that $\mH_{0,0}=0$ and for all $1\leq i\leq m$ and $1\leq j\leq n$, $\mH_{i,0}=0$, $\mH_{0,j}=0$ and 
\begin{equation*}
\begin{split}
\mH_{i,j}=\max\bigg\lbrace & \mH_{i-1,j-1}+s(x_i,y_j), \max_{1\leq k\leq i}\left\{\mH_{i-k,j}-\delta(x_{i-k+1}\ldots x_i)\right\},\\ &  \max_{1\leq k\leq j}\left\{\mH_{i,j-k}-\gamma(y_{j-k+1}\ldots y_j)\right\},\quad 0 \bigg \rbrace.
\end{split}
\end{equation*}
The \emph{local similarity} between the sequences $x$ and $y$ (given $s$, $\gamma$, and $\delta$), denoted $\sH(x,y)$, is defined to be the largest entry of $\mH$, that is, $\sH(x,y)= \max_{i,j} \mH_{i,j}$.
\end{defn}

Local similarity between two words can be realized as global similarity of their fragments. 

\begin{thm}[\cite{SW81a}]\label{thm:globloc2}
Let $\Sigma$ be a set, $s:\Sigma\times\Sigma\to\R$ and $\gamma,\delta\in\Gamma(\Sigma)$. Suppose $\sS$ is a global similarity extending $s,\gamma$ and $\delta$ and $\sH$ is the local similarity with respect to $s,\gamma$ and $\delta$. Then, for all $x,y\in\Sigma^*$,
\begin{equation}
\sH(x,y)=\max_{\substack{x'\in\mathfrak{F}(x) \\ y'\in\mathfrak{F}(y)}}\sS(x',y').
\end{equation}
\qed
\end{thm}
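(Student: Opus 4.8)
The plan is to establish a sharper, cell-level identity for the Smith--Waterman matrix and then pass to the global maximum. Writing $m=\abs{x}$ and $n=\abs{y}$, I would define, for $0\le i\le m$ and $0\le j\le n$,
\[
G_{i,j}=\max\bigl\{\sS(u,v)\ :\ u\text{ a suffix of }\bar{x}_i,\ v\text{ a suffix of }\bar{y}_j\bigr\},
\]
where the empty suffix $e$ is allowed on either side, and prove $G_{i,j}=\mH_{i,j}$ for all $i,j$ by induction on $i+j$. Granting this, the theorem follows in one line: every factor of $x$ is a suffix of some prefix $\bar{x}_i$ and conversely, so taking $\max_{i,j}$ of the identity gives $\sH(x,y)=\max_{i,j}\mH_{i,j}=\max_{x'\in\mathfrak{F}(x),\,y'\in\mathfrak{F}(y)}\sS(x',y')$. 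All maxima here are over finite sets, so attainment is not an issue.

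The heart of the matter is to show that $G_{i,j}$ satisfies exactly the defining recursion of $\mH_{i,j}$. For the base cases $i=0$ or $j=0$, the only suffix on the empty side is $e$; since $\gamma,\delta$ are positive, $\sS(u,e)=-\delta(u)\le 0$ and $\sS(e,v)=-\gamma(v)\le 0$ with equality only at $e$, so $G_{i,0}=G_{0,j}=0=\mH_{i,0}=\mH_{0,j}$. For the inductive step I would prove the two inequalities separately. For the direction $G_{i,j}\ge\text{(recursion)}$, I take the suffix pair realizing each smaller term $G_{i-1,j-1}$, $G_{i-k,j}$, or $G_{i,j-k}$ and extend it on the right by $x_i$ and $y_j$, by the trailing block $x_{i-k+1}\ldots x_i$, or by $y_{j-k+1}\ldots y_j$; the corresponding branch of the global-similarity recursion then shows the extended pair scores at least the claimed term, and the constant $0$ is realized by the pair $(e,e)$.

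For the reverse inequality $G_{i,j}\le\text{(recursion)}$, I would take a suffix pair $(u^*,v^*)$ attaining $G_{i,j}$ and peel off its terminal edit operation according to which branch of the recursion for $\sS(u^*,v^*)$ achieves the maximum: when both are nonempty, the diagonal branch truncates each by one letter into suffixes of $\bar{x}_{i-1},\bar{y}_{j-1}$ (bounded by $G_{i-1,j-1}+s(x_i,y_j)$), while the gap branches strip a trailing block from one side (bounded by $G_{i-k,j}-\delta(\cdots)$ or $G_{i,j-k}-\gamma(\cdots)$); when $u^*=e$ or $v^*=e$ the score is nonpositive and dominated by the $0$ term. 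I expect this last step---correctly matching the optimal terminal operation of the global alignment to the right branch of the Smith--Waterman recursion and tracking which prefixes the truncated suffixes inhabit---to be the main bookkeeping obstacle; the remaining ingredients are merely the positivity of the gap penalties (handling the reset-to-zero) and the finiteness of the factor sets.
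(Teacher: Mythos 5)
Your proposal is correct and is essentially the argument the paper relies on: the paper omits the proof, citing Smith and Waterman \cite{SW81a}, whose justification is exactly your cell-level identity that $\mH_{i,j}$ equals the maximum of $\sS(u,v)$ over pairs of (possibly empty) suffixes $u,v$ of $\bar{x}_i,\bar{y}_j$, proved by induction and then maximized over $i,j$ using the fact that factors are precisely suffixes of prefixes. Both inductive inequalities go through as you sketch---extension via the matching branch of the $\sS$-recursion for one direction, peeling the optimal terminal edit operation for the other, with the $0$ term realized by $(e,e)$ and positivity of $\gamma,\delta$ settling the base and empty-suffix cases---so there is no gap.
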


Although conversion of global similarities to distances outlined in Section \ref{sec:globalsim} is relatively straightforward, its counterpart for local similarity is much less so. We now use the results from the previous sections to state our main result: construction of quasi-metrics which include conversions of local similarities.

\begin{thm}\label{thm:localqm}
Let $\Sigma$ be a set and let $1\leq p<\infty$. Let $\rho$ be a separating quasi-metric on $\Sigma^*$ that is arbitrarily decomposable of order $p$. Suppose $f$ is a strictly positive and $g$ is a non-negative function $\Sigma\to\R$ and $\bar{f}$ and $\bar{g}$ are the canonical homomorphic extensions of $f$ and $g$, respectively, to the free monoid $\Sigma^*$. Assume also that for all $x,y\in\Sigma^*$,
\begin{equation}\label{eqn:holderrho}
\bar{f}(x)-\bar{f}(y)\leq \rho^p(x,y)\quad \text{and}\quad \bar{g}(y)-\bar{g}(x)\leq \rho^p(x,y).
\end{equation}
Then, the function $\sQ:\Sigma^*\times\Sigma^* \to\R$ defined by 
\begin{equation}\label{eq:localqm}
\sQ(x,y) = \min_{\substack{\tilde{x}\in\mathfrak{F}(x) \\ \tilde{y}\in\mathfrak{F}(y)}} \left\{ \Bigl( \bar{f}(x) - \bar{f}(\tilde{x}) + \bar{g}(y) - \bar{g}(\tilde{y}) +\rho^p(\tilde{x},\tilde{y}) \Bigr)^{1/p} \right\}
\end{equation}
is a quasi-metric on $\Sigma^*$.
\end{thm}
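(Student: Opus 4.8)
The plan is to verify the three requirements for $\sQ$ to be a quasi-metric — non-negativity, the identity axiom (i), and the triangle inequality (ii) — reserving almost all of the work for the last. Since every $\tilde x\in\mathfrak{F}(x)$ satisfies $\bar f(x)-\bar f(\tilde x)\ge 0$ (as $f>0$), every $\tilde y\in\mathfrak{F}(y)$ satisfies $\bar g(y)-\bar g(\tilde y)\ge 0$ (as $g\ge 0$), and $\rho^p\ge 0$, each bracket in \eqref{eq:localqm} is non-negative; because $\mathfrak{F}(x)$ and $\mathfrak{F}(y)$ are finite the minimum is attained, and choosing $\tilde x=\tilde y=x$ gives $\sQ(x,x)=0$. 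Conversely, if $\sQ(x,y)=\sQ(y,x)=0$ then in the optimal term for $\sQ(x,y)$ all three summands vanish; strict positivity of $f$ forces $\bar f(x)=\bar f(\tilde x)$, hence $\tilde x=x$, while $\rho(\tilde x,\tilde y)=0$ with $\rho$ separating gives $\tilde x=\tilde y$, so $x=\tilde y\in\mathfrak{F}(y)$ is a factor of $y$. Symmetrically $y$ is a factor of $x$, and comparing lengths yields $x=y$.

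For (ii) I fix optimal pairs $(a,b)$ with $a\in\mathfrak{F}(x)$, $b\in\mathfrak{F}(y)$ realizing $\sQ(x,y)$, and $(c,d)$ with $c\in\mathfrak{F}(y)$, $d\in\mathfrak{F}(z)$ realizing $\sQ(y,z)$; abbreviate $P_1=\bar f(x)-\bar f(a)$, $R_1=\bar g(y)-\bar g(b)$, $P_2=\bar f(y)-\bar f(c)$, $R_2=\bar g(z)-\bar g(d)$. The factors $b$ and $c$ both lie in the middle word $y$; let $w$ be the string filling their positional overlap $b\cap c$ (allowing $w=e$). Since $w\in\mathfrak{F}(b)$, arbitrary decomposability \eqref{eqn:A1} applied to $\rho(a,b)$ yields $a=\check a\,\bar a\,\hat a$ and $b=\check b\,u\,w\,v\,\hat b$ with $\rho^p(a,b)\ge\rho^p(\check a,\check b)+\rho^p(\bar a,w)+\rho^p(\hat a,\hat b)$; since $w\in\mathfrak{F}(c)$, the companion property (A2) applied to $\rho(c,d)$ yields $d=\check d\,\bar d\,\hat d$ and $c=\check c\,u'\,w\,v'\,\hat c$ with $\rho^p(c,d)\ge\rho^p(\check c,\check d)+\rho^p(w,\bar d)+\rho^p(\hat c,\hat d)$. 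I then test $\sQ(x,z)$ with the fragments $\tilde x=\bar a\in\mathfrak{F}(x)$ and $\tilde z=\bar d\in\mathfrak{F}(z)$, bridging them through $w$ by the triangle inequality $\rho(\bar a,\bar d)\le\rho(\bar a,w)+\rho(w,\bar d)$ for the quasi-metric $\rho$.

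The last step assembles everything by Minkowski's inequality. Expanding $\bar f(x)-\bar f(\bar a)=P_1+\bar f(\check a)+\bar f(\hat a)$ and $\bar g(z)-\bar g(\bar d)=R_2+\bar g(\check d)+\bar g(\hat d)$, I use the hypotheses \eqref{eqn:holderrho} in the forms $\bar f(\check a)\le\bar f(\check b)+\rho^p(\check a,\check b)$ and $\bar g(\check d)\le\bar g(\check c)+\rho^p(\check c,\check d)$ (and their hatted analogues) to exchange the stray $f,g$-weights of the endpoint fragments for $\rho^p$-terms already contained in $\sQ(x,y)^p$ and $\sQ(y,z)^p$. The leftover middle-word weights are absorbed by $\bar f(\check b)+\bar f(\hat b)\le\bar f(b)-\bar f(w)\le P_2$ and $\bar g(\check c)+\bar g(\hat c)\le\bar g(c)-\bar g(w)\le R_1$, where the second inequality in each chain is the positional inclusion--exclusion $\bar f(b)+\bar f(c)-\bar f(w)\le\bar f(y)$ (resp.\ for $g$). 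Taking $\rho(\bar a,w)$ and $\rho(w,\bar d)$ as a single aligned coordinate and packaging all remaining weights ($P_1,R_1,\rho^p(\check a,\check b),\rho^p(\hat a,\hat b)$ in $\mathbf u$, and $P_2,R_2,\rho^p(\check c,\check d),\rho^p(\hat c,\hat d)$ in $\mathbf v$) as coordinates paired with $0$, the decomposition inequalities give $\norm{\mathbf u}_p\le\sQ(x,y)$ and $\norm{\mathbf v}_p\le\sQ(y,z)$, while the estimates above give $\norm{\mathbf u+\mathbf v}_p^p\ge\sQ^p(x,z)$; Minkowski then yields $\sQ(x,z)\le\norm{\mathbf u+\mathbf v}_p\le\sQ(x,y)+\sQ(y,z)$.

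I expect the accounting in this final paragraph to be the crux. Contracting the two matched regions down to the common factor $w$ manufactures the orphan fragments $\check a,\hat a,\check d,\hat d$, and the whole argument hinges on paying for their weight without overspending the budget $\sQ(x,y)+\sQ(y,z)$: conditions \eqref{eqn:holderrho} are exactly what convert $f$- and $g$-weight into $\rho^p$-cost, while the inclusion--exclusion at $w$ is what keeps the vanishing middle word $y$ from being charged twice. A subtlety to watch is that \eqref{eqn:A1} only sees $w$ as an abstract string, so the positional identity $w=b\cap c$ may be invoked only at the level of weight inequalities such as $\bar f(b)+\bar f(c)-\bar f(w)\le\bar f(y)$ (the bound $\bar f(\check b)+\bar f(\hat b)\le\bar f(b)-\bar f(w)$ holds for any occurrence); the degenerate overlap $w=e$ is handled by the same formulas.
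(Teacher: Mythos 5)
Your proposal is correct and follows essentially the same route as the paper's own proof: you pick optimal fragment pairs for $\sQ(x,y)$ and $\sQ(y,z)$, take $w$ to be the positional overlap of the two middle-word fragments (the paper's $y'$), apply \eqref{eqn:A1} and (A2) to split $\rho(a,b)$ and $\rho(c,d)$ around $w$, bridge via the triangle inequality for $\rho$ at $w$, and close with Minkowski, using \eqref{eqn:holderrho} to trade the orphan fragments' $f$- and $g$-weight for $\rho^p$-terms and the inclusion--exclusion bound $\bar f(b)+\bar f(c)-\bar f(w)\leq \bar f(y)$ (resp.\ for $g$) at the overlap, exactly as the paper does. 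Your closing remark about occurrences versus abstract strings is a sound observation (the paper glosses over it), but since $\bar f$ and $\bar g$ depend only on string content, it changes nothing in the argument.
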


\begin{proof}
Let $x,y,z\in\Sigma^*$. Since $\bar{f}(x)\geq\bar{f}(\tilde{x})$ and $\bar{g}(y)\geq\bar{g}(\tilde{y})$ for any $\tilde{x}\in\mathfrak{F}(x)$, $\tilde{y}\in\mathfrak{F}(y)$ and since $\rho$ is a quasi-metric and hence positive, it follows that $\sQ(x,y)\geq 0$. Furthermore, it is clear that $\sQ(x,x)=0$.

Suppose that $\sQ(x,y)=0$. Then, there exist $\tilde{x}\in\mathfrak{F}(x)$ and $\tilde{y}\in\mathfrak{F}(y)$ such that $\bar{f}(x) - \bar{f}(\tilde{x})+ \bar{g}(y) - \bar{g}(\tilde{y}) +\rho^p(\tilde{x},\tilde{y})=0$. Since $\rho(\tilde{x},\tilde{y})\geq 0$, $\bar{f}(x) - \bar{f}(\tilde{x})\geq 0$ and $\bar{g}(y) - \bar{g}(\tilde{y})\geq 0$ for any $\tilde{x},\tilde{y}\in\Sigma^*$, it follows that $\bar{f}(x) = \bar{f}(\tilde{x})$, $\bar{g}(y) = \bar{g}(\tilde{y})$ and $\rho(\tilde{x},\tilde{y})=0$. The first statement implies that $x=\tilde{x}$ since $f$ is a strictly positive function, while the last means that $\tilde{x}=\tilde{y}$ (since $\rho$ is a separating quasi-metric). Therefore, $\sQ(x,y)=0$ implies $x\in\mathfrak{F}(y)$. Hence, $\sQ(x,y)=\sQ(y,x)=0$ implies $x\in\mathfrak{F}(y)$ and $y\in\mathfrak{F}(x)$ and thus $x=y$.

To establish the triangle inequality suppose that
\begin{equation}
\sQ(x,y) = \Bigl( \bar{f}(x) - \bar{f}(\tilde{x})+ \bar{g}(y) - \bar{g}(\tilde{y})+ \rho^p(\tilde{x},\tilde{y}) \Bigr)^{1/p}
\end{equation}
for some $\tilde{x}\in\mathfrak{F}(x)$, $\tilde{y},\in\mathfrak{F}(y)$ and 
\begin{equation}
\sQ(y,z) = \Bigl( \bar{f}(y) - \bar{f}(\dot{y})+ \bar{g}(z) - \bar{g}(\dot{z})+ \rho^p(\dot{y},\dot{z}) \Bigr)^{1/p}
\end{equation}
for some $\dot{y}\in\mathfrak{F}(y)$ and $\dot{z}\in\mathfrak{F}(z)$. Write out $\tilde{y}=y_iy_{i+1}\ldots y_{i+m-1}$, $\dot{y}=y_jy_{j+1}\ldots y_{j+n-1}$ where $m=\abs{\tilde{y}}$, $n=\abs{\dot{y}}$, $1 \leq i\leq i+m-1\leq \abs{y}$ and $1 \leq j\leq j+n-1\leq \abs{y}$. If $\tilde{y}$ and $\dot{y}$ overlap, that is, if $i \leq j\leq m$ or $j \leq i\leq n$, let $y'$ denote the whole overlapping fragment (for example, if $i \leq j\leq i+m-1\leq i+n-1$, $y'=y_jy_{j+1}\ldots y_{i+m-1}$ -- see Figure \ref{fig:overlap}). If $\tilde{y}$ and $\dot{y}$ do not overlap or either $\tilde{y}$ or $\dot{y}$ is identity, let $y'=e$. 

\begin{figure}[htbp]
\begin{center}
\input{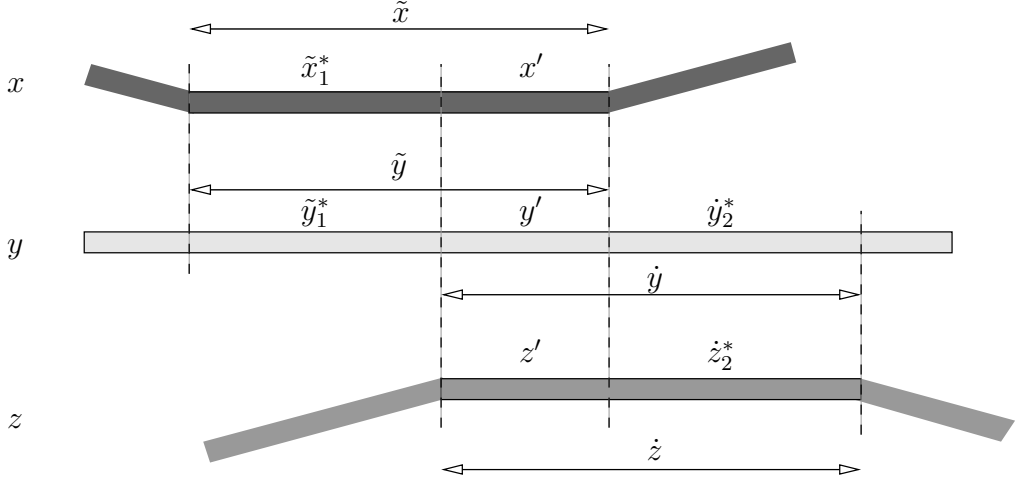}
\caption{Decomposition of $x$, $y$ and $z$. In this pattern of overlap of $\tilde{y}$ and $\dot{y}$, we have $\tilde{x}^*_2=\tilde{y}^*_2=\dot{y}^*_1=\dot{z}^*_1=e$.}
\label{fig:overlap}
\end{center}
\end{figure}

Since $\rho$ is arbitrarily decomposable of order $p$, there exist $x',\tilde{x}_1^*,\tilde{x}_2^*\in\mathfrak{F}(\tilde{x})$ such that $\tilde{x}=\tilde{x}_1^*x'\tilde{x}_2^*$ and $\tilde{y}_1^*,\tilde{y}_2^*,u,v\in\mathfrak{F}(\tilde{y})$ such that $\tilde{y}=\tilde{y}_1^*uy'v \tilde{y}_2^*$ and 
\begin{equation}
\rho(\tilde{x},\tilde{y})\geq \Big(\rho^p(\tilde{x}_1^*,\tilde{y}_1^*)+ \rho^p(x',y')+ \rho^p(\tilde{x}_2^*,\tilde{y}_2^*)\Big)^{1/p}.
\end{equation}
Furthermore, by the same assumption, there exist $z',\dot{z}_1^*,\dot{z}_2^*\in\mathfrak{F}(\dot{z})$ such that $\dot{z}=\dot{z}_1^*z'\dot{z}_2^*$ and $\dot{y}_1^*,\dot{y}_2^*,\dot{u},\dot{v}\in\mathfrak{F}(\dot{y})$ such that $\dot{y}=\dot{y}_1^*\dot{u}y'\dot{v} \dot{y}_2^*$ and 
\begin{equation}
\rho(\dot{y},\dot{z})\geq \Big(\rho^p(\dot{y}_1^*,\dot{z}_1^*)+ \rho^p(y',z')+ \rho^p(\dot{y}_2^*,\dot{z}_2^*)\Big)^{1/p}.
\end{equation}

Therefore, using the Minkowski inequality, 
\begin{align*}
\sQ(x,y)+\sQ(y,z) &\geq \Bigl( \bar{f}(x) - \bar{f}(\tilde{x})+ \bar{g}(y) - \bar{g}(\tilde{y})\\
& \qquad + \rho^p(\tilde{x}_1^*,\tilde{y}_1^*)+ \rho^p(x',y')+ \rho^p(\tilde{x}_2^*,\tilde{y}_2^*) \Bigr)^{1/p}\\
& \; + \Bigl( \bar{f}(y) - \bar{f}(\dot{y}) + \bar{g}(z) - \bar{g}(\dot{z})\\ 
& \qquad + \rho^p(\dot{y}_1^*,\dot{z}_1^*)+ \rho^p(y',z')+ \rho^p(\dot{y}_2^*,\dot{z}_2^*) \Bigr)^{1/p}\\
& \geq \Bigl( \bar{f}(x) - \bar{f}(\tilde{x})+ \bar{f}(y) - \bar{f}(\dot{y}) + \rho^p(\tilde{x}_1^*,\tilde{y}_1^*) + \rho^p(\tilde{x}_2^*,\tilde{y}_2^*)\\
& \qquad + \bar{g}(y) - \bar{g}(\tilde{y})+ \bar{g}(z) - \bar{g}(\dot{z}) + \rho^p(\dot{y}_1^*,\dot{z}_1^*) + \rho^p(\dot{y}_2^*,\dot{z}_2^*) \\
& \qquad + \bigl(\rho(x',y') + \rho(y',z')  \bigr)^p \Bigr)^{1/p}.
\end{align*}

Since $\bar{f}$ and $\bar{g}$ are additive functions that satisfy the inequality (\ref{eqn:holderrho}) and since $y'$ is the full extent of the overlap between $\tilde{y}$ and $\dot{y}$, we have 
\begin{align*}
\phantom{\geq} & \quad \bar{f}(x) - \bar{f}(\tilde{x})+ \bar{f}(y) - \bar{f}(\dot{y}) + \rho^p(\tilde{x}_1^*,\tilde{y}_1^*) + \rho^p(\tilde{x}_2^*,\tilde{y}_2^*)\\
\geq &\quad  \bar{f}(x) - \bar{f}(\tilde{x}) + \bar{f}(y) - \bar{f}(\dot{y}) + \bar{f}(\tilde{x}_1^*)-\bar{f}(\tilde{y}_1^*) + \bar{f}(\tilde{x}_2^*)-\bar{f}(\tilde{y}_2^*)\\
\geq & \quad \bar{f}(x) - \bar{f}(x') \geq 0,
\end{align*}
and
\begin{align*}
\phantom{\geq} & \quad \bar{g}(y) - \bar{g}(\tilde{y})+ \bar{g}(z) - \bar{g}(\dot{z}) + \rho^p(\dot{y}_1^*,\dot{z}_1^*) + \rho^p(\dot{y}_2^*,\dot{z}_2^*)\\
\geq &\quad  \bar{g}(y) - \bar{g}(\tilde{y})+ \bar{g}(z) - \bar{g}(\dot{z}) - \bar{g}(\dot{y}_1^*) +\bar{g}(\dot{z}_1^*) -\bar{g}(\dot{y}_2^*) +\bar{g}(\dot{z}_2^*)\\
\geq & \quad \bar{g}(z) - \bar{g}(z') \geq 0.
\end{align*}
Hence, by the triangle inequality for $\rho$, \[\sQ(x,y)+\sQ(y,z) \geq \Bigl( \bar{f}(x) - \bar{f}(x')+ \bar{g}(z) - \bar{g}(z') +\rho^p(x',z') \Bigr)^{1/p}
 \geq \sQ(x,z),\] as required. 
\end{proof}

\begin{rem}
We have shown in the separation part of the proof of Theorem \ref{thm:localqm}  above that $\sQ(x,y)=0\implies x\in\mathfrak{F}(y)$ and hence the associated partial order of the quasi-metric $Q$ is $x\leq_\sQ y \iff x\in\mathfrak{F}(y)$. If $g$ is a strictly positive function, $\sQ$ is a separating quasi-metric and the partial order is trivial: $\sQ(x,y) = 0$ implies $x=y$ and hence each point is only comparable to itself.

However, if $g$ is zero everywhere, then $\sQ(x,y) = 0$ and $x\neq y$ implies that $x$ is a factor of $y$ while $y$ is not a factor of $x$, so that $x$ and $y$ are non-trivially comparable. In this case, the quasi-metric $\sQ$ is not separating and it generalizes the substring partial order: for every $x,y\in\Sigma^*$ such that $x\in\mathfrak{F}(y)$, we have $\sQ(x,y)=0$. Therefore, $\sQ(x,y)$ can be interpreted as measuring how far is $x$ from being a factor of $y$.

Since the identity $e$ is a trivial factor of every word and $\bar{f}(e)=0$, it follows that (in the case of $g\equiv 0$) $\sQ(e,x)=0$ for every $x\in\Sigma^+$, in contrast to $\rho(e,x)\geq 0$. On the other hand,
it can be easily seen that $\sQ(x,e)=(\bar{f}(x))^{1/p}$ and hence for all $y\in\Sigma^+$, $\sQ(x,y)\leq (\bar{f}(x))^{1/p} = \sQ(x,e)$. 
\end{rem}

We now introduce a nomenclature for quasi-metrics and their associated metrics defined in Theorem \ref{thm:localqm}.
\begin{defn}
Let $\Sigma$ be a set and let $1\leq p<\infty$. Suppose $\rho$ is a separating quasi-metric on $\Sigma^*$ and $f$ and $g$ are functions $\Sigma\to\R$ that satisfy all the requirements of Theorem \ref{thm:localqm} with respect to $\rho$ and $p$. Let $\sQ$ be the quasi-metric obtained using the formula (\ref{eq:localqm}) of Theorem \ref{thm:localqm}. We will write $\sQ=\LQ{p}{\rho}{f}{g}$ if $\sQ$ is a quasi-metric and $\sQ=\LM{p}{\rho}{f}{g}$ if $\sQ$ is a metric.
\end{defn}

\begin{rem}
Edit distances described in Section \ref{sec:editdist} are always global: they measure the full cost of transformation between two words in $\Sigma^*$. Indeed, a truly `local' distance, that is the distance measured on factors of words being compared, would not satisfy the triangle inequality.

The $\mathsf{LQ}^p$ distances are slightly different. The distance $\rho$ contributes to $\sQ$ by evaluating the pair of factors $\tilde{x}$ and $\tilde{y}$ that are `closest' to each other (relative to $\bar{f}$ and $\bar{g}$), while $\bar{f}$ and $\bar{g}$ score the left-over pieces of $x$ and $y$, respectively. The extent of $\tilde{x}$ and $\tilde{y}$ relative to $x$ and $y$ depends on the exact choice of functions $f$ and $g$ and their relation to the distance $\rho$. For example, when $f$ and $g$ are very large compared to $\rho$, the factors $\tilde{x}$ and $\tilde{y}$ will approach the whole sequences $x$ and $y$. On the other hand, if $f$ and $g$ are small, they will contribute most to $\mathsf{LQ}^p(\rho,f,g)$, depending on the exact properties of $\rho$.

When both $f$ and $g$ are strictly positive, the $\mathsf{LQ}^p$ distance has a global character in that the whole of $x$ and $y$ are accounted for. If $g\equiv 0$, only $x$ contributes to the distance as a whole; the sequence $y$ contributes only through its factor closest to a factor of $x$. In general, it is possible to favor $x$ or $y$ by appropriately choosing the values of $f$ and $g$.
\end{rem}

Theorem \ref{thm:localqm} can be applied to similarities in the following manner. Let $\sQ=\LQ{p}{\rho}{f}{g}$. Define a global similarity $\sigma$ on $\Sigma^*$ by
\begin{equation}
\sigma(x,y) = \bar{f}(x) + \bar{g}(y) - \rho^p(x,y).
\end{equation}
Then,
\begin{align}
\sQ(x,y) & = \Bigl(\bar{f}(x) + \bar{g}(y) - \max_{\substack{\tilde{x}\in\mathfrak{F}(x) \\ \tilde{y}\in\mathfrak{F}(y)}} \bigl\{ \bar{f}(\tilde{x}) + \bar{g}(\tilde{y}) - \rho^p(\tilde{x}, \tilde{y}) \bigr\} \Bigr)^{1/p} \notag \\
& = \Bigl(\bar{f}(x) + \bar{g}(y) - \max_{\substack{\tilde{x}\in\mathfrak{F}(x) \\ \tilde{y}\in\mathfrak{F}(y)}} \sigma(\tilde{x}, \tilde{y}) \Bigr)^{1/p} \label{eq:localsimqm}.
\end{align}
Hence, if $\sigma$ can be computed using the Needleman-Wunsch algorithm (that is, if $\rho$ is an $\ell^p$ edit distance), then $\sQ$ can always be evaluated by using the Smith-Waterman algorithm to compute the local similarity $H(x,y)=\max\{\sigma(\tilde{x}, \tilde{y})\ |\  \tilde{x}\in\mathfrak{F}(x), \tilde{y}\in\mathfrak{F}(y)\}$ and then using Equation (\ref{eq:localsimqm}).

Since the functions $f$ and $g$ as well as the quasi-metric $\rho$ are arbitrary, the applicability of Theorem \ref{thm:localqm} to similarities is very wide. The following examples are simple corollaries of Theorem \ref{thm:localqm} and the results in Sections \ref{sec:editdist} and \ref{sec:globalsim} that have important uses in computational biology.

\begin{exmp}
Let $\Sigma$ be a finite set and suppose $s$ is a sane symmetric function $\Sigma\times\Sigma\to\R$ such that the distance $d=\AQ{1}{s}$, is a metric on $\Sigma$. Let $\mu=\min\{s(a,b)\ |\ a,b\in\Sigma\}$ and let $f(a)=s(a,a)-\mu$. It is clear from the definitions of $f$ and $d$ that $\abs{f(a)-f(b)}\leq d(a,b)\leq f(a)+f(b)$. 

Let $\rho$ be the arbitrarily decomposable metric extending the generalized Hamming distance based on $d$ and $f$ to $\Sigma^*$, as in Example \ref{ex:gaplessdist} and define $g:\Sigma\to\R$ by $\alpha(a)=s(a,a)$. By Theorem \ref{thm:localqm} we can construct the distance $\LQ{1}{\rho}{g}{g}$, which is in fact the metric $\LM{1}{\rho}{g}{g}$. The underlying similarity $\sigma$, given by Equation (\ref{eq:localsimqm}), is 
\begin{equation}
\sigma(x,y) = \begin{cases} 2s_n(x,y) & \text{if $\abs{x}=\abs{y}=n$,}\\ 2\mu & \text{if $\abs{x}\neq\abs{y}$}. \end{cases}
\end{equation}
where $s_n(x,y)=\sum_{i=1}^n s(x_i,y_i)$. In computational biology applications, $\mu$ will be negative (there will be at least two points in $\Sigma$ that are dissimilar) and hence the local similarity will always be realized by aligning the fragments of the same length. Therefore, the local similarity based on $\sigma$ is \emph{gapless similarity}, which has considerable historical importance since the first version of BLAST \cite{AGMML90} suite of tools for sequence database search based on local similarities used a heuristic that computed gapless alignments. Gapless alignments had an advantage that they could be computed faster and the statistics of similarity scores arising from them were well characterized \cite{KaA90,KA93}.
\end{exmp}

In the following examples \ref{ex:qmlocsim1}, \ref{ex:qmlocsim2} and \ref{ex:mlocsim}, we will assume that $s:\Sigma\times\Sigma\to\R$ is a sane scoring function, $\gamma,\delta\in\gmcl$ only depend on length and $S$ and $H$ are global and local similarity with respect to $s$, $\gamma$ and $\delta$, respectively. In addition, let $f(a)=s(a,a)$ for all $a\in\Sigma$.

\begin{exmp}
\label{ex:qmlocsim1}
Suppose $\AQ{1}{s}$ is a quasi-metric. By Corollary \ref{cor:simdist2}, the distance $D$ on $\Sigma^*$ given by $D(x,y)=S(x,x)-S(x,y)$ is a quasi-metric $\GQ{1}{s}{\gamma}{\delta}$. Consider the distance $Q=\LQ{1}{\GQ{1}{s}{\gamma}{\delta}}{f}{0}$. It is easy to see that $\bar{f}(x)=S(x,x)=H(x,x)$ and hence
\begin{equation}\label{eqn:sim2qm01}
Q(x,y) = S(x,x) - \max_{\substack{\tilde{x}\in\mathfrak{F}(x) \\ \tilde{y}\in\mathfrak{F}(y)}} S(\tilde{x}, \tilde{y}) = H(x,x) - H(x,y).
\end{equation}
As remarked earlier, the partial order associated with $Q$ in this case is subfragment partial order. Furthermore, the triangle inequality for $Q$ is equivalent to 
\begin{equation}\label{eqn:sim2qm011}
H(x,y) + H(y,z) \leq H(y,y) + H(x,z).
\end{equation}

If $H$ is symmetric, that is, if $s$ is symmetric and $\gamma=\delta$, we have 
\begin{equation}
Q(x,y) + H(y,y) = Q(y,x) + H(x,x),
\end{equation}
and hence $Q$ is a co-weightable quasi-metric and $-H$ is a partial metric. Note that in this case, the triangle inequality (\ref{eqn:sim2qm011}) is exactly equivalent to the triangle inequality for the symmetrization $M(x,y)=Q(x,y)+Q(y,x)$ (Example \ref{ex:mlocsim}), that is, if $M$ is a metric then $Q$ is a quasi-metric.

The fact that Equation (\ref{eqn:sim2qm01}) gives a quasi-metric was first established in \cite{Stojmirovic2004}. Indeed, the two generate equivalent neighborhoods: for any $x\in\Sigma^*$, the set of all points $y\in\Sigma^*$ such that $H(x,y)>\kappa$ is equal to the set $\{y\in\Sigma^*:Q(x,y)<\e\}$ where $\e=S(x,x)-\kappa$. 
\end{exmp}

\begin{exmp}
\label{ex:qmlocsim2}
Recall the notation from Example \ref{ex:GMp}, where $s$ is symmetric, $\gamma=\delta$, $s'(a,b)=2s(a,b)-s(b,b)$, $\gamma'(x)=2\gamma(x)+\sum_i s(x_i,x_i)$ and $\delta'(x)=2\gamma(x)$. Let $S'$ and $H'$ be global and local similarity with respect to $s'$, $\gamma'$ and $\delta'$, respectively.

Suppose that $\AQ{p}{s'}$ is a quasi-metric (equivalently that $\AM{p}{s}$ is a metric) and consider the quasi-metric $Q'=\LQ{p}{\GM{p}{s'}{\gamma'}{\delta'}}{f}{0}$. By the argument of Example \ref{ex:qmlocsim1}, 
\begin{equation}
Q'(x,y)=\bigl(H'(x,x)-H'(x,y)\bigr)^{1/p} =\bigl(S(x,x)-H'(x,y)\bigr)^{1/p}.
\end{equation}
However, in this case the local similarity
\begin{equation}
H'(x,y)= \max_{\tilde{x}, \tilde{y}} S'(\tilde{x}, \tilde{y}) = \max_{\tilde{x}, \tilde{y}} (2S(x,y) - S(y,y))
\end{equation}
is clearly asymmetric. This similarity score has, to our knowledge, never been previously used for sequence comparison, although it can be easily computed using Smith-Waterman algorithm (provided that the particular implementation used allows composition-length gap penalties). It has the advantage that it is still true that $H'$ is topologically equivalent to $Q'$ and that $Q'$ corresponds to the subfragment partial order. 

The asymmetry of $H'$ may be exploited to favor the integrity of one sequence over the other in biological sequence alignments. For example, in cases where translated DNA sequences are compared to proteins, it is desirable to emphasize the protein sequence, which is `real' (experimentally established), at the expense of translated DNA sequences, which is only hypothetical. We intend to evaluate the broad utility of using variants of $H'$ and $Q'$ for biological sequence comparisons in a subsequent publication.
\end{exmp}

\begin{exmp}
\label{ex:mlocsim}
Making the same assumptions as in Example \ref{ex:qmlocsim2} above, consider the metric $M=\LM{p}{\GM{p}{s'}{\gamma'}{\delta'}}{f}{f}$. It is easy to see that $M$ is indeed a metric given by
\begin{equation}\label{eqn:sim2m01}
M(x,y)=\bigl(H(x,x)+H(y,y)-2H(x,y)\bigr)^{1/p}.
\end{equation}

Equation (\ref{eqn:sim2m01}), for $p=1$, was extensively considered in computer science and computational biology. The $\mathsf{LCS}$ similarities (Examples \ref{ex:lcs} and \ref{ex:lcs1}) are related to distances in this way. Linial \emph{et al.} \cite{LLTY97} proposed using $M$ as a distance on sets of protein sequences but did not explicitly prove it was a metric. Spiro and Macura \cite{SM04} have given the conditions under which $M$ is indeed a metric. Since $H$ is here assumed symmetric, this result is equivalent to $\LQ{1}{\GQ{1}{s}{\gamma}{\delta}}{f}{0}$ being a quasi-metric (Example \ref{ex:qmlocsim1}), established by Stojmirovi\'c \cite{Stojmirovic2004} under slightly different assumptions. Itoh \emph{et al.} \cite{IGAK05} derived the same result as a corollary of a more general inequality for similarities that relied on the finiteness of the generator alphabet. In a poster abstract \cite{Fischer02}, Fischer proposed the general form of Equation (\ref{eqn:sim2m01}) with arbitrary $p$ as a way to convert similarities to distances and stated without proof the conditions for $M$ to be a metric.

For $p=2$, the form of Equation (\ref{eqn:sim2m01}) resembles the formula for the canonical metric in inner-product vector spaces. In this case, $x$ and $y$ would be vectors and $H$ would be a positive-definite bilinear form.
\end{exmp}

Theorem \ref{thm:localqm} can be applied in the context of free abelian monoids with no change. We illustrate this by a very simple, followed by a more biologically relevant example.

\begin{exmp}
Let $\Sigma$ be the set of all prime numbers and let $\Sigma^*$ be the free abelian monoid over $\Sigma$ under multiplication (i.e. the set of natural numbers $\N$). Let $d_\dagger$ be a discrete metric on $\Sigma$ (here we implicitly assume that $\Sigma$ includes $1$) and let $f(a)=1$ and $\alpha(a)=0$ for all $a\in\Sigma$. Let $\rho$ be the Sellers-Graev metric extension of $d_\dagger$ to $\N$. It is clear that $\rho(x,y)$ is just the number of different prime factors between $x$ and $y$ (the non-matching prime factors are matched to $1$) and that it is arbitrarily decomposable. Hence, we can apply Theorem \ref{thm:localqm} to obtain a quasi-metric $\sQ$, so that $\sQ(x,y)$ is the number of prime factors of $x$ not in common to $y$. The global similarity $\sigma$ on $\N$ (here equivalent to local similarity), given by  $\sigma(x,y)=\bar{f}(x)-\rho(x,y)$ evaluates to the number of common prime factors (excluding $1$) between $x$ and $y$.
\end{exmp}

\begin{exmp}
Let $\Sigma$ be a finite set and let $A(\Sigma^k)$ denote the free abelian monoid generated by the set of all words of length exactly $k$ (we will call $z\in\Sigma^k$ a $k$-tuple). Members of $A(\Sigma^k)$ are therefore multisets of $k$-tuples. Now consider the same structure as in the previous example.

Let $d_\dagger$ be a discrete metric on $\Sigma^k\cup\{e\}$ and let $f(a)=1$ and $\alpha(a)=0$ for all $a\in\Sigma$. Let $\rho$ be the Sellers-Graev metric extension of $d_\dagger$ to $A(\Sigma^k)$. All requirements of Theorem \ref{thm:localqm} still apply. The value $\sQ(x,y)$ is the number of $k$-tuples that are contained in $x$ but not in $y$ and the global (and local) similarity $\sigma$ gives the number of $k$-tuples common to both $x$ and $y$.

The similarity $\sigma$ has been used in computational biology as a computationally inexpensive approximation of global similarity between two sequences \cite{KMKM02,Edgar04}. Each sequence is mapped to $A(\Sigma^k)$ by taking the multiset of all of its (overlapping) $k$-tuples and the similarity $\sigma$ is used to approximate the global similarity $S$.
\end{exmp}

\section{Scoring Functions on Generators}\label{sec:matrices}

In the previous sections we have made no assumption on the set of generators $\Sigma$ and all our results apply to arbitrary sets. However, as we noted before, the principal objects 
motivating our results are sets of biological sequences and profiles derived from them. The former two sets are finite and therefore the scoring functions over them are given by \emph{score matrices}. We therefore proceed to discuss the similarity and distance measures on the sets of nucleotides, amino acids and profiles and their applicability to our theory.

\subsection{Nucleotide scoring matrices}

The nucleotide alphabet consists of only 4 letters (\texttt{A}, \texttt{C}, \texttt{G}, and \texttt{T}) and the score matrices most frequently used for database search depend on only two parameters, for scoring a match or a mismatch of two nucleotides. For example, the \textsl{blastn} program, a part of the BLAST \cite{altschul97gapped} suite of tools for sequence database search based on local similarities, which searches a DNA database with a DNA sequence as a query, uses the scoring matrix of the form 
\begin{equation}
s(a,b)=\begin{cases}
        5 & \text{if $a=b$}\\
        -4 & \text{if $a\neq b$}.\\
       \end{cases}
\end{equation}
The above scoring function is obviously sane and the distance $d=\AQ{p}{s}$ is a discrete metric for any $1\leq p<\infty$. Therefore, all match/mismatch scoring schemes satisfy the requirements of Theorem \ref{thm:editdistqm} and its corollaries. 

More complex score matrices, where transitions (changes \texttt{C}$\leftrightarrow$\texttt{T} and \texttt{A}$\leftrightarrow$\texttt{G}) have different scores than transversions (all other mutations) have been proposed for improving the accuracy of database searches \cite{SGA91,Durbin:1998}. It is easy to show that the distance $\AQ{1}{s}$ (and hence  $\AQ{p}{s}$ for all $p$) will still satisfy the triangle inequality and hence be a metric if the value of distance associated by transition is not greater than twice the transversion distance. Since the likelihood and hence the similarity score of transition is larger than that of transversion, this condition is very likely to be satisfied in practice. For example, all scoring matrices examined by States \emph{et al.} \cite{SGA91} satisfy this condition and are sane.

\subsection{Amino acid scoring matrices}\label{subsec:BLOSUM}

Unlike the nucleotide alphabet, the standard amino acid alphabet consists of 20 amino acids of markedly different chemical properties and structural roles. Hence, the regularly used 
amino acid scoring matrices are much more complex than the matrices over nucleotides discussed above. Many amino acid scoring matrices were developed over the years for various purposes, including sequence similarity search, structural prediction and phylogenetic analysis \cite{NKK88,TK96,KOK99}. Most of them arise from analysis of sets of peptide sequences known to be to a certain extent related. 

Dayhoff \emph{et al.} \cite{Dayhoff:1978} proposed in 1970s the family of scoring matrices called PAM, which were based on a Markov model of evolution of proteins. PAM matrices were the original standard choice for sequence comparison. Several improved versions of PAM matrices were constructed later \cite{Gonnet:1992,JTT92,MSV02,MV00,VST03}, in order to address some of the deficiencies arising from lack of sufficient data at the time of the construction of the original PAM family. For PAM-like matrices, the larger the number appended to their name (such as PAM-$n$), the sequences to be compared are assumed to have more diverged in evolution.

Presently, the most widely used family of scoring matrices is BLOSUM, derived by Henikoff and Henikoff in 1992 \cite{Henikoff:1992} using an empirical procedure. In particular, the BLOSUM62 matrix has long been believed to be among the best performing matrices for general sequence similarity search \cite{HH93} and is used as default by BLAST (more specifically, the \textsl{blastp} program). In contrast to the PAM-like matrices, the larger the number appended to the name of a BLOSUM matrix, the more the sequences to be compared are assumed to be closely related.

In addition to the above mentioned families, some score matrices were constructed specifically for searches involving transmembrane regions of proteins \cite{JTT94,MRR01,NHH00} while others were derived from structural alignments in order to improve sensitivity of searches involving distantly related proteins \cite{PDS00,KQG00,BC01}.
\begin{table}[!hbt]
\begin{center}
{\small
\begin{tabular}{lcrrr}
Matrix & Reference & $\mathsf{AQ}^1$ & $\mathsf{AQ}^2$ & $\mathsf{AM}^2$ \\ \hline
PAM40 & \cite{Dayhoff:1978} & 28 & 0 & 0 \\
PAM120 & \cite{Dayhoff:1978} & 88 & 0 & 0 \\
PAM250 & \cite{Dayhoff:1978} & 168 & 21 &  0 \\
GONNET & \cite{Gonnet:1992} & 144 & 0 &  0 \\
BLOSUM45 & \cite{Henikoff:1992}& 0 & 0 & 0 \\
BLOSUM50 & \cite{Henikoff:1992}& 0 & 0 & 0 \\
BLOSUM62 & \cite{Henikoff:1992}& 0 & 0 & 0 \\
BLOSUM80 & \cite{Henikoff:1992}& 0 & 0 & 0 \\
JTT & \cite{JTT92}& 170 & 34 & 34 \\
JTTtm & \cite{JTT94}& 214 & 18 & 20 \\
BC0030 & \cite{BC01}& 214 & 12 & 4 \\
SDM & \cite{PDS00}& 134 & 0 & 0 \\
HSDM & \cite{PDS00}& 142 & 6 & 0 \\
OPTIMA & \cite{KQG00}& 74 & 15 & 2 \\
PHAT75/73 & \cite{NHH00}& 6 & 0 & 0 \\
VTML160 & \cite{MSV02}& 28 & 0 & 0 \\
VTML250 & \cite{MSV02}& 100 & 14 & 0 \\
dist.20comp & \cite{CB05}& 0 & 0 & 0 \\
PMB120 & \cite{VST03}& 0 & 0 & 0 \\
PMB250 & \cite{VST03}& 8 & 3 & 0 \\ \hline
\end{tabular}
}
\end{center}
\caption{Number of triples of amino acids failing the triangle inequality for distances derived from various symmetric score matrices. All the matrices are considered over the standard (20 letter) amino acid alphabet (that is, excluding non-standard letters representing more than one amino acid). Due to symmetry of similarity scores, the triangle inequalities for $\mathsf{AQ}^1$ and $\mathsf{AM}^1$ are equivalent and the column for $\mathsf{AM}^1$ is omitted. 
} \label{tbl:BLOSUMqm} 
\end{table}

Table \ref{tbl:BLOSUMqm} shows the numbers of violations of the triangle inequality for the distances $\mathsf{AQ}^1$, $\mathsf{AQ}^2$ and $\mathsf{AM}^2$ obtained from several common (symmetric) score matrices. The matrices featured in Table \ref{tbl:BLOSUMqm} are all sane and represent only a very small sample of all existing amino acid score matrices that are most frequently used and cited. 

All of the scoring matrices mentioned so far were symmetric with the exception of the SLIM family \cite{MRR01} for comparison of transmembrane proteins. Yu \emph{et al.} \cite{YWA03} recently proposed a concept of \emph{compositionally adjusted score matrices}, which are asymmetric and which can be derived from symmetric score matrices by considering different background frequencies of amino acids in the first vs. the second sequence. The rationale for compositional adjustment is that some proteins, especially from organisms with biased amino acid usage, can have significantly different background frequencies of amino acids, than the ones used to construct the standard matrices. It was demonstrated in \cite{YWA03} that using compositional adjustment results in improvement of sensitivity of pairwise sequence comparison.

\begin{table}[!hbt]
\begin{center}
{\small
\begin{tabular}{l|rrrr|rrrr}
 & \multicolumn{4}{c}{\textit{C. tetani}} & \multicolumn{4}{c}{\textit{M. tuberculosis}} \\
Matrix & $\mathsf{AQ}^1$ & $\mathsf{AQ}^2$ & $\mathsf{AM}^1$ & $\mathsf{AM}^2$ & $\mathsf{AQ}^1$ & $\mathsf{AQ}^2$ & $\mathsf{AM}^1$ & $\mathsf{AM}^2$ \\ \hline
PAM40  & 36 & 0 & 36 & 0  & 40 & 0 & 40 & 0  \\
PAM120  & 129 & 0 & 126 & 0  & 113 & 0 & 116 & 0  \\
GONNET  & 152 & 0 & 152 & 0  & 151 & 0 & 150 & 0  \\
BLOSUM45  & 0 & 0 & 0 & 0  & 4 & 0 & 4 & 0  \\
BLOSUM50  & 1 & 0 & 2 & 0  & 3 & 0 & 2 & 0  \\
BLOSUM62  & 1 & 0 & 2 & 0  & 1 & 0 & 2 & 0  \\
BLOSUM80  & 0 & 0 & 0 & 0  & 0 & 0 & 0 & 0  \\
JTT  & 353 & 11 & 378 & 0  & 320 & 5 & 330 & 0  \\
BC0030  & 234 & 3 & 244 & 4  & 249 & 2 & 272 & 4  \\
SDM  & 132 & 0 & 132 & 0  & 132 & 8 & 132 & 0  \\
HSDM  & 144 & 1 & 144 & 0  & 143 & 0 & 142 & 0  \\
OPTIMA  & 77 & 4 & 78 & 2  & 78 & 2 & 80 & 2  \\
PHAT75/73  & 10 & 0 & 12 & 0  & 19 & 0 & 26 & 0  \\
VTML160  & 32 & 0 & 34 & 0  & 42 & 0 & 50 & 0  \\
dist.20comp  & 0 & 0 & 0 & 0  & 0 & 0 & 0 & 0  \\
PMB120  & 0 & 0 & 0 & 0  & 0 & 0 & 0 & 0  \\
\hline
\end{tabular}
}
\end{center}
\caption{Number of triples of amino acids failing the triangle inequality for various compositionally adjusted asymmetric score matrices. Each matrix was adjusted from a symmetric matrix by using the composition of either \textit{C. tetani} or \textit{M. tuberculosis} proteome as the first set of frequencies, together with the implicit amino acid frequencies from BLOSUM62 as the second set of frequencies.
} \label{tbl:compadj} 
\end{table}

Table \ref{tbl:compadj} shows the violations of the triangle inequality for the distances obtained from some of the matrices from Table \ref{tbl:BLOSUMqm}, adjusted to take into account the amino acid compositions of proteomes of bacterial species \textit{Clostridium tetani} and \textit{Mycobacterium tuberculosis}. Both of these species have compositionally biased genomes and proteomes. The matrices were constructed using a Newtonian procedure described in \cite{YA05} and \cite{AWGAMSY05}. The background distribution for the second sequence comes from the original BLOSUM62 matrix. In this way, the constructed similarity scores and distances can be used to compare sequences known to come from the above organisms to sequences from general datasets.

Table \ref{tbl:BLOSUMqm} and Table \ref{tbl:compadj} demonstrate that most scoring matrices, both symmetric and asymmetric, can be converted to the $\mathsf{AM}^2$ metric while many can be converted to $\mathsf{AQ}^2$ quasi-metric as well. In contrast, most matrices fail the triangle inequalities for $\mathsf{AQ}^1$ and $\mathsf{AM}^1$. Therefore, our generalization of edit distances and related sequence similarities to $\ell^p$ form allows us to use a much wider class of matrices to construct (quasi-) metrics on the set of all protein sequences. This is in contrast to the $\ell^1$-type results from the previous work \cite{Stojmirovic2004,SM04}, which only apply to the BLOSUM family plus a few more similar matrices.

\subsection{Profiles}\label{subsec:profiles}

Recall (Example \ref{exmp:profiles}) that given a set $\Sigma$, a profile over $\Sigma$ is a word in the free monoid $\mathcal{M}(\Sigma)^*$, that is, a finite sequence of finite measures over $\Sigma$. In biological applications, $\Sigma$ is finite and therefore a profile $x$ can be treated as a sequence of vectors $\vec{x}_i\in\R^n$, where $n=\abs{\Sigma}$. For each $i$, the vector $\vec{y}=\vec{x}_i$ has non-negative entries. In some applications, it is further assumed that $\vec{y}$ is a probability distribution, that is, that $\sum_j y_j = 1$.

In biological context, profiles represent generalized sequences over the basic alphabet $\Sigma$ where each position has a probability distribution of letters instead of a single letter. They were originally introduced by Gribskov \emph{et al.} \cite{Gribskov:1987} in order to improve sensitivity of homology search by considering the information contained in multiple alignments of related proteins to query sequence databases. To do so, a \emph{Position Specific Score Matrix} or PSSM, which gives a similarity score for each letter in $\Sigma$ for each position in the query profile, is constructed. The profile-sequence comparison using PSSM can then be performed using the dynamic programming algorithms such as Needleman-Wunsch or Smith-Waterman. Profiles can also be used directly in probabilistic Hidden Markov Models \cite{Eddy98}. Profile-based homology searches are widely used and have been shown in general to be more sensitive than sequence database searches with normal sequences as queries \cite{altschul97gapped,Eddy98}.

Profiles can also be compared to other profiles as members of the free monoid $\mathcal{M}(\Sigma)^*$ using distances or similarities discussed in Sections \ref{sec:editdist}, \ref{sec:globalsim} and \ref{sec:localsim}: all that is necessary is to assign a distance or similarity measure on $\mathcal{M}(\Sigma)$ and gap penalties. Many scoring schemes were proposed in due course and we present only a few examples below. For a more detailed overview we refer the reader to the papers of Edgar and Sj\"{o}lander \cite{ES04} and Marti-Renom \emph{et al.} \cite{MMS04}, which study their performance for aligning distantly related protein sequences.

Let $\vec{x},\vec{y}\in\R^n$ be two measures in $\mathcal{M}(\Sigma)$ and let $\prf{s}$ and $\hat{d}$ denote a similarity and a distance function, respectively. The symbol $\norm{\cdot}$ denotes the $\ell^2$ norm on $\R^n$.

\begin{exmp}
The simplest similarity score between two vectors, used in CLUSTALW software for multiple sequence alignment \cite{THG94} (see also Section \ref{sec:future}) is to compute their average over a score matrix $s$ on $\Sigma$: 
\begin{equation}
\prf{s}(\vec{x},\vec{y}) = \sum_i \sum_j x_iy_j s(x_i,y_j).
\end{equation}
In general, $\AQ{p}{\prf{s}}$ and $\AM{p}{\prf{s}}$ are not a quasi-metric or a metric, respectively.
\end{exmp}

\begin{exmp}
A natural candidate for similarity score between two vectors $\vec{x}$ and $\vec{y}$ is their dot product, used in \cite{RJLG00}:
\begin{equation}
\prf{s}(\vec{x},\vec{y}) = \vec{x}\cdot \vec{y} = \sum_j x_jy_j.
\end{equation}
Clearly, $\prf{d}=\AM{2}{\prf{s}}$ is the standard Euclidean distance:
\begin{equation}
\prf{d}(\vec{x},\vec{y}) = \norm{\vec{x}-\vec{y}} = \sqrt{\sum_j (x_j-y_j)^2}.
\end{equation}
\end{exmp}

\begin{exmp}
A variation of the above is the correlation coefficient or cosine of the angle between two vectors used in the LAMA algorithm \cite{Pietrokovski96}:
\begin{equation}
\prf{s}(\vec{x},\vec{y}) = \frac{\vec{x}\cdot \vec{y}}{\norm{x}\norm{y}} = \frac{\sum_j x_jy_j}{\sqrt{\sum_j x^2 \sum_j y^2_j}}.
\end{equation}
Here $\prf{d}=\AM{2}{\prf{s}}$ can be easily shown to satisfy the triangle inequality. In general, $\prf{d}$ does not separate points, but if $\vec{x}$ and $\vec{y}$ are assumed to be probability vectors, then $\prf{d}$ is indeed a metric.
\end{exmp}

\begin{exmp}
The Jensen-Shannon divergence between two probability vectors $\vec{x}$ and $\vec{y}$, denoted $D^{\textrm{JS}}$ is given by
\begin{equation}
D^{\textrm{JS}}(\vec{x},\vec{y}) = \frac12\sum_i\left[x_i\log\frac{2x_i}{x_i+y_i} + y_i\log\frac{2y_i}{x_i+y_i} \right].
\end{equation}
While $D^{\textrm{JS}}$ is not a metric, taking the square root, that is, letting
$\prf{d}(\vec{x},\vec{y}) = \sqrt{D^{\textrm{JS}}(\vec{x},\vec{y})}$ does give a metric \cite{ES03}. Yu \cite{Yu07} proposed using this metric to compare probability distributions that are components of profiles while Yona and Levitt \cite{YL02} used the following similarity score:
\begin{equation}
\prf{s}(\vec{x},\vec{y}) = \biggl(1-D^{\textrm{JS}}(\vec{x},\vec{y})\biggr)\left(1+D^{\textrm{JS}}\left(\frac{\vec{x}+\vec{y}}{2},\bpi\right)\right),
\end{equation}
where $\bpi$ denotes a background distribution.
\end{exmp}

The above examples suggest that $\ell^2$-type edit distances and global and local similarities arising from them, could be appropriate for profile-profile comparisons.

\section{Applications and Future Directions}\label{sec:future}

Our results provide a way to construct a large variety of metrics and quasi-metrics on free semigroups. In particular, we are able to extend the conversion of similarity score matrices into alphabet (generator) distances, to the corresponding conversions of sequence similarities, global and local, to sequence distances. Hence, we are able to treat biosequence sets as spaces with geometry. The metric and quasi-metric structures provide a much richer framework than the topologies induced from them: for biosequences, $\Sigma$ is finite and hence all topologies induced from $\ell^p$ edit distances or local similarity (quasi-) metrics are equivalent to the discrete topology.

In terms of statistical characterization, since we allowed more general gap penalties and asymmetric scoring matrices, the established statistics for similarities may not be fully transfered to our general distances. For this reason, to fully exploit our general formulation, it is important to further elaborate on its statistical aspects, which is beyond the scope of the current paper.

Apart from setting a general geometric framework for sequence comparison, most direct applications to biology involve clustering. For example, global clustering of protein sequences has been performed \cite{LLTY97,SLL02}, using the metric from Example \ref{ex:mlocsim} and other derivations from similarity score. However, these works did not consider quasi-metrics and partial orders that could provide a more accurate view of the global protein sequence space. Applications to indexing and multiple sequence alignment, which we discuss in more detail below, can also be considered as clustering.

\subsubsection*{Indexing for database search}\label{subsec:indexing}

One of the principal motivations for establishing the triangle inequalities for similarity scores in the literature \cite{Stojmirovic2004,SM04,IGAK05} was to accelerate similarity search of large DNA and protein sequence databases. It has been identified early on that using the full Needleman-Wunsch and Smith-Waterman dynamic programming algorithms to search sequence datasets by sequentially scanning all entries is prohibitively computationally expensive and heuristic methods such as FASTA \cite{PL88} and BLAST \cite{AGMML90,altschul97gapped} were developed. While very fast, these methods are not \emph{consistent} \cite{PS06}, that is, they are not guaranteed to retrieve all true neighbors of a given query point. Furthermore, both FASTA and BLAST sequentially scan all of the sequences in the dataset being searched. The idea behind using the triangle inequalities for accelerating similarity search is to use the intrinsic `geometry' of the dataset and the space it lies in to construct an \emph{indexing scheme} \cite{H-K-P,PS06}, a structure that allows fully retrieving a similarity query without scanning the whole dataset. A large amount of effort was spent on producing efficient indexing structures, principally concentrating on datasets that are equipped with a metric or a vector space structure: a good overview is by Hjaltason and Samet in \cite{HjSa03}.

Let $X\subset\Sigma^*$ be a finite sequence dataset. A range query of $X$ based on local similarity $H$ (depending on the score matrix $s$ and gap penalties $\gamma$ and $\delta$), centered at the query point $x\in\Sigma^*$ with threshold $\kappa$ is the set
\begin{equation}
\mathscr{Q}_H(x,\kappa)=\{y\in X: H(x,y)\geq \kappa\}.
\end{equation}
We will now consider some ways to construct an indexing structures that accelerate retrieval of $\mathscr{Q}_H$.

The first way is to consider biological sequences purely as strings with simple similarity measures often related to Levenstein distance and use string-based techniques such as hashing \cite{GiWaWaVo00,Buhler01,Kent:2002,TCOT03} or suffix arrays \cite{HuAtIr01,Hu04}. Such indexing schemes are often not consistent but may show good performance on datasets of DNA sequences where the similarity measure is very simple. For proteins, one approach was to construct a biologically meaningful metric on the amino acid alphabet and use the edit distance extension of it for sequence comparison and indexing \cite{MXSM03}. This has an advantage that existing methods for indexing metric spaces can be directly applied but ignores the need for local similarities, which cannot be converted into edit distances.

The other approach, investigated by Spiro and Macura \cite{SM04} and more thoroughly implemented by Itoh \emph{et al.} \cite{IGAK05}, was to use the inequality (\ref{eqn:sim2qm011}), which holds for some amino acid scoring matrices (Table \ref{tbl:BLOSUMqm}). The idea is to cluster proteins according to the local similarity score $H$ or associated metric $\LM{1}{\GM{1}{s'}{\gamma'}{\delta'}}{f}{f}$ (where similarity score is assumed symmetric and $f(a)=s(a,a)$ -- see Example \ref{ex:mlocsim}), and then, when searching, to compare the query sequence to centers of clusters first and only scan those clusters that overlap the query.

Note that while the neighborhoods of $\LQ{1}{\GQ{1}{s}{\gamma}{\delta}}{f}{0}$ are indeed equivalent to queries $\mathscr{Q}_H$, this is no longer true for neighborhoods of its metric symmetrization $\LM{1}{\GM{1}{s'}{\gamma'}{\delta'}}{f}{f}$. Hence, direct indexing with respect to the local similarity metric may not be optimal. Furthermore, not all similarity score matrices give rise to $\ell^1$ quasi-metrics $\AQ{1}{s}$ (Table \ref{tbl:BLOSUMqm}). Many more can be converted to $\AM{2}{s}$ and hence give rise to metrics $\LM{2}{\GM{2}{s'}{\gamma'}{\delta'}}{f}{f}$. Profile-profile comparison methods, relying on inner product for the distance between two distributions, also naturally induce $\ell^2$-type distances. None of the methods described above can efficiently cope with this situation and yet there exists a simple way to convert such similarity queries to a sequence of metric queries.

Suppose $M(x,y)=\bigl( H(x,x)+H(y,y)-2H(x,y)\bigr)^{1/p}$ is a metric for some symmetric local similarity $H$. Let 
\begin{equation}
Z_\xi = \{x\in\Sigma^*: H(x,x)=\xi \}.
\end{equation}
We call each set $Z_\xi$ a \emph{fiber} and it is obvious that $\Sigma^*$ is a disjoint union of all $Z_\xi$, where $\xi$ runs over the range of self-similarities. For our applications, this range is finite because the sequence datasets are finite. Now consider a query $\mathscr{Q}_H(x,\kappa)$ and let $\e(x,\xi,\kappa)=(H(x,x)+\xi-2\kappa)^{1/p}$. It is easily established that
\begin{equation}\label{eq:fibers}
\mathscr{Q}_H(x,\kappa) = \bigsqcup_\xi  \cball{x}{\e(x,\xi,\kappa)} \big\vert Z_\xi,
\end{equation}
where $\cball{x}{\e(x,\xi,\kappa)}=\{y\in X: M(x,y)\leq \e(x,\xi,\kappa)\}$ (the closed ball of radius $\e(x,\xi,\kappa)$ about $x$).

Hence, to process each local similarity range query, it is sufficient to process a metric range query $\cball{x}{\e(x,\xi,\kappa)}$ on each fiber and then collect the results. For practical purposes the fibers need to be reasonably large and small in number, but that is often true because the score matrices are integer-valued. Adjacent fibers that contain too few points can be merged if care is exercised when collecting final results. Each fiber can be indexed separately as a metric space with one of the many existing access methods \cite{HjSa03} or by using a new technique. The decomposition (\ref{eq:fibers}) was proposed in $\ell^1$ form in \cite{SP07} for indexing similarity-based range queries and was in turn inspired by decomposition of weightable quasi-metric spaces into fibers used by Vitolo \cite{Vi99}.

Therefore, using fibers, a consistent indexing scheme can be constructed for most existing local similarity measures on biological sequences and profiles. The performance of such schemes is not guaranteed -- it depends on the exact geometry of sequence datasets \cite{Pe00,PS06}. Hence, our theoretical results represent only the first step towards efficient and consistent access methods that are to be achieved in future.

An alternative to fiber decomposition for cases where $\LQ{p}{\GQ{p}{s}{\gamma}{\delta}}{f}{0}$ is truly a quasi-metric is to use the quasi-metric directly to index the dataset. Pestov and Stojmirovi\'c \cite{PS06} proposed the concept of a quasi-metric tree: a general  indexing scheme for retrieving queries based on quasi-metrics and established conditions for its consistency. Note that in the $\ell^1$ case, using inequality (\ref{eqn:sim2qm011}) directly, as in \cite{IGAK05}, produces a structure that is equivalent to a quasi-metric tree.

\subsubsection*{Progressive multiple sequence alignment}\label{subsubsec:guidetree}

Multiple sequence alignment (MSA) is among the most valuable tools in computational biology. It allows extracting and representing biologically important commonalities from sets of sequences \cite{Gusfield97}. Construction of multiple alignments from sets of sequences has been extensively researched and a variety of techniques have been proposed \cite{Gusfield97,Durbin:1998}. The full dynamic programming algorithm for MSA is NP-complete \cite{WJ94} and therefore heuristics are commonly employed. One popular heuristic approach is progressive alignment \cite{FD87}. First, a guide tree is constructed from pairwise dissimilarities between sequences. Then, larger and larger groups of sequences are aligned in pairwise manner, following the branching order of the guide tree from the leaves towards the root. A number of popular software packages for MSA of protein sequences \cite{THG94,KMKM02,Edgar04,LS05} implement this heuristics.

The success of this approach, greedy in nature, crucially depends on a faithful and evolutionarily meaningful construction of a guide tree for the set of sequences to be aligned. When constructing their guide trees, most methods do not use a true metric to compute pairwise distances \cite{THG94,KMKM02,Edgar04}, while those that do \cite{LS05}, use the Levenstein distance, overlooking the similarities between closely related amino acids. 

There are advantages in using the true metric distance for agglomerative hierarchical clustering. For example, the triangle inequality ensures the transitivity of closeness in distance measure. Furthermore, when this is the case, it was shown that the difference between a hierarchical clustering and the optimal $k$-clustering is bounded \cite{DL05}. 

In this paper we have demonstrated a way to construct a large class of (quasi-)metric distances from similarity scores that also naturally account for functional relatedness among amino acids. The quasi-metrics developed in Section \ref{sec:localsim} can also provide a rigorous way to naturally interpolate from global to local similarities in constructing guide trees. 

\subsubsection*{Embeddings into vector spaces}

Let $\qsum{\sQ}$ be the metric symmetrizing the quasi-metric $\sQ=\LQ{p}{\rho}{f}{0}$, where $\qsum{Q}(x,y)=\sQ(x,y)+\sQ(y,x)$ for all $x,y\in\Sigma^*$. Observe that by the triangle inequality for $\sQ$,
\begin{equation}
(\bar{f}(x))^{1/p} - (\bar{f}(y))^{1/p}=\sQ(x,e)-\sQ(y,e)\leq \sQ(x,y),
\end{equation}
and hence, letting $\alpha(x)=(\bar{f}(x))^{1/p}$, we have
\begin{equation}\label{eqn:normedpair}
\abs{\alpha(x)-\alpha(y)}\leq \qsum{\sQ}(x,y)\leq \alpha(x)+\alpha(y).
\end{equation}

Flood, in his PhD thesis \cite{Fl75} and a followup paper \cite{Fl84} called any pair $(\rho,\alpha)$, where $\rho$ is a metric and $\alpha$ a positive function, which satisfies the above property (\ref{eqn:normedpair}), a \emph{normed pair}. The triple $(X,\rho,\alpha)$, where $(\rho,\alpha)$ is a norm pair on $X$, is called a \emph{normed set} \cite{Pestov94}. Every normed space $(E,\norm{.}_E)$ naturally becomes the normed set by setting $\rho(x,y)=\norm{x-y}_E$ and $\alpha(x)=\norm{x}_E$. 

For any two normed sets $X_1=(X_1,\rho_1,\alpha_1)$ and $X_2=(X_2,\rho_2,\alpha_2)$, a function $\pi:X_1\to X_2$ is called a contraction if for all $x\in X_1$, 
\begin{equation}
\alpha_2(\pi(x))\leq\alpha_1(x) 
\end{equation}
and for all $x,y\in X_1$,
\begin{equation}
\rho_2(\pi(x),\pi(y))\leq \rho_1(x,y).
\end{equation}
According to a result of Flood \cite{Fl75,Fl84} (see also \cite{Pestov94}), the normed pair structure supports a natural embedding of $X$ into a Banach space with a certain universal property.
\begin{thm}[\cite{Fl75,Fl84,Pestov94}]
Let $X=(X,\rho,\alpha)$ be a normed set. There exists a complete normed space $B(X)$ and an embedding of $X$ into $B(X)$ as a normed subset such that every contraction $\pi$ from $X$ to a complete normed space $E$ lifts to a unique linear contraction $\bar{\pi}\colon B(X)\to E$. The pair consisting of $B(X)$ and embedding $X\hookrightarrow B(X)$ is essentially unique. Elements of $X$ are linearly independent. \qed
\end{thm}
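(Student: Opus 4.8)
The plan is to construct $B(X)$ as the \emph{free Banach space on the normed set} $X$, i.e. as a completion of the free vector space on $X$ under the largest norm compatible with the normed-set data; the universal property then falls out of the construction. Concretely, I would let $V$ denote the vector space of finitely supported formal linear combinations $\sum_i c_i[x_i]$ of symbols $[x]$, one for each $x\in X$, and let $K$ be the set of \emph{scalar contractions}, i.e. functions $\pi\colon X\to\R$ satisfying $\abs{\pi(x)}\le\alpha(x)$ and $\abs{\pi(x)-\pi(y)}\le\rho(x,y)$ for all $x,y\in X$ (these are exactly the contractions $X\to(\R,\abs{\cdot})$, viewing $\R$ as the one-dimensional normed set). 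Setting $\langle\pi,\sum_i c_i[x_i]\rangle=\sum_i c_i\pi(x_i)$, I would define $\norm{v}_B=\sup_{\pi\in K}\abs{\langle\pi,v\rangle}$, verify it is a seminorm, pass to the quotient by its kernel, and let $B(X)$ be the metric completion, with embedding $\iota\colon x\mapsto[x]$.

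\textbf{Key steps.} First I would show that $\iota$ is a normed-subset embedding, namely $\norm{[x]}_B=\alpha(x)$ and $\norm{[x]-[y]}_B=\rho(x,y)$. The inequalities $\le$ are immediate from the definition of $K$; for the reverse inequalities the single test function $\pi_x(z)=\alpha(x)-\rho(x,z)$ does both jobs at once, since $\pi_x(x)=\alpha(x)$ and $\pi_x(x)-\pi_x(y)=\rho(x,y)$, and one checks $\pi_x\in K$ using precisely the normed-pair inequalities (\ref{eqn:normedpair}): $\pi_x$ is $1$-Lipschitz because $\rho$ is a metric, and $-\alpha(z)\le\alpha(x)-\rho(x,z)\le\alpha(z)$ follows from $\rho(x,z)\le\alpha(x)+\alpha(z)$ and $\abs{\alpha(x)-\alpha(z)}\le\rho(x,z)$. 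Next I would establish the universal property. Given a contraction $\pi\colon X\to E$ into a complete normed space, define $\bar{\pi}\colon V\to E$ by $\bar{\pi}([x])=\pi(x)$ and extend linearly. To see $\bar{\pi}$ is a contraction, fix $v=\sum_i c_i[x_i]$, choose by Hahn--Banach a functional $\psi\in E^\ast$ with $\norm{\psi}\le1$ and $\psi(\bar{\pi}(v))=\norm{\bar{\pi}(v)}_E$, and observe that $\psi\circ\pi\in K$ (its $\alpha$- and $\rho$-bounds come from $\abs{\psi(w)}\le\norm{w}_E$); hence $\norm{\bar{\pi}(v)}_E=\langle\psi\circ\pi,v\rangle\le\norm{v}_B$. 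Thus $\bar{\pi}$ extends to a linear contraction on $B(X)$, and it is unique because $\iota(X)$ spans a dense subspace. Essential uniqueness of the pair $(B(X),\iota)$ is then the standard universal-arrow argument: any two solutions admit mutually inverse canonical contractions.

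\textbf{Linear independence and the main obstacle.} Finally, linear independence of $\{[x]:x\in X\}$ reduces to showing that for any distinct $x_1,\dots,x_n$ the functionals $\langle\pi,\cdot\rangle$ with $\pi\in K$ span the dual of $\R^n$; equivalently, that every admissible datum $(t_1,\dots,t_n)$ --- one with $\abs{t_i}\le\alpha(x_i)$ and $\abs{t_i-t_j}\le\rho(x_i,x_j)$ --- extends to some $\pi\in K$, so that a small neighbourhood of $0$ is attainable and hence $\norm{v}_B>0$ whenever the coefficients are not all zero. This is the crux of the whole argument and the step I expect to be the main obstacle: it is a \emph{constrained extension lemma} requiring a function that is simultaneously $\rho$-nonexpansive and pointwise dominated by $\alpha$. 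I would build it from a McShane-type formula such as $\pi(z)=\min_i\bigl(t_i+\rho(x_i,z)\bigr)$ and then verify the $\alpha$-envelope bound using (\ref{eqn:normedpair}); showing that the Lipschitz extension can always be chosen to respect the envelope $\alpha$ is exactly where the defining inequalities of a normed pair are indispensable, and it is also what guarantees $\norm{\cdot}_B$ is a genuine norm rather than a seminorm. The remaining items --- completeness of $B(X)$ and the fact that $\iota(X)$ sits inside it as a normed subset --- are then routine.
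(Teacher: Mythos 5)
The paper offers no proof of this theorem: it is imported verbatim from Flood and Pestov \cite{Fl75,Fl84,Pestov94} and stated with the proof-box attached, so your proposal can only be measured against the construction in those sources --- and that is exactly what you have reconstructed. The dual description (norm of a formal combination as the supremum over scalar contractions, i.e.\ the Arens--Eells/Lipschitz-free construction relativized to the envelope $\alpha$), the single test function $\pi_x(z)=\alpha(x)-\rho(x,z)$ certifying both $\norm{[x]}_B=\alpha(x)$ and $\norm{[x]-[y]}_B=\rho(x,y)$, the Hahn--Banach argument for the universal property, and the reduction of linear independence to a finite constrained extension lemma are all the standard route; your proposal is correct in structure and in all of these steps.

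One concrete repair is needed at the step you yourself flagged as the crux. The unmodified McShane formula $\pi(z)=\min_i\bigl(t_i+\rho(x_i,z)\bigr)$ is $1$-Lipschitz and interpolates the data, but it does \emph{not} satisfy the upper envelope bound: from $t_i\leq\alpha(x_i)$ you only get $\pi(z)\leq\min_i\bigl(\alpha(x_i)+\rho(x_i,z)\bigr)$, and (\ref{eqn:normedpair}) bounds this quantity from \emph{below} by $\alpha(z)$, not from above; concretely, with a single datum $t_1=\alpha(x_1)>0$ and $\rho(x_1,z)=\alpha(x_1)+\alpha(z)$ one gets $\pi(z)=2\alpha(x_1)+\alpha(z)>\alpha(z)$. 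The fix is a one-line truncation: set $\pi(z)=\min\bigl\{\alpha(z),\ \min_i\bigl(t_i+\rho(x_i,z)\bigr)\bigr\}$. This is still $1$-Lipschitz because $\alpha$ itself is ($\abs{\alpha(z)-\alpha(w)}\leq\rho(z,w)$), it still takes the value $t_j$ at $x_j$ since $t_j\leq\alpha(x_j)$, the bound $\pi\leq\alpha$ now holds by construction, and the lower bound is automatic from the other half of (\ref{eqn:normedpair}): $t_i+\rho(x_i,z)\geq-\alpha(x_i)+\bigl(\alpha(x_i)-\alpha(z)\bigr)=-\alpha(z)$, while $\alpha(z)\geq-\alpha(z)$ as $\alpha\geq 0$. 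Finally, your ``small neighbourhood of $0$ is attainable'' step silently uses that $\alpha$ is \emph{strictly} positive and that $\rho(x_i,x_j)>0$ for distinct points; the latter is part of $\rho$ being a metric, but the former is genuinely necessary (if $\alpha(x)=0$ then $\norm{[x]}_B=0$, so $[x]=0$ in $B(X)$ and linear independence fails) and should be stated as part of the normed-set hypothesis, reading ``positive'' in Flood's definition as strict positivity.
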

Therefore, spaces of biological sequences with local similarity metric may be founded upon Banach (or even Hilbert) spaces. However, this result carries only theoretical significance at this point and cannot be directly used for clustering or indexing since the free Banach space $B(X)$ is too large (it is not desirable that all sequences are linearly independent). Nevertheless, the same idea can be used to embed similarity score matrices into finite dimensional normed spaces and hence consider biological sequences as free semigroups over $\R^n$.

\section*{Acknowledgments}

A.S. is very grateful to Vladimir Pestov who as his Ph.D. and postdoctoral supervisor read and commented on the early versions of this manuscript. A.S. was supported by the University of Ottawa research funds. This work was supported by the Intramural Research Program of the National Library of Medicine at National Institutes of Health. 

%
%
\appendix

\newpage
\section{Proofs}\label{app:proofs}

\subsection{General conditions for edit quasi-metrics}\label{app:editdist}

\begin{thm}\label{thm:editdistqm1}
Let $\Sigma$ be a set, let $1\leq p<\infty$ and suppose $d$ is a separating quasi-metric on $\Sigma$, $\alpha,\beta\in\Gamma$ and $\sD$ is the $\ell^p$ edit distance  extending $d$,$\alpha$ and $\beta$. In addition, assume that for all $a,b\in\Sigma$, $u,v,x\in\Sigma^*$,
\begin{itemize}
\item[(W1)] $\displaystyle d^p(a,b)+\beta^p(ubv) \geq \beta^p(uav)$;
\item[(W2)] $\displaystyle d^p(a,b)+\alpha^p(uav) \geq \alpha^p(ubv)$;
\item[(W3)] $\displaystyle \beta^p(uv)+ \beta^p(x)\geq \beta^p(uxv)$;
\item[(W4)] $\displaystyle \alpha^p(uv)+ \alpha^p(x)\geq \alpha^p(uxv)$;
\item[(W5)] $\displaystyle \beta^p(uxv)+\alpha^p(x)\geq \beta^p(uv)$;
\item[(W6)] $\displaystyle \alpha^p(uxv)+ \beta^p(x)\geq \alpha^p(uv)$;
\item[(W7)] $\displaystyle \alpha^p(ux)+ \beta^p(xv)\geq \alpha^p(u)+ \beta^p(v)$;
\item[(W8)] $\displaystyle \beta^p(ux)+\alpha^p(xv)\geq \beta^p(u)+ \alpha^p(v)$.
\end{itemize}
Then, $\sD$ is a separating quasi-metric on $\Sigma^*$.
\end{thm}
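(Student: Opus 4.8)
The plan is to verify the two defining properties of a separating quasi-metric in turn: first that $\sD(x,y)=0$ forces $x=y$, and then the triangle inequality $\sD(x,z)\le\sD(x,y)+\sD(y,z)$. Non-negativity of $\sD$ and $\sD(x,x)=0$ are immediate from the recursion, since $d$, $\alpha^p$ and $\beta^p$ take non-negative values and a letter-to-letter (matching) alignment costs nothing. Establishing $\sD(x,y)=0\Rightarrow x=y$ gives both the separating property and the first quasi-metric axiom at once.

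For separation I would invoke the alignment decomposition of Lemma~\ref{lemma:aligndecomp}. If $\sD(x,y)=0$, an optimal alignment $\algn{(x_k^*,y_k^*)}_{k=1}^K$ realizes the minimum with every summand $\sD^p(x_k^*,y_k^*)=0$. An indel column contributes $\alpha^p(\cdot)$ or $\beta^p(\cdot)$, which is strictly positive on $\Sigma^+$; hence no indel can occur and every column is a letter substitution, so $\abs{x}=\abs{y}$. For a single pair $(a,b)$ the recursion yields $\sD^p(a,b)=\min\{d^p(a,b),\,\beta^p(a)+\alpha^p(b)\}$, and as the second term is positive we must have $d(a,b)=0$, whence $a=b$ because $d$ separates points. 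Therefore $x=y$.

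The core of the proof is the triangle inequality, which I would obtain by \emph{composing} optimal alignments through the middle word $y$. Fix optimal alignments $A$ of $(x,y)$ and $B$ of $(y,z)$ supplied by Lemma~\ref{lemma:aligndecomp}; they induce two factorizations of $y$ whose common refinement lets one read off a single alignment of $(x,z)$. Every letter of $y$ carries a role in $A$ (target of a substitution, or part of an inserted block) and a role in $B$ (source of a substitution, or part of a deleted block), and the composite column over $(x,z)$ is determined by this pair of roles. A substitution-against-substitution column becomes a substitution $x_i\to z_l$, controlled \emph{linearly} by $d(x_i,z_l)\le d(x_i,y_j)+d(y_j,z_l)$ on $\Sigma$, while the other combinations create, merge, split, or cancel gaps. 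The hypotheses (W1)--(W8) are precisely the eight inequalities bounding the $p$-th power cost of each composite gap column: (W1) and (W2) push a substitution through a deletion, respectively an insertion; (W3)--(W6) merge two nested gaps or partially cancel a gap against an oppositely oriented one; and (W7)--(W8) cancel overlapping content when an insertion in one alignment meets a deletion in the other. Each composite substitution then satisfies $c_k\le a_k+b_k$ and each composite gap satisfies $c_k^p\le a_k^p+b_k^p$ (hence also $c_k\le a_k+b_k$), where $a_k$ and $b_k$ collect the costs drawn from $A$ and $B$; since $c_k\le a_k+b_k$ in all cases, the Minkowski inequality gives
\[
\sD(x,z)\le\Bigl(\sum_k c_k^p\Bigr)^{1/p}\le\Bigl(\sum_k(a_k+b_k)^p\Bigr)^{1/p}\le\Bigl(\sum_k a_k^p\Bigr)^{1/p}+\Bigl(\sum_k b_k^p\Bigr)^{1/p}=\sD(x,y)+\sD(y,z),
\]
the first inequality using Lemma~\ref{lemma:aligndecomp} to dominate $\sD(x,z)$ by the cost of the composite alignment.

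The main obstacle is the bookkeeping of this composition. Because gap penalties are charged per block rather than per letter, a single inserted or deleted block of $y$ may straddle several operations of the other alignment, so one cannot simply process $y$ letter by letter; moreover, the final step requires that the costs be apportioned so that $\sum_k a_k^p=\sD^p(x,y)$ and $\sum_k b_k^p=\sD^p(y,z)$ exactly, which means each original operation's cost must be assigned to a single composite column and never split (splitting a block would point the subadditivity of $\alpha^p,\beta^p$ the wrong way). The general forms of (W1)--(W8), with arbitrary flanking words $u,v$, are exactly what permit bounding a composite gap by an \emph{entire} gap block of one alignment together with the overlapping operations of the other, keeping blocks intact. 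Verifying that the refinement can always be organized so that these block-preserving bounds apply uniformly across every pattern of overlap, and that the linear substitution bounds and the $p$-th power gap bounds then feed correctly into one application of Minkowski's inequality, is the delicate part of the argument.
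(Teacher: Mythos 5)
Your separation argument coincides with the paper's: both deduce from Lemma~\ref{lemma:aligndecomp} that a zero-cost optimal alignment can contain no indel (gap penalties are strictly positive on $\Sigma^+$) and no nontrivial substitution (since $d$ is separating), and your reading $\sD^p(a,b)=\min\{d^p(a,b),\beta^p(a)+\alpha^p(b)\}$ of the recursion is correct. For the triangle inequality, however, you take a genuinely different route. The paper does \emph{not} compose two alignments through $y$: it inducts on prefix triples $(\bar{x}_i,\bar{y}_j,\bar{z}_k)$ under the lexicographic order, splitting into cases according to which branch of the recursion attains $\sD(\bar{x}_i,\bar{y}_j)$ and $\sD(\bar{y}_j,\bar{z}_k)$. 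The only hard cases are a trailing deletion of a block of $y$ (resp.\ a trailing insertion), handled by decomposing $\sD(\bar{x}_i,\bar{y}_j)$ at a split point via Lemma~\ref{lemma:aligndecomp}, applying (W7) (resp.\ (W8)) once for a gap straddling the block boundary, and iterating the consequence of (W1), (W3), (W5) (resp.\ (W2), (W4), (W6)) to absorb the \emph{entire} block into a single $\beta^p$ (resp.\ $\alpha^p$) term before invoking the induction hypothesis. The induction only ever absorbs one trailing gap block of one alignment at a time, which is exactly what keeps the bookkeeping tractable and avoids the common refinement altogether.

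Your global composition can be pushed through, but one of your stated invariants fails as written, at precisely the point you flag as delicate. You insist that each original operation's cost ``must be assigned to a single composite column and never split,'' yet the (W7)/(W8) straddle scenarios inherently violate this: (W7) converts the pair $\alpha^p(ux)+\beta^p(xv)$ jointly into the residuals $\alpha^p(u)+\beta^p(v)$, which belong to \emph{different} composite columns, so the costs of the two overlapping operations are transformed together with output distributed across columns. Hence the per-column attribution $c_k\le a_k+b_k$ with $\sum_k a_k^p=\sD^p(x,y)$ and $\sum_k b_k^p=\sD^p(y,z)$ is not immediate; nor can you retreat to a purely $p$-th-power global bound, because substitution columns are controlled only linearly ($d(x_i,z_l)\le d(x_i,y_j)+d(y_j,z_l)$, not in $p$-th powers), so Minkowski with a valid columnwise split is genuinely needed. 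The gap is repairable: for a straddle cluster $C$ with joint bound $\sum_{k\in C}c_k^p\le A+B$, where $A$ and $B$ collect the $p$-th-power costs drawn from the two alignments, set $\lambda=A/(A+B)$ and $a_k^p=\lambda c_k^p$, $b_k^p=(1-\lambda)c_k^p$; then $\sum_{k\in C}a_k^p\le A$, $\sum_{k\in C}b_k^p\le B$, and $a_k+b_k=\bigl(\lambda^{1/p}+(1-\lambda)^{1/p}\bigr)c_k\ge c_k$ since $\lambda^{1/p}+(1-\lambda)^{1/p}\ge\lambda+(1-\lambda)=1$ for $p\ge 1$. With that patch your single application of Minkowski goes through; the paper's inductive organization is, in effect, a way of never confronting this attribution problem in the first place.
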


\begin{proof}
Let $x,y,z\in\Sigma^*$. Clearly, $\sD(x,y)$ is non-negative since all of $d$, $\alpha$ and $\beta$ are non-negative. Also, $\sD(x,x)\leq \left(\sum_i d^p(x_i,x_i)\right)^{1/p} = 0$. Now suppose $\sD(x,y)=0$. Applying Lemma \ref{lemma:aligndecomp}, we have
\begin{equation*}
\sD(x,y) = \left(\sum_{k=1}^K \sD^p(x^*_k,y^*_k)\right)^{1/p}=0,
\end{equation*}
where $x=x^*_1x^*_2\ldots x^*_K$, $y=y^*_1y^*_2\ldots y^*_K$, implying $\sD(x^*_k,y^*_k)=0$ for all $k$ since $\sD$ is non-negative. Hence, $x^*_k=y^*_k$ for all possible cases of $x^*_k$ and $y^*_k$ because $d$ is a separating quasi-metric and $\alpha$ and $\beta$ are strictly positive on $\Sigma^+$.

We will demonstrate the triangle inequality by relying on the Minkowski inequality:
for any two sequences $a$ and $b$ of real numbers and $1\leq p<\infty$,
\begin{equation}
\left(\sum_i \abs{a_i+b_i}^p\right)^{1/p}\leq \left(\sum_i \abs{a_i}^p\right)^{1/p} + \left(\sum_i \abs{b_i}^p\right)^{1/p}.
\end{equation}
We show by induction that for all $0\leq i\leq\abs{x}$, $0\leq j\leq\abs{y}$ and $0\leq k\leq\abs{z}$,
\begin{equation}
\sD(\bar{x}_i,\bar{y}_j) + \sD(\bar{y}_j,\bar{z}_k)\geq \sD(\bar{x}_i,\bar{z}_k).
\end{equation}
Let $\preceq$ denote a partial order on $\N\times\N\times\N$ where $(i_0,j_0,k_0)\preceq (i,j,k)$ if $i_0\leq i$ or $i_0=i$ and $j_0\leq j$ or $i_0=i$ and $j_0=j$ and $k_0\leq k$ (lexicographic order). The relation $\preceq$ is a well-founded partial order of type $\omega^3$ (in this case our induction is finite) and our claim is trivially true for $(0,0)$. Assume it is true for all $(i',j',k')\prec (i,j,k)$. There are nine possibilities in total to consider for $(i',j',k')= (i,j,k)$.

\textbf{Case 1:} Suppose $\sD(\bar{x}_i,\bar{y}_j)= \left(\sD^p(\bar{x}_{i-1},\bar{y}_{j-1})+ d^p(x_i,y_j)\right)^{1/p}$ and $\sD(\bar{y}_j,\bar{z}_k)= (\sD^p(\bar{y}_{j-1},\bar{z}_{k-1})+d^p(y_j,z_k))^{1/p}$. By the Minkowski inequality, our induction hypothesis and the triangle inequality on $d$ we have
\begin{align*}
\sD(\bar{x}_i,\bar{y}_j) + \sD(\bar{y}_j,\bar{z}_k) &= \phantom{+} \left(\sD^p(\bar{x}_{i-1},\bar{y}_{j-1})+ d^p(x_i,y_j)\right)^{1/p}\\ &\phantom{=}\  + (\sD^p(\bar{y}_{j-1},\bar{z}_{k-1})+d^p(y_j,z_k))^{1/p}\\
& \geq \big( (\sD(\bar{x}_{i-1},\bar{y}_{j-1})+  \sD(\bar{y}_{j-1},\bar{z}_{k-1}) )^p\\ & \phantom{=} \ + (d(x_i,y_j)+d(y_j,z_k))^p \big)^{1/p}\\
& \geq \big(\sD^p(\bar{x}_{i-1},\bar{z}_{k-1}) + d^p(x_i,z_k)\big)^{1/p}\\
& \geq \sD(\bar{x}_i,\bar{z}_k).
\end{align*}

\textbf{Case 2:} Suppose $\sD(\bar{y}_j,\bar{z}_k)= (\sD^p(\bar{y}_{j},\bar{z}_{k-t})+\alpha^p(z_{k-t+1}\ldots z_k))^{1/p}$ for some $1\leq t< k$ (this covers three possibilities). By the Minkowski inequality and the induction hypothesis we have
\begin{align*}
\sD(\bar{x}_i,\bar{y}_j) + \sD(\bar{y}_j,\bar{z}_k) &= \sD(\bar{x}_i,\bar{y}_j) + (\sD^p(\bar{y}_{j},\bar{z}_{k-t})+ \alpha^p(z_{k-t+1}\ldots z_k))^{1/p}\\
& \geq \big( (\sD(\bar{x}_i,\bar{y}_j)+\sD(\bar{y}_{j},\bar{z}_{k-t}))^p +  \alpha^p(z_{k-t+1}\ldots z_k) \big)^{1/p}\\
& \geq \big(\sD^p(\bar{x}_{i},\bar{z}_{k-t}) + \alpha^p(z_{k-t+1}\ldots z_k) \big)^{1/p}\\
& \geq \sD(\bar{x}_i,\bar{z}_k).
\end{align*}

\textbf{Case 3:} Suppose $\sD(\bar{x}_i,\bar{y}_j)= (\sD^p(\bar{x}_{i-t},\bar{y}_{j})+ \beta^p(x_{i-t+1}\ldots x_i))^{1/p}$ for some $1\leq t< i$ (this covers additional two possibilities). Then, in similar manner as in Case 2,
\begin{equation*}
\sD(\bar{x}_i,\bar{y}_j) + \sD(\bar{y}_j,\bar{z}_k) \geq \big( \sD^p(\bar{x}_{i-t},\bar{z}_{k}) +\beta^p(x_{i-t+1}\ldots x_i) \big)^{1/p} \geq \sD(\bar{x}_i,\bar{z}_k),
\end{equation*}
by the Minkowski inequality and the induction hypothesis. 

\textbf{Case 4:} Suppose $\sD(\bar{y}_j,\bar{z}_k)= \big(\sD^p(\bar{y}_{j-t},\bar{z}_{k})+ \beta^p(y_{j-t+1}\ldots y_j)\big)^{1/p}$, for some $1\leq t< j$ (this covers additional two possibilities). Using Lemma \ref{lemma:aligndecomp}, let $0\leq q\leq j$ be the smallest integer not larger than $t$ such that 
\begin{equation*}
\sD(\bar{x}_i,\bar{y}_j) = \left( \sD^p(\bar{x}_r,\bar{y}_{j-q}) + \sum_{m=1}^K \sD^p(u^*_m,v^*_m) \right)^{1/p},
\end{equation*}
for some $1\leq r\leq i$, where $u=x_{r+1}\ldots x_i=u^*_1\ldots u^*_K$, and $v=y_{j-q+1}\ldots y_j=v^*_1\ldots v^*_K$. Note that $q<t$ if and only if $\sD(\bar{x}_r,\bar{y}_{j-q})= \big(\sD^p(\bar{x}_r,\bar{y}_{j-q'})+\sD^p(e, y_{j-q'+1}\ldots y_{j-q})\big)^{1/p}$, where $q<t<q'\leq j$. In that case, by our assumption (W7) and by Minkowski inequality,
\begin{align*}
\sD^p(\bar{x}_r,\bar{y}_{j-q}) +  \beta^p(y_{j-t+1}\ldots y_j) & = \sD^p(\bar{x}_r,\bar{y}_{j-q'}) + \alpha^p(y_{j-q'+1}\ldots y_{j-q}) \\
& \phantom{\ }  +\beta^p(y_{j-t+1}\ldots y_j)\\
& \geq  \sD^p(\bar{x}_r,\bar{y}_{j-q'}) + \alpha^p(y_{j-q'+1}\ldots y_{j-t})\\
& \phantom{\ } + \beta^p(y_{j-q+1}\ldots y_j) \\
& \geq \sD^p(\bar{x}_r,\bar{y}_{j-t}) + \beta^p(v).
\end{align*}
Of course, the same inequality trivially holds if $t=q$.

Observe that assumptions (W1), (W3) and (W5) imply that for any $1\leq m\leq K$ and any $w_1,w_2\in\Sigma^*$,
\begin{equation}
\sD^p(u^*_m,v^*_m) + \beta^p(w_{1}v^*_{m}w_2)\geq \beta^p(w_{1}u^*_{m}w_{2}),
\end{equation}
and hence
\begin{align*}
\sum_{m=1}^K \sD^p(u^*_m,v^*_m) + \beta^p(y_{j-q+1}\ldots y_j) & \geq \sum_{m=2}^K \sD^p(u^*_m,v^*_m) + \beta^p(u^*_1v^*_2\ldots v^*_K)\\
& \geq \sum_{m=3}^K \sD^p(u^*_m,v^*_m) + \beta^p(u^*_1u^*_2v^*_3\ldots v^*_K)\\
& \geq \beta^p(u^*_1\ldots u^*_K)\\
& = \beta^p(u).
\end{align*}

Therefore, 
\begin{align*}
\sD(\bar{x}_i,\bar{y}_j) + \sD(\bar{y}_j,\bar{z}_k) &=
\left( \sD^p(\bar{x}_r,\bar{y}_{j-q}) + \sum_{m=1}^K \sD^p(u^*_m,v^*_m) \right)^{1/p} \\
& \phantom{\geq} + \Big(\sD^p(\bar{y}_{j-t},\bar{z}_{k})+ \beta^p(y_{j-t+1}\ldots y_j)\Big)^{1/p}\\
& \geq \Bigg( \sD^p(\bar{x}_r,\bar{y}_{j-q}) + \sum_{m=1}^K \sD^p(u^*_m,v^*_m) + \sD^p(\bar{y}_{j-t},\bar{z}_{k})\\
& \phantom{\geq\quad} + \beta^p(y_{j-t+1}\ldots y_j)\Bigg)^{1/p}\\
& \geq   \Bigg( \sD^p(\bar{x}_r,\bar{y}_{j-t}) + \sum_{m=1}^K \sD^p(u^*_m,v^*_m) + \sD^p(\bar{y}_{j-t},\bar{z}_{k})\\
& \phantom{\geq\quad} + \beta^p(y_{j-q+1}\ldots y_j)\Bigg)^{1/p}\\
& \geq \Big(\sD^p(\bar{x}_r,\bar{z}_k) + \beta^p(u)\Big)^{1/p}\\
& \geq \sD(\bar{x}_i,\bar{z}_k),
\end{align*}
by the induction hypothesis.

\textbf{Case 5:} The remaining case is $\sD(\bar{x}_i,\bar{y}_j)= (\sD^p(\bar{x}_{i},\bar{y}_{j-t})+ \alpha^p(y_{j-t+1}\ldots y_j))^{1/p}$ for some $1\leq t< j$ and $\sD(\bar{y}_j,\bar{z}_k)= (\sD^p(\bar{y}_{j-1},\bar{z}_{k-1}) +d^p(y_j,z_k))^{1/p}$. The proof for this case exactly mirrors the proof for the previous case, now depending on the assumptions (W2), (W4), (W6) and (W8).
\end{proof}

\begin{rem}
In general the assumptions (W1) -- (W8) are sufficient for $\sD$ to be a quasi-metric but not necessary, except in the case of $p=1$. For example, let $\Sigma=\{a,b\}$, $d(a,b)=d(b,a)=3$, $\alpha=\beta$, $\alpha(a)=7$, $\alpha(b)=4$, $\alpha(u)=\sum_i \alpha(u_i)$. In this case the assumptions (W1) and (W2) fail but it can be verified that the triangle inequality for $\sD$ does not fail for any $p>1$.
\end{rem}

\begin{rem}
The assumptions (W1)--(W8) can be significantly simplified if the gap penalties take a more restricted form. For example, if the gap penalties are increasing, the assumptions (W5)--(W8) can be removed. This restriction is sensible in applications to biological sequence comparisons because algebraic interactions lowering the effective length of the sequence are not allowed. On the other hand, if $\Sigma^*$ is replaced as the underlying set with a monoid which is not free, or even a group, then gap penalties cannot be increasing in the above sense.

Since composition-length gap penalties are increasing by definition, Theorem \ref{thm:editdistqm} is a direct corollary of Theorem \ref{thm:editdistqm1}. Furthermore, 
composition-length gap penalties with $\phi=0$, such as linear or affine, satisfy all of (W1)--(W8).
\end{rem}

\subsection{Global similarities}\label{app:prflemmaselfsim}

\setcounter{tmpthm}{\value{thm}} 
\setcounter{tmpsec}{\value{section}}

\setcounter{thm}{\value{lemselfsimn}} 
\setcounter{section}{\value{lemselfsims}}
\renewcommand{\thesection}{\arabic{section}}

\begin{prop}
Let $\Sigma$ be a set and let $s:\Sigma\times\Sigma\to\R$ be a a sane scoring function over $\Sigma$. Suppose $\gamma,\delta\in\Gamma(\Sigma)$ and $\sS$ the global similarity on $\Sigma^*$ with respect to $s,\delta$ and $\gamma$. Then, $\sS$ is a sane scoring function and for all $x\in\Sigma^*$,  
\begin{equation}
\sS(x,x) = \sum_{i=1}^{\abs{x}} s(x_i,x_i).
\end{equation}
\qed
\end{prop}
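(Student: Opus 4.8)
The plan is to establish the explicit formula $\sS(x,x)=\sum_{l=1}^{\abs{x}} s(x_l,x_l)$ first, and then to read off the three sanity conditions for $\sS$ as near-immediate consequences; the single engine driving everything is induction on $i+j$ against the defining recursion of $\sS$, using that $\gamma,\delta$ are positive and that $s$ is sane. Throughout I abbreviate $P_i=\sum_{l=1}^i s(x_l,x_l)$; since $s(a,a)>0$ for every $a$, the sequence $P_0=0<P_1<P_2<\cdots$ is strictly increasing and non-negative, a fact I will use repeatedly.

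For the formula I split into two inequalities. The lower bound $\sS(\bar{x}_i,\bar{x}_i)\geq P_i$ follows by induction on $i$: the first (substitution) branch of the recursion gives $\sS(\bar{x}_i,\bar{x}_i)\geq \sS(\bar{x}_{i-1},\bar{x}_{i-1})+s(x_i,x_i)$, and unwinding from $\sS(e,e)=0$ yields the diagonal alignment score $P_i$. The matching upper bound is the special case $y=x$ of a more general claim that I prove simultaneously with sanity, namely that for every $y\in\Sigma^*$ and all $i,j$,
\begin{equation}
\sS(\bar{x}_i,\bar{y}_j)\leq P_i . \tag{$\star$}
\end{equation}
I would prove $(\star)$ by induction on $i+j$. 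The base cases $i=0$ or $j=0$ hold because $\sS(e,\bar{y}_j)=-\gamma(\bar{y}_j)\leq 0=P_0$ and $\sS(\bar{x}_i,e)=-\delta(\bar{x}_i)\leq 0\leq P_i$. For $i,j\geq 1$ I bound each of the three branches separately: the substitution branch is at most $P_{i-1}+s(x_i,y_j)\leq P_{i-1}+s(x_i,x_i)=P_i$ by sanity condition (ii); the insertion branch is at most $\sS(\bar{x}_i,\bar{y}_{j-k})\leq P_i$ since the subtracted gap penalty is non-negative; and the deletion branch is at most $\sS(\bar{x}_{i-k},\bar{y}_j)\leq P_{i-k}\leq P_i$ by monotonicity of $P$. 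Taking the maximum gives $(\star)$, and setting $y=x$, $i=j=\abs{x}$ completes the formula.

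Sanity of $\sS$ then falls out. Condition (i), $\sS(x,x)>0$, holds for every nonempty $x$ because $P_{\abs{x}}>0$ (I would note the harmless exception $\sS(e,e)=0$, i.e.\ sanity is understood on $\Sigma^+$). Condition (ii), $\sS(x,x)\geq \sS(x,y)$, is exactly $(\star)$ at $i=\abs{x}$, $j=\abs{y}$ combined with the formula. For condition (iii), $\sS(x,x)\geq \sS(y,x)$, I would run the mirror induction establishing
\begin{equation}
\sS(\bar{y}_i,\bar{x}_j)\leq P_j
\end{equation}
for all $i,j$; this is structurally identical to $(\star)$ except that the substitution branch now uses $s(y_i,x_j)\leq s(x_j,x_j)$, which is sanity condition (iii) for $s$, and the monotonicity of $P$ is invoked in the insertion rather than the deletion branch. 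Setting $i=\abs{y}$, $j=\abs{x}$ yields (iii).

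The only real subtlety, and the step I expect to be the crux, is the choice of inductive invariant. A direct attempt to prove the closed-form value $P_i$ by induction stalls, because the recursion branches and the off-diagonal entries $\sS(\bar{x}_i,\bar{y}_j)$ with $i\neq j$ have no clean closed form. The resolution is to prove the one-sided inequality $(\star)$ instead: it is weak enough to be stable under all three branches (the gap terms only help, and sanity exactly tames the substitution term), yet strong enough that, combined with the trivial diagonal lower bound, it pins down the self-similarity and simultaneously delivers two of the three sanity conditions. I would finally verify that every recursive call genuinely lowers $i+j$ — the insertion and deletion branches reduce one index by $k\geq 1$, and the substitution branch reduces both — so that the induction on $i+j$ is well-founded.
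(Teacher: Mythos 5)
Your proof is correct, but it takes a genuinely different route from the paper's. The paper first invokes the alignment-decomposition characterization of global similarity (Lemma \ref{lemma:selfsim2}, the similarity analogue of Lemma \ref{lemma:aligndecomp}, stated there without proof with a citation to Smith--Waterman for the $\ell^1$ case), picks an optimal alignment realizing $\sS(x,y)$, and bounds it termwise: insertion pairs contribute $-\gamma\leq 0$ and are discarded, substitution pairs are bounded by $s(x^*_k,x^*_k)$ via sanity, and deletion pairs by $-\delta(x^*_k)\leq 0\leq \sum_j s\bigl((x^*_k)_j,(x^*_k)_j\bigr)$; since every letter of $x$ lies in a substitution or deletion fragment, this yields $\sS(x,y)\leq\sum_i s(x_i,x_i)\leq \sS(x,x)$, and all claims follow at once. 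You establish the very same one-sided bound, but as a dynamic-programming invariant $(\star)$ proved by induction on $i+j$ directly against the defining recursion, with a mirror induction for condition (iii); your lower bound via the diagonal substitution branch coincides with the paper's use of the trivial self-alignment. What your route buys is self-containedness (no appeal to the unproved decomposition lemma) and an explicit check of well-foundedness; what the paper's route buys is brevity once the decomposition is granted, and a transparent accounting of where each letter of $x$ is used up. Your remark that sanity condition (i) must be read on $\Sigma^+$ because $\sS(e,e)=0$ is accurate and matches the paper's tacit treatment, which likewise notes $\sS(e,y)\leq 0$ and $\sS(y,e)\leq 0$ so that conditions (ii) and (iii) survive at the identity.
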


\setcounter{thm}{\value{tmpthm}} 
\setcounter{section}{\value{tmpsec}}
\renewcommand{\thesection}{\Alph{section}}

We will make use of the following lemma, equivalent to Lemma \ref{lemma:aligndecomp} for distances. It was likewise proved by Smith and Waterman \cite{SW81a} for the $\ell^1$ case and less general gap penalties.

\begin{lem}\label{lemma:selfsim2}
Let $\Sigma$ be a set, $s:\Sigma\times\Sigma\to\R$, and $\gamma,\delta:\Sigma^+\to\R_+$. Suppose $\sS$ is a global similarity on $\Sigma^*$ with respect to $d$, $\gamma$ and $\delta$. Then, for all $x,y\in\Sigma^*$
\begin{equation}
\sS(x,y) = \max \biggl\{ {\textstyle \sum_{k=1}^K \sS(x^*_k,y^*_k)} \  \big\vert\ \algn{(x^*_k,y^*_k)}_{k=1}^K\in\mathcal{A}(x,y) \biggr\}.
\end{equation}
\end{lem}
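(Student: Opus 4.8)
The plan is to mirror the proof of the distance analogue, Lemma~\ref{lemma:aligndecomp}, replacing the minimization by a maximization, and to argue by induction following the Needleman--Wunsch recursion that defines $\sS$. Write $T(x,y)$ for the right-hand side, the maximum of $\sum_{k=1}^K \sS(x^*_k,y^*_k)$ over all alignments $\algn{(x^*_k,y^*_k)}_{k=1}^K\in\mathcal{A}(x,y)$, and prove $\sS(\bar x_i,\bar y_j)=T(\bar x_i,\bar y_j)$ for all $0\le i\le\abs{x}$ and $0\le j\le\abs{y}$ by induction on the pair $(i,j)$ ordered lexicographically (equivalently, on $i+j$). I would split the argument into the two inequalities $\sS\ge T$ and $\sS\le T$.

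For the base cases $i=0$ or $j=0$, the only available pieces are indels, so every alignment of $(e,\bar y_j)$ is a partition of $\bar y_j$ into nonempty insertions and contributes $-\sum_k\gamma(y^*_k)$; by the subadditivity $\gamma(uv)\le\gamma(u)+\gamma(v)$ built into $\Gamma(\Sigma)$, the maximum is attained by the single block, giving $T(e,\bar y_j)=-\gamma(\bar y_j)=\sS(e,\bar y_j)$, and symmetrically for $\delta$ and $(\bar x_i,e)$. For the inductive step with $i,j\ge1$, the inequality $\sS\ge T$ comes from examining the last block $(x^*_K,y^*_K)$ of an arbitrary alignment: if it is a deletion $x_{i-k+1}\ldots x_i$, its weight is exactly $-\delta(x^*_K)$ and removing it leaves an alignment of $(\bar x_{i-k},\bar y_j)$, so the induction hypothesis together with the deletion branch of the recursion yields the bound; the insertion case is symmetric, and the substitution case appeals to the match branch. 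The reverse inequality $\sS\le T$ is proved by backtracking: whichever of the three branches realizes the maximum in the recursion for $\sS(\bar x_i,\bar y_j)$, I append the corresponding block to an optimal alignment of the appropriate shorter prefix pair (which exists by the induction hypothesis), producing a witness alignment whose weight is at least $\sS(\bar x_i,\bar y_j)$.

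The step I expect to require the most care is reconciling the weight assigned to a substitution block with the recursion. A block of form (a) carries weight $\sS(x_i,y_j)$, the global similarity of the two one-letter words, whereas the match branch of the recursion uses the atomic score $s(x_i,y_j)$; unwinding the definition gives $\sS(x_i,y_j)=\max\{s(x_i,y_j),\,-\delta(x_i)-\gamma(y_j)\}$, so the two can differ precisely when deleting $x_i$ and then inserting $y_j$ beats substituting. I would handle this by noting that in the latter case the substitution block may be replaced by the two indel blocks $(x_i,e)$ and $(e,y_j)$ without changing the total weight, reducing that subcase to the deletion and insertion branches already covered; in the backtracking direction the same identity shows the appended block never overshoots $\sS(\bar x_i,\bar y_j)$, and combining $\sS\ge T$ with the existence of a witness alignment forces equality. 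Finally, I would observe that, exactly as for Lemma~\ref{lemma:aligndecomp}, no hypothesis beyond $\gamma,\delta$ being positive and subadditive (that is, lying in $\Gamma(\Sigma)$) is used, so the identity holds in the stated generality.
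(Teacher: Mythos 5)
Your proof is correct, but there is nothing in the paper to compare it against: the paper states this lemma without proof, remarking only that it ``was likewise proved by Smith and Waterman \cite{SW81a} for the $\ell^1$ case and less general gap penalties,'' just as Lemma \ref{lemma:aligndecomp} is stated without proof. Your induction along the Needleman--Wunsch recursion is the standard way to supply the omitted argument, and you handle the one genuine subtlety correctly: a type-(a) block is weighted by $\sS(x_i,y_j)=\max\{s(x_i,y_j),\,-\delta(x_i)-\gamma(y_j)\}$ rather than by the atomic score, and your replacement of such a block by the equal-weight pair $(x_i,e),(e,y_j)$ reduces that case to the indel branches (alternatively, one can chain through the cell $(i,j-1)$: $\sS(\bar{x}_{i-1},\bar{y}_{j-1})-\delta(x_i)-\gamma(y_j)\leq \sS(\bar{x}_i,\bar{y}_{j-1})-\gamma(y_j)\leq \sS(\bar{x}_i,\bar{y}_j)$). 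Your closing remark about hypotheses is in fact sharper than the printed statement: the lemma as written asks only that $\gamma,\delta:\Sigma^+\to\R_+$ (and says ``with respect to $d$'' where it means $s$), but subadditivity is genuinely needed, since if $\gamma(ab)>\gamma(a)+\gamma(b)$ the two-block alignment $(e,a)(e,b)$ has weight $-\gamma(a)-\gamma(b)>-\gamma(ab)=\sS(e,ab)$ and the identity already fails on the boundary row; this is harmless only because the paper's definition of global similarity presupposes $\gamma,\delta\in\Gamma(\Sigma)$, which is exactly what your base case uses. Two pedantic points: lexicographic order on $(i,j)$ and order by $i+j$ are not the same order, though either supports your induction because every appeal is to a cell with strictly smaller $i+j$; and you should admit the empty sequence as the unique alignment of $(e,e)$, so that the maximum there is $0=\sS(e,e)$.
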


\begin{proof}[Proof of Proposition \ref{lemma:selfsim}]
Let $x,y\in\Sigma^*$. If $x=e$, by definition $\sS(x,x)=0$, coinciding with a sum over the empty set. Since $\gamma$ and $\delta$ are positive, we have $-\gamma(y)=S(e,y)\leq 0$ and $-\delta(y)=S(y,e)\leq 0$.

Now suppose $x\in\Sigma^+$ and let $\algn{(x^*_k,y^*_k)}_{k=1}^K\in\mathcal{A}(x,y)$ such that $\sS(x,y)=\sum_{k=1}^K \sS(x^*_k,y^*_k)$. Let $C=\{k:x^*_k\in\Sigma\ \text{and}\ y^*_k\in\Sigma \}$ and $D=\{k:x^*_k\in\Sigma^+\ \text{and}\ y^*_k=e \}$. Then, 
\begin{align*}
\sS(x,y) & \leq \sum_{k\in C} \sS(x^*_k,y^*_k) + \sum_{k\in D} \sS(x^*_k,y^*_k)\\
& \leq \sum_{k\in C} s(x^*_k,y^*_k) - \sum_{k\in D} \delta(x^*_k)\\
& \leq \sum_{k\in C} s(x^*_k,x^*_k) + \sum_{k\in D}\sum_j s((x^*_k)_j,(x^*_k)_j)\\
& = \sum_{i=1}^{\abs{x}} s(x_i,x_i),
\end{align*}
since $s$ is sane and the whole of $x$ is accounted for in fragments indexed by $C$ and $D$. Therefore,
\begin{equation}
\sS(x,y) \leq \sum_{i=1}^{\abs{x}} s(x_i,x_i) \leq \sS(x,x),
\end{equation}
implying $\sS(x,x)=\sum_{i=1}^{\abs{x}} s(x_i,x_i)>0$ and $\sS(x,x)\geq\sS(x,y)$. In the same way it can be shown that $\sS(x,x)\geq\sS(y,x)$ and hence that $\sS$ is sane.
\end{proof}

\setcounter{tmpthm}{\value{thm}} 
\setcounter{tmpsec}{\value{section}}
\setcounter{thm}{\value{corglobdistn}} 
\setcounter{section}{\value{corglobdists}}
\renewcommand{\thesection}{\arabic{section}}
\begin{cor}
Let $\Sigma$ be a set and let $1\leq p<\infty$. Suppose $s$ is a sane scoring function on $\Sigma$, $d=\AQ{p}{s}$ is a quasi-metric on $\Sigma$ and $\gamma,\delta\in\gmcl$ such that
\begin{equation}
\gamma(b)-\gamma(a) \leq d^p(a,b)
\end{equation}
and
\begin{equation}
s(a,a)+\delta(a)-s(b,b)-\delta(b) \leq d^p(a,b).
\end{equation}
Let $\sS$ be the global similarity with respect to $s,\gamma$ and $\delta$ and let $\alpha(x)=\gamma(x)^{1/p}$ and $\beta(x)=\bigl(\sS(x,x)+\delta(x)\bigr)^{1/p}$ for all $x\in\Sigma^+$. Then, the $\ell^p$ edit distance $\sD=\DQ{p}{d}{\alpha}{\beta}$ is given for all $x,y\in\Sigma^*$ by the formula 
\begin{equation}\label{eqn:qm_new}
\sD(x,y) = \Bigl(\sS(x,x) - \sS(x,y)\Bigr)^{1/p}.
\end{equation}
\end{cor}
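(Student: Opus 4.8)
The heart of the statement is the closed-form identity $\sD^p(x,y)=\sS(x,x)-\sS(x,y)$; the assertion that $\sD$ is a genuine quasi-metric (so that the notation $\DQ{p}{d}{\alpha}{\beta}$ is warranted) will follow separately from Theorem \ref{thm:editdistqm}. My plan is to prove the identity by showing that the defining recursions of $\sS$ and of $\sD$ are carried into one another by the substitution supplied by Proposition \ref{lemma:selfsim}. First I would record, via Proposition \ref{lemma:selfsim}, that $\sS(w,w)=\sum_{l=1}^{\abs{w}}s(w_l,w_l)$ for every $w\in\Sigma^*$. Writing $\sigma_i:=\sS(\bar{x}_i,\bar{x}_i)$ for the prefix self-similarities, this quantity is additive, so the telescoping identity $\sigma_i-\sigma_{i-k}=\sum_{l=i-k+1}^{i}s(x_l,x_l)=\sS(x_{i-k+1}\ldots x_i,\,x_{i-k+1}\ldots x_i)$ holds. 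This identity is precisely what converts a deletion penalty $\delta$ into the edit penalty $\delta'(w):=\sS(w,w)+\delta(w)=\beta^p(w)$, and explains the asymmetry between $\alpha=\gamma^{1/p}$ and $\beta=(\sS(x,x)+\delta(x))^{1/p}$.

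Next I would set $\mD_{i,j}:=\sigma_i-\sS(\bar{x}_i,\bar{y}_j)$ and verify, by induction on $(i,j)$, that this array satisfies the Waterman--Smith--Beyer recursion of Remark \ref{rem:WSBalgo} with $d^p(a,b)=s(a,a)-s(a,b)$ (this is $d=\AQ{p}{s}$), $\alpha^p=\gamma$, and $\beta^p=\delta'$. The base cases reduce to the boundary rows of $\sS$: indeed $\mD_{0,0}=0$, $\mD_{0,j}=-\sS(e,\bar{y}_j)=\gamma(\bar{y}_j)=\alpha^p(\bar{y}_j)$, and $\mD_{i,0}=\sigma_i-\sS(\bar{x}_i,e)=\sigma_i+\delta(\bar{x}_i)=\delta'(\bar{x}_i)=\beta^p(\bar{x}_i)$. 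For the inductive step I would subtract $\sigma_i$ from the $\max$-recursion for $\sS(\bar{x}_i,\bar{y}_j)$, turning the maximum into a minimum over the same three branches. The substitution branch becomes $\mD_{i-1,j-1}+\bigl(s(x_i,x_i)-s(x_i,y_j)\bigr)=\mD_{i-1,j-1}+d^p(x_i,y_j)$ after using $\sigma_i=\sigma_{i-1}+s(x_i,x_i)$; the insertion branch becomes $\min_k\{\mD_{i,j-k}+\gamma(y_{j-k+1}\ldots y_j)\}$ directly, since the first index is unchanged; and the deletion branch, after writing $\sS(\bar{x}_{i-k},\bar{y}_j)=\sigma_{i-k}-\mD_{i-k,j}$ and invoking the telescoping identity, becomes $\min_k\{\mD_{i-k,j}+\sS(x_{i-k+1}\ldots x_i,\,x_{i-k+1}\ldots x_i)+\delta(x_{i-k+1}\ldots x_i)\}=\min_k\{\mD_{i-k,j}+\beta^p(x_{i-k+1}\ldots x_i)\}$. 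These are exactly the three terms of the WSB recursion, so by its uniqueness $\mD$ is the WSB matrix of $\sD$, and setting $(i,j)=(m,n)$ yields $\sD^p(x,y)=\mD_{m,n}=\sS(x,x)-\sS(x,y)$.

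It remains to justify that $\sD$ really is a quasi-metric, and for this I would check the hypotheses of Theorem \ref{thm:editdistqm} for $d$ and the edit penalties $\gamma$ and $\delta'$. Both lie in $\gmcl$: the extra term $\sS(w,w)=\sum_l s(w_l,w_l)$ is purely composition-dependent, it is subadditive (in fact additive), and it is increasing because saneness gives $s(a,a)>0$, so $\delta'=\sS(\cdot,\cdot)+\delta$ inherits membership in $\gmcl$ from $\delta$. The two growth inequalities demanded by Theorem \ref{thm:editdistqm}, namely $\gamma(b)-\gamma(a)\leq d^p(a,b)$ and $\delta'(a)-\delta'(b)\leq d^p(a,b)$, are precisely the assumptions \eqref{eq:gapcond1} and \eqref{eq:gapcond2} once $\delta'(a)=s(a,a)+\delta(a)$ is substituted into the second. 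Hence $\sD=\DQ{p}{d}{\alpha}{\beta}$ is a (separating) quasi-metric.

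I expect the deletion branch of the inductive step to be the only delicate point: one must track that the entire deleted fragment $x_{i-k+1}\ldots x_i$ contributes its self-similarity exactly once when $\sS(\bar{x}_{i-k},\bar{y}_j)$ is promoted back to the level $\sigma_i$, which is guaranteed by the additivity (telescoping) of $\sigma_i$. Everything else is the routine rewriting of a $\max$-recursion as a $\min$-recursion, together with the bookkeeping that the insertion side keeps $\gamma$ unchanged while the deletion side absorbs the lost self-similarity into $\beta$.
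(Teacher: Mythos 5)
Your proposal is correct and takes essentially the same route as the paper's own proof: the paper likewise gets quasi-metricity from Proposition \ref{lemma:selfsim} (which puts $\beta^p$ in $\gmcl$ and supplies the growth inequality) together with Theorem \ref{thm:editdistqm}, and establishes the formula by the same induction on prefixes, converting the $\max$-recursion for $\sS$ into the $\min$-recursion for $\sD^p$ via the telescoping additivity of self-similarity. Your only cosmetic departure is packaging that induction as a verification that the array $\mD_{i,j}=\sS(\bar{x}_i,\bar{x}_i)-\sS(\bar{x}_i,\bar{y}_j)$ satisfies the WSB recursion of Remark \ref{rem:WSBalgo}, rather than substituting the inductive hypothesis directly into the recursion defining $\sD$ as the paper does.
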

\begin{proof}
By construction, $\alpha^p\in\gmcl$ and by Proposition \ref{lemma:selfsim}, $\beta^p\in\gmcl$ as well. By our assumptions on $d$, $\gamma$ and $\delta$ and by Theorem \ref{thm:editdistqm}, it follows that $\sD$, the $\ell^p$ edit distance extending $d$, $\alpha$ and $\beta$, is indeed the separating quasi-metric $\DQ{p}{d}{\alpha}{\beta}$ on $\Sigma^*$. We will now show by recursion that this quasi-metric is equivalent to the one given by Equation (\ref{eqn:qm_new}).

Clearly, $\sD(e,e)=\bigl(\sS(e,e) - \sS(e,e)\bigr)^{1/p}=0$. Let $x,y\in\Sigma^+$ and suppose $1\leq i\leq\abs{x}$ and $1\leq j\leq\abs{y}$. We have, \[\sD(e,\bar{y}_j)=\alpha(\bar{y}_j)=\gamma(\bar{y}_j)^{1/p}=\bigl(\sS(e,e) - \sS(e,\bar{y}_j)\bigr)^{1/p},\] and
\[\sD(\bar{x}_i,e)=\beta(\bar{x}_i)=\bigl(\delta(\bar{x}_i) + \sS(\bar{x}_i,\bar{x}_i) \bigr)^{1/p} =\bigl(\sS(\bar{x}_i,\bar{x}_i) - \sS(\bar{x}_i,e)\bigr)^{1/p}.\]
Using recursion and Proposition \ref{lemma:selfsim},

\begin{align*}
\sD(\bar{x}_i,\bar{y}_j) &=\bigg(  \min\Big\lbrace \sD^p(\bar{x}_{i-1},\bar{y}_{j-1})+d^p(x_i,y_j), \\
&\qquad\qquad\quad \min_{1\leq k\leq j} \left\{\sD^p(\bar{x}_{i},\bar{y}_{j-k})+\alpha^p(y_{j-k+1}\ldots y_j)\right\},\\ 
&\qquad\qquad\quad \min_{1\leq k\leq i}\left\{\sD^p(\bar{x}_{i-k},\bar{y}_{j})+ \beta^p(x_{i-k+1}\ldots x_i)\right\} \Big\rbrace \bigg) ^{1/p}\\
& = \bigg(  \min\Big\lbrace \sS(\bar{x}_{i-1},\bar{x}_{i-1})- \sS(\bar{x}_{i-1},\bar{y}_{j-1}) +s(x_i,x_i) - s(x_i,y_j), \\
&\qquad\qquad\quad \min_{1\leq k\leq j}\left\{ \sS(\bar{x}_{i},\bar{x}_{i}) - \sS(\bar{x}_{i},\bar{y}_{j-k}) + \gamma(y_{j-k+1}\ldots y_j)\right\},\\ 
&\qquad\qquad\quad \min_{1\leq k\leq i} \big\{ \sS(\bar{x}_{i-k},\bar{x}_{i-k}) -\sS(\bar{x}_{i-k},\bar{y}_{j}) + \delta(x_{i-k+1}\ldots x_i) \\
&\qquad\qquad\quad \qquad\quad +\sS(x_{i-k+1}\ldots x_i,x_{i-k+1}\ldots x_i)\big\} \Big\rbrace \bigg) ^{1/p}\\
& =\bigg( \sS(\bar{x}_{i},\bar{x}_{i}) -\max\Big\lbrace \sS(\bar{x}_{i-1},\bar{y}_{j-1})+s(x_i,y_j), \\
&\qquad\qquad\quad \qquad\qquad\;  \max_{1\leq k\leq j}\left\{\sS(\bar{x}_{i},\bar{y}_{j-k})-\gamma(y_{j-k+1}\ldots y_j)\right\},\\ 
&\qquad\qquad\quad \qquad\qquad\; \max_{1\leq k\leq i}\left\{\sS(\bar{x}_{i-k},\bar{y}_{j})-\delta(x_{i-k+1}\ldots x_i)\right\} \Big\rbrace\bigg) ^{1/p} \\
& = \Bigl( \sS(\bar{x}_{i},\bar{x}_{i}) - \sS(\bar{x}_{i},\bar{y}_{j}) \Bigr) ^{1/p},
\end{align*}
as required.
\end{proof}
\setcounter{thm}{\value{tmpthm}} 
\setcounter{section}{\value{tmpsec}}
\renewcommand{\thesection}{\Alph{section}}

%
%

\def\cdprime{$''$}

\end{document}